\documentclass[a4paper,11pt]{article} 
\usepackage[a4paper,hmargin=2.8cm,vmargin=3cm]{geometry}
\usepackage[english]{babel}
\usepackage{amsmath,amsthm,amsfonts,amssymb,mathrsfs,bbm}
\usepackage{mathtools}
\usepackage{authblk}
\usepackage{lineno}
\usepackage{slashed}
\usepackage{comment}
\usepackage[dvipsnames]{xcolor}

\definecolor{BeauBlue}{rgb}{0, 0.2, .85}
\definecolor{BeauOrange}{rgb}{.8, .1, 0}
\usepackage{hyperref}
\hypersetup{
	colorlinks = true,
	linkcolor = BeauBlue,
	urlcolor = BeauBlue,
	citecolor = BeauOrange,
}

\newcommand{\ee}{\mathrm{e}}
\newcommand{\ii}{\mathrm{i}}  
\newcommand{\dd}{\mathrm{d}}
\newcommand{\veps}{\varepsilon}

\newcommand{\llangle}{\langle\!\langle}
\newcommand{\rrangle}{\rangle\!\rangle}

\usepackage{stackengine}
\usepackage{dsfont}
\usepackage{tikz}
\usetikzlibrary{shapes.geometric}
\usetikzlibrary { decorations.pathmorphing, decorations.pathreplacing, decorations.shapes,positioning}
\usepackage{float}

\renewenvironment{thebibliography}[1]{
  \begin{oldthebibliography}{#1}
    \setlength{\itemsep}{0em}
    \setlength{\parskip}{0,5em}
}
{
  \end{oldthebibliography}
}

\numberwithin{equation}{section}

\newcommand{\ind}[1]{\mathds{1}_{#1}}
\newcommand{\cw}{\mathcal{W}}

\newcommand{\cv}{\nu}
\newcommand{\cpsi}{\bar{\psi}}

\newtheorem{theorem}{Theorem}[section]
\newtheorem{corollary}[theorem]{Corollary}
\newtheorem{lemma}[theorem]{Lemma}
\newtheorem{proposition}[theorem]{Proposition}
\newtheorem{remark}[theorem]{Remark}
\newtheorem{definition}[theorem]{Definition}

\title{Exponential decay of correlations at high temperature in $H^{2|2n}$ nonlinear sigma models}
\author[1]{Margherita Disertori}
\author[1]{Javier Dur\'an Fern\'andez}
\author[2]{Luca Fresta}
\affil[1]{Hausdorff Center for Mathematics,
University of Bonn, Endenicher Allee 60
53115 Bonn, Germany \\Email: \href{mailto: mdiserto@uni-bonn.de}
{\texttt{mdiserto@uni-bonn.de}},
\href{mailto: jduranfe@uni-bonn.de}
{\texttt{jduranfe@uni-bonn.de}
}}
\affil[2]{Mathematics Department, University Roma Tre,
Largo San Leonardo Murialdo 1, 00146 Roma, Italy
\\Email: \href{mailto:luca.fresta@uniroma3.it}
{\texttt{luca.fresta@uniroma3.it}} }

\date{\today}

\begin{document}

    \maketitle

\begin{abstract}
We consider a family of nonlinear sigma models on $\mathbb{Z}^{d}$ whose target space is the hyperbolic super manifold $H^{2|2n}$, $n >1$, introduced by Crawford as an extension of Zirnbauer's $H^{2|2}$ model for disordered systems.
We prove exponential decay of the two-point correlation function in the high-temperature regime $\beta \leq C n^{-1}$, with $C>0$ a universal constant, for any $n > 1$ and any dimension $d\geq 1$, with mass $\log \beta^{-1}$. We also consider models with long-range interaction and prove fast decay in the same high-temperature regime.
The proof is based on the reduction to a marginal fermionic theory and combines a high-temperature cluster expansion, exact combinatorics and bounds derived via Grassmann norms.
\end{abstract}
 
\tableofcontents
    
\section{Introduction}

Supersymmetric nonlinear sigma models have emerged as a powerful tool to study disordered systems, providing a framework in which spectral and transport properties of random Schr\"odinger operators and random band matrices can be analysed.

Early pioneering work by Wegner \cite{Wegner1979,Wegner1980} demonstrated that the correlation functions of lattice models with hyperbolic symmetry encode the spectral properties of the Anderson metal-insulator transition. 
These ideas were subsequently put on more systematic footing with the supersymmetric method developed by Efetov \cite{Efetov1983,Efetov1997}, which provides a direct mapping from problems involving random operators to statistical mechanical models formulated in terms of supermatrices, whose entries consist of both commuting (bosonic) and anticommuting (Grassmann) variables. In the particular case of observables involving the squared modulus of resolvent matrix elements, the resulting supersymmetric models exhibit an additional hyperbolic symmetry.

The supermatrix models obtained in this way are still highly nontrivial, and it is customary in the physics literature to consider the so-called sigma-model approximation, which restricts the target space to a suitable nonlinear manifold. From a mathematical perspective, a canonical example of this framework is Zirnbauer’s $H^{2|2}$ model \cite{Zirnbauer1991,Zirnbauer1992}, which captures essential features of random band matrices and provides a minimal setting for the rigorous analysis of the Anderson metal-insulator transition \cite{Spencer2012}.

The $H^{2|2}$ model has been extensively studied and is by now well understood.
In dimensions $d \geq 3$ it has been shown to exhibit quasi-diffusive behaviour at large inverse temperature $\beta$ \cite{DisertoriSpencerZirnbauer2010}, see also \cite{SpencerZirnbauer2004} for spontaneous symmetry breaking in a related nonlinear sigma model. Combined with the proof of localisation at small $\beta$ in any dimension \cite{DisertoriSpencer2010}, these results establish the existence of an Anderson-type transition for the model. 
The mathematical understanding of the $H^{2|2}$ model goes, however, beyond this. In $d=1$, localisation has in fact been proven for all values of $\beta$ \cite{DisertoriSpencer2010}, see also \cite{Zirnbauer1991}, while in $d=2$, where neither extended states nor conventional symmetry breaking are expected, generalised Mermin--Wagner theorems have been established \cite{Bauerschmidt2019}.

Following general insights from the work of Wegner \cite{Wegner1979,Wegner1980}, as well as more recent advances in the study of models with hyperbolic symmetries \cite{Seiler2005,Bauerschmidt2019}, it is  a sensible question to broadly understand nonlinear sigma models with a general target space possessing hyperbolic symmetry. 

A natural candidate in this direction is provided by the hyperbolic superspace $H^{2|2n}$, introduced by Crawford \cite{Crawford2021}. Compared to $H^{2|2}$, the superspace $H^{2|2n}$ contains $n-1$ additional pairs of Grassmann coordinates, and can be shown to be closely related to sigma models with spherical superspace \cite{Caracciolo2017}, which similarly involve a surplus of Grassmann variables. The interest in models with additional Grassmann variables is rooted in the long-standing intuitive idea that a pair of Grassmann variables effectively contributes as ``minus two'' bosonic degrees of freedom \cite{Parisi1979,McKane1980}. In this sense, nonlinear sigma models with target space $H^{2|2n}$ and $n \ge 2$ are expected to realize an analytic continuation of $O(N)$ models to negative values of the parameter, specifically $N = 1 - n$, see \cite{Caracciolo2004,Caracciolo2007,Caracciolo2017}.

Finally, as emphasized in \cite{Crawford2021}, the family of target spaces $H^{2|2n}$, much like the classical $O(N)$ models, introduces an additional parameter that allows one to consider a large-$n$ limit. This feature provides a natural setting in which to explore questions of universality as well as harder problems like that of spontaneous symmetry breaking in two dimensions \cite{Polyakov1975}.

\subsection{The model and its probabilistic marginals}

Let us now describe the $H^{2|2n}$ nonlinear sigma model in more detail.
The superspace $H^{2|2n}$ is parametrised by two real variables and $2n$ independent Grassmann variables\footnote{See \cite{Wegner2016} for an introduction to Grassmann variables and supercalculus.}, and consists of $2n+3$-component supervectors of the form
\begin{equation}\label{eq:def-u}
u := (z,x,y,\xi,\eta, \psi)  \;, \qquad
 \psi :=(\bar{\psi}_{1},\psi_{1},\dots,\bar{\psi}_{n-1}, \psi_{n-1}) \;,
\end{equation}
where $x,y \in \mathbb{R}$, $\xi$, $\eta$ and $\big\{\bar{\psi}_{\alpha},\psi_{\alpha}\big\}_{\alpha=1,\dots,n-1}$ are independent Grassmann variables. Finally, we defined
\begin{equation}\label{eq:z}
z= \sqrt{1+x^{2} + y^{2} + 2 \xi \eta + \psi \cdot  \psi} \;, 
\end{equation}
having introduced the symmetric product for Grassmann variables
\begin{equation}\label{eq:symmetric-product}
\psi \cdot \psi' :=\sum_{\alpha = 1}^{n-1}\big(\bar{\psi}_{\alpha} \psi_{\alpha}' + \bar{\psi}_{\alpha}' \psi_{\alpha} \big) \;.
\end{equation}
The supervector $u$ satisfies the hyperbolic sigma model constraint $\langle u ,u \rangle = -1$ with symmetric product
\begin{equation*}
\langle u , u' \rangle := -z z' + x x'+y y'+ \xi \eta'+ \xi '\eta +  \psi \cdot \psi' \;.
\end{equation*}
This symmetric product is indefinite with signature $(-,+,\dots,+)$, so that it is natural to think of the loci $\langle u ,u \rangle = -1$ as generalisation of hyperbolic planes with Grassmann variables.

We then consider a finite box $\Lambda \subset \mathbb{Z}^{d}$ and let $\big(H^{2|2n}\big)^{\Lambda}$ be the configuration space. Elements in  $\big(H^{2|2n}\big)^{\Lambda}$ are of the form $u = (u_{j})_{j\in \Lambda}$, where each $u_{j}$ is as in \eqref{eq:def-u}
\begin{equation*}
u_{j}=(z_{j},x_{j},y_{j}, \xi_{j},\eta_{j},  \psi_{j}) \;, \qquad
\psi_{j}:=(\bar{\psi}_{j,1}, \psi_{j,1}, \dots ,\bar{\psi}_{j,n-1}, \psi_{j,n-1}) \;,
\end{equation*}
where $\xi_{j}$, $\eta_{j}$ and $\{\bar{\psi}_{j,\alpha}, \psi_{j,\alpha} \}_{j \in \Lambda, \alpha =1,\dots,n-1}$  are independent Grassmann variables, for any $j \in \Lambda$.
On the configuration space we introduce the superintegration form
\begin{equation*}
\dd \mu_{\Lambda} := \prod_{j \in \Lambda} \frac{\dd x_{j} \dd y_{j}}{2\pi} \partial_{\xi_{j}} \partial_{\eta_{j}} \Big[ \prod_{\alpha=1}^{n-1}\partial_{\bar{\psi}_{j,\alpha}}\partial_{\psi_{j,\alpha}}\Big]  \frac{1}{z_{j}} \;.
\end{equation*}
Here derivation with respect to Grassmann variables $\partial_{\xi_{j}} \partial_{\eta_{j}} \Big[ \prod_{\alpha =1}^{n-1}\partial_{\bar{\psi}_{j,\alpha}}\partial_{\psi_{j,\alpha}}\Big]$ is simply the standard Berezin integration form, that is, projection over the top coefficient in the Grassmann algebra \cite{Berezin1987}. 

The $H^{2|2n}$ nonlinear sigma model measure on $\Lambda$ at inverse temperature $\beta>0$ and external field (pinning) $\veps > 0$ is defined via the Gibbsian prescription
\begin{equation}\label{eq:Gibbs-measure}
\langle F\rangle^{\Lambda}_{\beta,\veps,n} := \frac{1}{Z_{\beta,\veps,n}^{\Lambda}} \int
 \dd \mu_{\Lambda} \ee^{-H^{\Lambda}_{\beta,\veps}} F \;,
\end{equation}
for any observable $F=F(u)$. Here $Z^{\Lambda}_{\beta,\veps,n}$ is the partition function, that is, the normalisation constant such that $\langle \cdot \rangle^{\Lambda}_{\beta,\veps,n} = 1$, while the Hamiltonian is given by
\begin{equation}\label{eq:def-Hamilt}
\begin{split}
H^{\Lambda}_{\beta,\veps}& := \frac{\beta}{2} \sum_{\{i,j \} \in E_{\Lambda}} J_{ij} \langle u_{i} -u_{j}, u_{i} -u_{j} \rangle + \veps \sum_{j \in \Lambda} (z_{j}-1)
\end{split}
\end{equation}
for some symmetric coupling constants $J_{ij}= J_{ji} \geq 0$, $J_{ii} =0$, where $E_{\Lambda}$ denotes the set of edges in the complete graph on $\Lambda$. 
Note that the diagonal term $i=j$ would anyway not appear because it is a gradient term. The canonical choice is nearest-neighbour interaction, $J_{ij} = \ind{|i-j|=1}$, but the model can be defined for more general, sufficiently-fast-decaying, coefficients $J_{ij}$. 
In our notation, the dependence on $n$ is explicit only through the measure and the normalisation constant, whereas the dependence on $J$ is kept implicit everywhere.

The definition of the $H^{2|2n}$ model in \eqref{eq:Gibbs-measure} is meaningful only if the partition function $Z^{\Lambda}_{\beta,\veps,n}$ is finite and nonzero. For $n=1$, this follows directly by supersymmetry: in fact $Z^{\Lambda}_{\beta,\veps,1}=1$ for any $\beta,\veps >0$, see \cite{DisertoriSpencerZirnbauer2010}. For $n>1$, finiteness and nondegeneracy of the partition function are far from obvious, at least when working in the supervector $u$ coordinates.
 
Nevertheless, after switching to horospherical coordinates \cite{DisertoriSpencerZirnbauer2010} and performing Gaussian integration of the Grassmann variables one finds the following probabilistic representation
\begin{equation}\label{eq:partition-function-VRJP}
Z^{\Lambda}_{\beta,\veps,n} = \int_{\mathbb{R}^{\Lambda}} \prod_{j \in \Lambda}\frac{\dd t_{j}}{(2\pi)^{1/2}} \, \ee^{-F^{\Lambda}_{\beta}(\nabla t)} \ee^{-M_{\veps}^{\Lambda}(t)}\Big[ \det D_{\veps}^{\Lambda}(t)\Big]^{n-\frac{1}{2}} \;.
\end{equation}
Here, we defined
\begin{equation*}
F^{\Lambda}_{\beta}(\nabla t) := \beta \sum_{\{i,j\} \in E_\Lambda} J_{ij} \big( \cosh (t_{i} - t_{j}) - 1\big) \;, 
\qquad \quad
M^{\Lambda}_{\veps}(t):= \veps \sum_{j \in \Lambda} \big( \cosh (t_{j}) - 1\big) \;,
\end{equation*}
and introduced the matrix $D_{\veps}^{\Lambda} = D_{\veps}^{\Lambda}(t) \in \mathbb{R}^{\Lambda \times \Lambda}$
\begin{equation}\label{eq:def-D}
\begin{split}
\big(D_{\veps}^{\Lambda} \big)_{i j} & := - \beta J_{i j} \;, \hspace{9.7em} \forall i \neq j \;, \\
\big(D_{\veps}^{\Lambda} \big)_{j j} & := \beta \sum_{i \in \Lambda} J_{ij} \ee^{t_{i} - t_{j}} + \veps \ee^{-t_{j}} \;, \qquad \quad \forall j  \;.
\end{split}
\end{equation}
Note that $D_{\veps}^{\Lambda} =  -\beta \Delta^{J} + \veps \ee^{-t}$, where $\Delta^{J}$ is the graph Laplacian on $(\Lambda,E_\Lambda)$ with edge weights $J_{ij} $, so that $D_{\veps}^{\Lambda} >0 $ as long as $\veps >0$.  
As a consequence, the integrand in \eqref{eq:partition-function-VRJP} is pointwise positive, and hence $Z^{\Lambda}_{\beta,\veps,n}>0$. Finiteness of $Z^{\Lambda}_{\beta,\veps,n} $ follows by the presence of the term $\ee^{-M_{\veps}^{\Lambda}(t)}$, which ensures integrability for $\veps >0$. It is also worth mentioning that \eqref{eq:partition-function-VRJP}, unlike the original $H^{2|2n}$ model, is well-defined for any $n \in \mathbb{R}$, see \cite{Crawford2021}.

While \eqref{eq:partition-function-VRJP} already provides a probabilistic representation of the partition function, the probabilistic content of the model is considerably more interesting and goes beyond its original physical motivation. In the case $n=1$, a remarkable result of Sabot and Tarr\'es \cite{Sabot2015}, building on previous observations in \cite{DisertoriSpencerZirnbauer2010}, shows that the Gibbs measure related to \eqref{eq:partition-function-VRJP} arises as the mixing measure for the vertex-reinforced jump process (VRJP), model introduced and studied in a completely different context \cite{Davis2002}.
The connection between the $H^{2|2}$ model and the VRJP, thus offers an alternative perspective on the latter: in particular, it allows one to translate the results on localisation and quasi-diffusion for the nonlinear sigma model into recurrence and transience for the VRJP
\cite{Sabot2015}, see also \cite{DisertoriSabot2015}.

Remarkably, also the case $n = 2$ admits a direct connection to a probabilistic model introduced and studied in a different context, namely the arboreal gas. The latter can be viewed as a Bernoulli bond-percolation model with parameter $p = \beta/(1+\beta)$, conditioned to be acyclic\footnote{Technically, this is correct only for $\varepsilon = 0$. On the other hand, adding a nonzero external field $\varepsilon > 0$, as in standard Bernoulli bond percolation, is equivalent to considering the model on an extended graph obtained by adding a distinguished vertex (the graveyard), connected to all vertices with edge weight $\varepsilon$.}, see \cite{Grimett2006,Bauerschmidt2021}.

The connection between the two models can be made explicit as follows. By the supersymmetric localisation theorem, the commuting variables $x,y$ and the Grassmann variables $\xi,\eta$ in the $H^{2|4}$ model can be integrated out exactly, yielding a marginal $H^{0|2}$ model involving only the Grassmann degrees of freedom $\bar\psi$ and $\psi$. The resulting partition function takes the form (see  \eqref{eq:fact-notation} with $m=1$)
\begin{equation}\label{eq:arboreal-Z2}
\int \dd\psi_{\Lambda }
\  \ee^{-\beta \big(\bar{\psi}, (-\Delta^{J} + \veps + 1) \psi \big) -  \frac{\beta}{2}(\bar{\psi}\psi, - \Delta^{J} \bar{\psi}\psi) }\; ,
\end{equation}
where we used the shorthand
$(a, b)\equiv \sum_{i} a_{i} b_{i}$, see Section \ref{sec:expansion} for more details.

The partition function of the arboreal gas can be written as \eqref{eq:arboreal-Z2}, see  \cite{Caracciolo2004,Jacobsen2005,Caracciolo2007}. The proofs in \cite{Caracciolo2004,Jacobsen2005,Caracciolo2007} are purely combinatorial; in Appendix~\ref{sec:arboreal gas}, we present an alternative derivation based on the Hubbard--Stratonovich transformation.

The remarkable connection between the arboreal gas and a purely Grassmann model has recently been used in \cite{Bauerschmidt2024} to prove that the model exhibits a percolation transition at large $\beta$ for any dimension $d \ge 3$, while the work \cite{Bauerschmidt2021} established that, in contrast to Bernoulli bond percolation, no phase transition occurs in $d=2$.

The relevance of the $n=2$ case suggests that a systematic understanding of $H^{2|2n}$ nonlinear sigma models for general $n$ is both natural and desirable, including their potential connections to other probabilistic models. 
In \cite{Crawford2021}, the model was analysed in dimension $d=2$ with the aim of understanding the decay of the two-point function, which, according to conventional physics wisdom, is expected to be exponential for all $\beta$, analogous to the behaviour observed in $O(N)$ models \cite{Polyakov1975,Caracciolo2017}.
Building on the works \cite{Spencer1977,Sabot2021}, the author established polynomial upper bounds on the two-point function; these bounds were derived using a certain positivity property of the partition function. Importantly, this same positivity property also strongly suggests the existence of a relation with a dual percolation model for arbitrary $n$.

Before delving into the details of our findings, a final remark is in order. One could likewise consider nonlinear sigma models with hyperbolic space $H^{2+2n'|2n+2n'}$, $n' \in \mathbb{N}$; however, also in these cases, one can reduce the problem to studying the marginals $H^{0|2n-2}$ via the supersymmetric localisation theorem, see Section \ref{sec:expansion}.

\subsection{Main result}
\label{sec:main result}

In this work, we focus on the high-temperature (corresponding to small $\beta$) phase of the $H^{2|2n}$ model and establish the following decay bounds on the  two-point function.

\begin{theorem} \label{main_theorem}
Let $\langle \,\cdot\,\rangle^{\Lambda}_{\beta,\veps,n}$ denote the Gibbs measure of the $H^{2|2n}$ nonlinear sigma model on
$\Lambda \subset \mathbb{Z}^{d}$ finite box at inverse temperature $\beta >0$ and external field $\veps >0$. For all $d \in \mathbb{N}$ there exists a constant $C_{0}>0$, depending only on the couplings $J$, such that for all $n>1$, $\beta,\veps > 0$ satisfying $\beta n C_{0} <1$
the following bounds on the two-point function hold true, uniformly in $\Lambda$ and $i_{0},j_{0} \in \Lambda$.
\begin{itemize}
    \item[1)] \underline{Nearest-neighbour interaction.} If  $J_{ij}=\ind{|i-j|=1},$ then
\begin{equation*}
 | \langle \xi_{i_{0}} \eta_{j_{0}} \rangle^{\Lambda}_{\beta,\veps,n} |\lesssim (C_{0}\beta n)^{|i_{0}-j_{0}|} \;,\qquad \quad \forall i_{0},j_{0} \in \Lambda \;.
\end{equation*}
 \item[2)] \underline{Exponentially decaying interaction.} If  $J_{ij} \lesssim \ee^{ - a \mathrm{dist}(i,j)}$ for some $a\! >\!0$ and for some metric $\mathrm{dist}$ such that $\sup_{i\in \mathbb{Z}^{d}}\!\sum_{j\in\mathbb{Z}^{d}} \ee^{-\frac{a}{2} \mathrm{dist}(i,j)}\! <\!\infty,$
then 
\begin{equation*}
 | \langle  \xi_{i_{0}} \eta_{j_{0}}\rangle^{\Lambda}_{\beta,\veps,n} |\lesssim (\beta n)^{\ind{i_{0} \neq j_{0}}} \ee^{-\frac{a}{2} \mathrm{dist}(i_{0},j_{0})} \;,\qquad \quad \forall i_{0},j_{0} \in \Lambda \;.
\end{equation*}
\item[3)] \underline{Polynomially decaying interaction.} If $J_{ij} \lesssim \frac{1}{(1+|i-j|)^{a}}$ for some $a > d$, then
\begin{equation*}
 | \langle  \xi_{i_{0}} \eta_{j_{0}}\rangle^{\Lambda}_{\beta,\veps,n} |\lesssim \frac{(\beta n)^{\ind{i_{0} \neq j_{0}}} }{(1+|i_{0}-j_{0}|)^{a}} \;,\qquad \quad \forall i_{0},j_{0} \in \Lambda \;.
\end{equation*}
\end{itemize}
\end{theorem}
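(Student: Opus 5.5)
We plan to reduce the two-point function to a purely Grassmann (fermionic) marginal model and then run a high-temperature cluster (polymer) expansion on it. The first step is the supersymmetric localisation theorem. For each site $j$, the variables $(z_j,x_j,y_j,\xi_j,\eta_j,\psi_{j,1},\bar\psi_{j,1})$ form an $H^{2|2}$ block, and integrating them out leaves an $H^{0|2n-2}$ model in the remaining Grassmann variables $\psi_{j,\alpha},\bar\psi_{j,\alpha}$, $\alpha=2,\dots,n-1$. This is the generalisation of \eqref{eq:arboreal-Z2}. The observable $\xi_{i_0}\eta_{j_0}$ is not supersymmetric, so we first use the internal symmetry of $\langle\cdot,\cdot\rangle$ (rotating the pair $(\xi,\eta)$ into a pair $(\bar\psi_\alpha,\psi_\alpha)$) to write $\langle \xi_{i_0}\eta_{j_0}\rangle = \langle \bar\psi_{i_0,\alpha}\psi_{j_0,\alpha}\rangle$ for some $\alpha$ that survives the localisation. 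This expresses the correlation as a ratio of Berezin integrals,
\begin{equation*}
\frac{\int \dd\psi_\Lambda\, \ee^{-\beta(\bar\psi,(-\Delta^J+\veps+1)\psi)-\frac{\beta}{2}(\bar\psi\psi,-\Delta^J\bar\psi\psi)+\dots}\,\bar\psi_{i_0}\psi_{j_0}}{\int \dd\psi_\Lambda\, \ee^{-\beta(\bar\psi,(-\Delta^J+\veps+1)\psi)-\frac{\beta}{2}(\bar\psi\psi,-\Delta^J\bar\psi\psi)+\dots}} ,
\end{equation*}
where $\bar\psi\psi$ now denotes the sum over the $m=n-1$ colours. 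The exact form of the interaction is whatever Section~\ref{sec:expansion} produces; we only use that it is a polynomial in $\beta J_{ij}$ times local monomials in the fermions at $i$ and $j$.

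Next we rescale the fermions so that the on-site part becomes the reference Gaussian (Berezin) measure, i.e.\ the terms $(1+\veps)\bar\psi_j\psi_j$ together with the diagonal part of $-\Delta^J$. The off-diagonal hopping and quartic terms are then of size $\beta J_{ij}$ per edge. We write $\ee^{-\beta\sum_{ij}J_{ij}V_{ij}}=\prod_{\{i,j\}}(1+(\ee^{-\beta J_{ij}V_{ij}}-1))$ and expand. This gives a polymer gas of connected edge sets: numerator and denominator factorise over disconnected components, because the on-site reference measure is a product over sites.

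In the numerator, $i_0$ and $j_0$ must lie in the same polymer $X$, since otherwise the unmatched odd fermion integrates to zero. The standard Kotecký--Preiss or Mayer argument then bounds the ratio by $\sum_{X\ni i_0,j_0}|w(X)|\,\ee^{O(|X|)\cdot\text{small}}$. The key quantitative input is a bound on a single polymer,
\begin{equation*}
|w(X)| \le \prod_{\{i,j\}\in X}(C\beta n J_{ij}),
\end{equation*}
where the factor $n$ reflects the $n-1$ colours. We will prove this with Grassmann norms: an exponential of a local polynomial in $2(n-1)$ generators per site has Grassmann norm at most $\ee^{C\beta n J_{ij}}$, the reference on-site integration is bounded by $1$ in that norm, and the combinatorial factors coming from the many colours are absorbed into $n^{|X|}$ rather than into factorials. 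The bound should be organised through a tree-graph (Brydges--Kennedy) formula, so that one sums over spanning trees of $X$ and the number of trees produces only $C^{|X|}$ thanks to the local degree bounds on $\sum_j J_{ij}$.

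Decay then follows by summing over trees connecting $i_0$ and $j_0$. In the nearest-neighbour case any such tree contains at least $|i_0-j_0|$ edges, which gives $(C_0\beta n)^{|i_0-j_0|}$. In the exponential case we write $J_{ij}\le \ee^{-\frac a2\mathrm{dist}}\ee^{-\frac a2\mathrm{dist}}$ and use the triangle inequality along the path in the tree from $i_0$ to $j_0$; the summability assumption controls the remaining sums, and at least one edge gives the factor $\beta n$ when $i_0\neq j_0$. In the polynomial case we use the convolution property $\sum_k (1+|i-k|)^{-a}(1+|k-j|)^{-a}\lesssim (1+|i-j|)^{-a}$ for $a>d$, iterated along the path, which gives geometric convergence when $C\beta n<1$.

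The main obstacle is the polymer bound uniform in $n$. A naive expansion of the quartic term $(\bar\psi\psi)_i(\bar\psi\psi)_j$ with $m$ colours produces roughly $m^k k!$ terms. Getting only $(\beta n)^{|X|}$, with no factorial and no $n^2$, requires the Grassmann-norm estimate combined with exact combinatorics of the on-site integral (it is essentially a permanent or determinant over the colours). We would try first to bound single-site moments $\int\prod(\bar\psi\psi)_j^{k}$ exactly, as ratios of factorials in $m$, and check that they grow at most like $n^k$.
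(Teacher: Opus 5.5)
\paragraph{Verdict.} There is a genuine gap. Your skeleton matches the paper: localisation to the fermionic marginal, a polymer gas over a factorised reference measure, a tree-graph formula, and the final tree sums (path counting for nearest neighbours, splitting $\ee^{-a\,\mathrm{dist}}$ into two halves, the convolution inequality for $a>d$). The localisation count is slightly off: the supersymmetric block is $(x_j,y_j,\xi_j,\eta_j)$ alone, and all $m=n-1$ pairs $\bar\psi_{j,\alpha},\psi_{j,\alpha}$ survive. That is cosmetic.

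\paragraph{The model is not Gaussian plus quartic.} The gap is the $n$-uniform activity bound, which is the actual content of the theorem. Your route to it rests on a wrong picture of the $H^{0|2m}$ model. The Gaussian-plus-quartic form \eqref{eq:arboreal-Z2} is special to $m=1$, where $z=1+\bar\psi\psi$. For $m\ge 2$:
\begin{itemize}
\item The single-site weight $\cv_j=z_j^{-1}\ee^{-\veps(z_j-1)}$, with $z_j=\sqrt{1+\psi_j\cdot\psi_j}$, contains on-site self-interactions of every even degree up to $2m$. Their coefficients are independent of $\beta$, so $\cv_j$ cannot be rescaled into a Gaussian reference measure.
\item The edge weight $\cw_{ij}=\beta J_{ij}(z_iz_j-1-\psi_i\cdot\psi_j)$ has monomials of all degrees up to $4m$.
\end{itemize}
Hence your claim that the exponential of the edge interaction has norm at most $\ee^{C\beta nJ_{ij}}$ is false. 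The norm $\|z\|$ grows factorially in $m$, so a direct norm bound on $\ee^{-\cw_{ij}}$ would force $\beta$ to be super-exponentially small in $m$. Even in your quartic picture, $\|(\bar\psi\psi)_i(\bar\psi\psi)_j\|$ is of order $m^2$, which is exactly the $n^2$ you want to avoid.

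\paragraph{The normalisation requires a matching lower bound.} ``The reference on-site integration is bounded by $1$ in that norm'' is not the statement you need. Activities are divided by $\tilde Z_{\veps,m}^{|Y|}$, and $\tilde Z_{\veps,m}\ge 2^m m!\,4^{-m}\tbinom{2m}{m}$ is itself factorially large. What is required is the comparison $\|\cv\|,\ \|z\cv\|\le C\,\tilde Z_{\veps,m}$ uniformly in $m$ and $\veps\ge 0$ (Theorem \ref{one_point_estimate}). This is a genuinely non-Gaussian fact. In the $\ell^1$ norm, even the Gaussian density $\prod_\alpha(1-\bar\psi_\alpha\psi_\alpha)$ has norm $2^m$ against an integral of modulus $1$.

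The paper proves the comparison from two exact computations: the Taylor coefficients of $(1-\sqrt{1+x})^\ell/\sqrt{1+x}$ (Lemma \ref{coefficients}), and $\|(\psi\cdot\psi)^k\|=2^k m!/(m-k)!$. Together they show that the top-degree coefficient, which is exactly what the Berezin integral picks out, dominates the norm.

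\paragraph{Tracking powers of $z$.} The paper then tracks powers of $z$ inside the tree formula (Theorem \ref{thm:BBF-formula}):
\begin{itemize}
\item split $\ee^{-s_{ll'}\beta J_{ll'}z_lz_{l'}}$ into its even and odd parts;
\item bound even powers using $\|z^2\|=1+2m$;
\item absorb the leftover odd power of $z_i$ at each site with the factor $1/z_i$ contained in $\cv_i$ (Lemma \ref{lemma:spurious-terms}).
\end{itemize}
This produces exactly one factor $3m\beta$ per tree edge. Your proposal names the obstacle but supplies neither the single-site comparison nor the $z$-power bookkeeping. So the bound $(C\beta n)^{|X|}$, and with it the threshold $\beta nC_0<1$, is not established.
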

\begin{remark}\label{rmk:main-result}\hspace{2cm}

\begin{itemize}
\item[i)] The dependence of the $\beta$-threshold on $n$ is optimal at the level of scaling. Indeed, for large $n$ each site carries order $n$ fermionic degrees of freedom, and the interaction terms involve a sum over these identical internal components via the symmetric product $\langle \cdot, \cdot\rangle$; as a consequence, the effective strength is proportional to $\beta n$.
This is entirely analogous to the large-$N$ normalisation in classical $O(N)$ models, where the coupling constant is rescaled as $\beta/N$ in order to obtain a non-trivial large-$N$ limit. If the Hamiltonian \eqref{eq:def-Hamilt} were normalised in this way, the small-$\beta$ regime would become uniform in $n$.

\item[ii)] Among the three cases, the nearest-neighbour interaction yields the strongest decay, which improves as $\beta \to 0$.
In case 2), natural choices of metric include $\mathrm{dist}(i,j)=|i-j|$ and $\mathrm{dist}(i,j)=\log(1+|i-j|)$. The former leads to genuine exponential decay, while the latter corresponds to polynomially decaying interactions.
In this second situation, however, the bound obtained from case 2) is suboptimal: as shown in case 3), one can in fact recover decay with the same exponent $a$ as the interaction.
Note that in case 2) we can extract exponential decay with rate arbitrarily close to $a,$ by choosing $\beta$ small enough.
 
\item[iii)] Note that the bound is uniform in $\veps$, in particular, unlike in the $n=1$ case, there is no divergence as $\veps \to 0$. As we will discuss, this is due to the surplus of Grassmann variables; see Section \ref{sec:expansion}, where we restrict to the study of the $H^{0|2n-2}$ model. However, the measure of the $H^{2|2n}$ model itself is not well-defined for $\veps =0$.
\end{itemize}
\end{remark}

\begin{remark}[Extensions to other observables]\label{rmk:main-extension}
By symmetry
\begin{equation*}
\langle \bar{\psi}_{i_0,\alpha} \psi_{j_0,\alpha} \rangle^{\Lambda}_{\beta,\veps,n} =\langle  \xi_{i_{0}} \eta_{j_{0}}\rangle^{\Lambda}_{\beta,\veps,n} \;, \qquad \forall \alpha=1,\dots,n-1 \;,
\end{equation*}
hence the bounds above hold also for these two-point functions. 

With minimal additional work, the proof extends to more general truncated correlation functions of the following type. Consider the model with nearest-neighbour couplings, let $D_k \subset \Lambda$ and $A_k$ be a polynomial in the Grassmann variables
$\{ \bar{\psi}_{j,\alpha},\psi_{j,\alpha} \}_{j \in D_k,\ \alpha =1,\dots,n-1}$, for $k=1,\dots,\ell$.
Then
\[
| \langle A_{1} ; \cdots ; A_{\ell}\rangle^{\Lambda}_{\beta,\veps,n} |
\;\lesssim\;
(C\beta n)^{\mathrm{dist}(D_{1},\dots,D_{\ell})},
\]
where $\mathrm{dist}(D_{1},\dots,D_{\ell})$ denotes the length of the minimal tree in $\mathbb{Z}^d$ connecting the domains $D_1,\dots,D_\ell$.

\end{remark}

Let us now compare more in detail our result with the existing literature.
Exponential decay of correlations in the high-temperature regime was previously established in \cite{DisertoriSpencer2010} for $n=1$ and  in \cite{Bauerschmidt2024} for $n=2$. 
 
In \cite{DisertoriSpencer2010}, the proof relies on the probabilistic representation in the $t$-field, see \eqref{eq:partition-function-VRJP}. Denoting by $\llangle \,\cdot\,\rrangle^{\Lambda}_{\beta,\veps,n}$ the probability measure in the $t$-variables, simple manipulations in horospherical coordinates \cite{DisertoriSpencerZirnbauer2010} yield
\begin{equation*}
 \langle  \eta_{j} \xi_{i}\rangle^{\Lambda}_{\beta,\veps,n} =  \llangle (D^{\Lambda}_{\veps})^{-1}_{ij}\rrangle^{\Lambda}_{\beta,\veps,n} \;,
\end{equation*}
where $D^{\Lambda}_{\veps}$ was introduced in \eqref{eq:def-D}. This is precisely the quantity studied in \cite{DisertoriSpencer2010} in the case of at least two-site pinning\footnote{Instead of a uniform external field $\veps$, they considered $\veps_{k} = \ind{k=i} + \ind{k=j}$, for the two fixed sites of interest $i,j \in \Lambda$.}. The strategy developed in \cite[Theorem 1]{DisertoriSpencer2010} to control the average of $(D^{\Lambda}_{\veps})^{-1}_{ij}$ crucially exploits the square root in the determinant appearing in the measure.  While an extension of this strategy to general $n$ cannot be ruled out, such an extension is not straightforward and does not follow directly from the arguments in \cite{DisertoriSpencer2010}. The same structural feature is crucial in their analysis of other observables \cite[Theorem 2]{DisertoriSpencer2010}, involving single-site pinning\footnote{That is, $\veps_{k} = \ind{k=i}$, for some fixed site $i \in \Lambda$.}.
We also note that the decay rate obtained in \cite{DisertoriSpencer2010} is weaker, namely of the form
$(\ln \beta^{-1} \sqrt{\beta})^{|i-j|}$. On the other hand, the purely probabilistic formulation of the model allowed the authors to cover any value of $\beta$ in $d=1$, whereas our result holds for small $\beta$ only.

Concerning the $n=2$ case, the proof in \cite{Bauerschmidt2024} is based on the arboreal-gas representation of the model. In this formulation, the two-point function is related to the connection probability in the arboreal gas via the identities\footnote{Compared to \cite{Bauerschmidt2024}, the order of integration for $\eta$ and $\xi$ is reversed, hence the different expression for the two-point function.}
\begin{equation*}
\mathbb{P}^{\Lambda}_{\beta,\veps}[0 \leftrightarrow j, 0 \not \leftrightarrow \mathfrak{g}] = \langle \eta_{j}\xi_{0}\rangle^{\Lambda}_{\beta,\veps,2} \;,\qquad \qquad
\mathbb{P}^{\Lambda}_{\beta,0}[0 \leftrightarrow j] = \lim_{\veps \to 0}  \langle\eta_{j} \xi_{0}\rangle^{\Lambda}_{\beta,\veps,2} \;,
\end{equation*}
where $\mathfrak{g}$ is the graveyard point, see \cite{Bauerschmidt2021,Bauerschmidt2024}. This identification, allowed the authors to combine stochastic domination together with  general results on the subcritical bond percolation to establish exponential decay. While this approach is very elegant and powerful, it does not provide quantitative estimates on the decay rate. Moreover, its extension to arbitrary $n$ seems out of reach, as no analogous percolation representation is available \cite{Crawford2021}.

In contrast to the mentioned works for $n=1,2$, our proof is based on the study of the marginal purely fermionic $H^{0|2n-2}$ model by a high-temperature cluster expansion. Although the high-temperature regime is often regarded as straightforward, our result is not just an exercise in cluster expansions. 
First of all, because the expansion is not performed around a Gaussian measure, standard tools from Gaussian Grassmann integration are not directly applicable. 
This difficulty can be overcome through a careful use of Grassmann norms; however, this alone is insufficient to obtain the optimal dependence on $n$ as $n \to \infty$.
A central contribution of our work is therefore the derivation of sharp $n$-dependent bounds, achieved through a refined combination of Grassmann-norm estimates and detailed combinatorial analysis.

\paragraph*{Structure of the paper} In Section~\ref{sec:proof-main-thm} we derive the high-temperature expansion into a polymer system (see Lemma \ref{lemma:high-temperature} and Proposition \ref{expansion_lemma}), state the optimal activity bounds (see Proposition \ref{tree_estimate}), and use them to prove the main theorem.
Section~\ref{sec:norm_estimate} contains the core technical analysis, namely the proof of the optimal activity estimates; these rely crucially on a sharp control of the single-site partition function, see in particular Section~\ref{sec:single-point}.
For completeness, Appendix~\ref{appendix_expansion} provides the details of the high-temperature expansion, while Appendix~\ref{sec:arboreal gas} gives a new proof of the representation of the partition function, in the case $n=2,$ as the partition function of the arboreal gas.

\section{Proof of the main theorem}
\label{sec:proof-main-thm}

\subsection{High-temperature expansion}
\label{sec:expansion}

For convenience, we shall henceforth fix
\begin{equation*}
m:=n-1 \in \mathbb{N}_{\geq 1} \;.
\end{equation*}
As a first step, we reduce the problem to studying the purely Grassmann $H^{0|2m}$ nonlinear sigma model. To this end, we now consider supervectors of the form
\begin{equation*}
v_{j}=(z_{j}, \psi_{j}) \;, \qquad \psi_{j}:= (\bar{\psi}_{j,1}, \psi_{j,1}, \dots ,\bar{\psi}_{j,m}, \psi_{j,m}) \;,
\end{equation*}
compare with \eqref{eq:def-u}, \eqref{eq:z} and \eqref{eq:symmetric-product}, with
\begin{equation}\label{eq:exp-z-psi}
z_{j} = \sqrt{1 + \psi_{j} \cdot \psi_{j}} = \sum_{k=0}^{n} \tbinom{1/2}{k}(\psi_{j} \cdot \psi_{j})^{k} \;,
\end{equation}
where $\tbinom{1/2}{k} = \frac{1}{k!}\frac{1}{2}\big(\frac{1}{2} - 1 \big)\cdots \big(\frac{1}{2} - k+1\big)$ is the generalised binomial coefficient. The symmetric product on $H^{0|2m}$, by slight abuse of notation, is $
\langle v , v' \rangle := -z z' + \psi \cdot \psi'$. We then define the Gibbs measure of the $H^{0|2m}$ model on a finite box $\Lambda \subset \mathbb{Z}^{d}$, at inverse temperature $\beta>0$ and external field $\veps \geq  0$  by setting
\begin{equation}\label{eq:Gibbs-measure-Grassmann}
\big[ F\big]^{\Lambda}_{\beta,\veps,m} := \frac{1}{\tilde{Z}_{\beta,\veps,m}^{\Lambda}} \int
 \dd \tilde\mu_{\Lambda} \ee^{-\tilde{H}^{\Lambda}_{\beta,\veps}} F \;, \qquad \tilde{Z}^{\Lambda}_{\beta,\veps,m} : = \int
 \dd \tilde\mu_{\Lambda} \ee^{-\tilde{H}^{\Lambda}_{\beta,\veps}}
\end{equation}
for any observable $F=F(\psi)$. Here, we let
\begin{equation*}
\int \dd \tilde{\mu}_{\Lambda} \; \cdot \; := \prod_{j \in \Lambda} \Big[ \prod_{\alpha =1}^{m}\partial_{\bar{\psi}_{j,\alpha}}\partial_{\psi_{j,\alpha}}\Big]  \frac{1}{z_{j}} \; \cdot \;
\end{equation*}
denote the Grassmann integration form, while the Hamiltonian
$\tilde{H}^{\Lambda}_{\beta,\veps}$ is as in \eqref{eq:def-Hamilt}, with the substitution $u_{j} \to v_{j}$
\begin{equation}\label{eq:tilde-H}
\begin{split}
\tilde{H}^{\Lambda}_{\beta,\veps}& = \frac{\beta}{2} \sum_{\{i,j\} \in E_\Lambda} J_{ij} \langle v_{i} -v_{j}, v_{i} -v_{j} \rangle + \veps \sum_{j \in \Lambda} (z_{j}-1) \;.
\end{split}
\end{equation}
The relation between the $H^{2|2n}$ and the $H^{0|2m}$ models is elucidated in the following lemma.
\begin{lemma}\label{lemma:reduction}
For any $\Lambda \subset \mathbb{Z}^{d}$, $\beta,\veps >0$, $m \geq 1$, $\alpha =1,\dots, m$ and $i_{0},j_{0} \in \Lambda$ we have
\begin{equation*}
Z^{\Lambda}_{\beta,\veps,m+1}=
\tilde{Z}^{\Lambda}_{\beta,\veps,m}  \;, \qquad \qquad
 \langle  \xi_{i_{0}} \eta_{j_{0}}\rangle^{\Lambda}_{\beta,\veps,m+1} = \big[ \bar{\psi}_{i_{0},\alpha} \psi_{j_{0},\alpha}\big]^{\Lambda}_{\beta,\veps,m} \;.
\end{equation*}
\end{lemma}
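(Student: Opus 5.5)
The natural tool is the supersymmetric localisation theorem, applied to the $(x,y,\xi,\eta)$ block of variables at every site. This block carries an $OSp(2|2)$-type supersymmetry, and the remaining Grassmann variables $\psi_j$ enter only as spectators.

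\textbf{Splitting the Hamiltonian.} Write
\[
\langle u_i-u_j,u_i-u_j\rangle = -2-2\langle u_i,u_j\rangle,
\qquad
\langle u_i,u_j\rangle = -z_iz_j + x_ix_j+y_iy_j+\xi_i\eta_j+\xi_j\eta_i + \psi_i\cdot\psi_j .
\]
Then $z_j=\sqrt{w_j^2+x_j^2+y_j^2+2\xi_j\eta_j}$ with $w_j:=\sqrt{1+\psi_j\cdot\psi_j}$, where $w_j$ is an even nilpotent perturbation of $1$. For fixed $\psi$, the integrand
\[
\prod_j z_j^{-1}\,\ee^{-H}
\]
is a function of the $(x,y,\xi,\eta)$ variables only through the supersymmetric inner products $x_ix_j+y_iy_j+\xi_i\eta_j+\xi_j\eta_i$ (with $i=j$ allowed). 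It is therefore annihilated by the supersymmetry generator
\[
Q=\sum_j\big(x_j\partial_{\eta_j}-y_j\partial_{\xi_j}\big).
\]
Equivalently, we may view it as an $H^{2|2}$-type integral with site-dependent radius $w_j$: the variable $z_j$ is the "$H^{2|2}$ $z$-coordinate" for a hyperboloid of radius $w_j$.

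\textbf{Applying localisation.} The localisation theorem states that the Berezin--Lebesgue integral of a $Q$-invariant, sufficiently decaying function $f$ over $(x,y,\xi,\eta)$ equals $f$ evaluated at $x=y=\xi=\eta=0$, with the normalisation $\frac{\dd x\,\dd y}{2\pi}\partial_\xi\partial_\eta$. The one subtlety is the Jacobian factor $1/z_j$. I would handle it by writing $\frac{1}{z_j}\dd x_j\,\dd y_j\,\partial_{\xi_j}\partial_{\eta_j}$ as the invariant measure on the hyperboloid and checking that localisation still yields $f|_{0}$, now with $1/z_j\big|_0=1/w_j$. This is exactly the factor in $\dd\tilde\mu_\Lambda$.

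At the localisation point we have $z_j=w_j$, so $H$ becomes $\tilde H$: the $\veps(z_j-1)$ term becomes $\veps(w_j-1)$, and the interaction becomes $\langle v_i-v_j,v_i-v_j\rangle$. This gives
\[
Z^{\Lambda}_{\beta,\veps,m+1}=\tilde Z^{\Lambda}_{\beta,\veps,m}.
\]
Because $w_j$ contains Grassmann variables, I would justify the application either by Taylor expanding in the nilpotent parts of $w_j$ (each coefficient is a $Q$-invariant function of the bosonic-plus-$\xi\eta$ sector with fixed radius $1$, up to polynomial prefactors in $z$), or by noting that the localisation proof (deformation by $\ee^{-s\sum_j(z_j-w_j)}$, i.e.\ Parisi--Sourlas scaling) goes through verbatim with even coefficients. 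Decay at infinity is supplied by the $\veps$ term.

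\textbf{The two-point function.} The observable $\xi_{i_0}\eta_{j_0}$ is not $Q$-invariant. I would use the internal symmetry of $H^{2|2n}$ that rotates the pair $(\xi,\eta)$ into a pair $(\bar\psi_\alpha,\psi_\alpha)$: the superrotation exchanging $\xi_j\leftrightarrow\bar\psi_{j,\alpha}$ and $\eta_j\leftrightarrow\psi_{j,\alpha}$ at all sites preserves the symmetric product, $z_j$, the Berezin form (up to an even permutation of derivatives), and $H$. This gives
\[
\langle\xi_{i_0}\eta_{j_0}\rangle=\langle\bar\psi_{i_0,\alpha}\psi_{j_0,\alpha}\rangle.
\]
The sign must be checked: $\partial_\xi\partial_\eta$ and $\partial_{\bar\psi}\partial_\psi$ are swapped as a pair, which is even, and the products $\xi\eta'+\xi'\eta$ and $\bar\psi\psi'+\bar\psi'\psi$ have matching structure. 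The new observable depends only on $\psi$, so the localisation above applies with it included in $f$, giving $[\bar\psi_{i_0,\alpha}\psi_{j_0,\alpha}]$. Dividing by the equal partition functions yields the claim.

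The main obstacle is rigorously justifying localisation when the radii $w_j$ are Grassmann-valued and the measure carries the $1/z_j$ factor. I would first try rescaling $x,y,\xi,\eta$ by $w_j$ to reduce to the standard $H^{2|2}$ setting of \cite{DisertoriSpencerZirnbauer2010}, where the Jacobian of the rescaling cancels because the superdeterminant of a uniform scaling of a $(2|2)$ block is $1$.
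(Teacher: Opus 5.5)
Your proposal is correct and follows the paper's argument. The paper applies supersymmetric localisation in the $(x,y,\xi,\eta)$ block, with the $\psi$'s as even-coefficient spectators and integrability supplied by the $\veps$ term, and then uses the exchange symmetry $(\xi_j,\eta_j)\leftrightarrow(\bar\psi_{j,\alpha},\psi_{j,\alpha})$ to handle the two-point function, exactly as you do.

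One slip needs fixing: your $Q=\sum_j(x_j\partial_{\eta_j}-y_j\partial_{\xi_j})$ is only half of the generator, since it gives $Q(\xi_i\eta_j+\xi_j\eta_i)=-(x_j\xi_i+x_i\xi_j+y_i\eta_j+y_j\eta_i)\neq 0$. Add $\sum_j(\xi_j\partial_{x_j}+\eta_j\partial_{y_j})$; with the full generator, $1/z_j$ is simply part of the $Q$-closed integrand and needs no separate treatment.
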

\begin{proof}
The integrand in the definition of $Z^{\Lambda}_{\beta,\veps,m+1}$ is supersymmetric with respect to the variables $x,y,\xi,\eta$ and has sufficient integrability for $\veps >0$ \cite{DisertoriSpencerZirnbauer2010}. Then, the claim follows by application of the supersymmetric localisation theorem \cite{Schwarz1997}, see also \cite{Crawford2021}. 
The same argument applies to the two-point function, since
by symmetry under exchange of the Grassmann variables, we have $\langle  \xi_{i_{0}} \eta_{j_{0}}\rangle^{\Lambda}_{\beta,\veps,m+1} = \langle \bar{\psi}_{i_{0},\alpha} \psi_{j_{0},\alpha}\rangle^{\Lambda}_{\beta,\veps,m+1}$.
\end{proof}

Let us now have a closer look at the Grassmann representation of the $H^{0|2m}$ model; we can write the Hamiltonian \eqref{eq:tilde-H} as 
\begin{equation}
\begin{split}
\tilde{H}^{\Lambda}_{\beta,\veps} = -\beta \sum_{\{i,j\} \in E_\Lambda} J_{ij} \big(\psi_{i} \cdot \psi_{j}  -(z_{i}z_{j}-1)\big) + \veps \sum_{j \in \Lambda} (z_{j}-1) \;,
\end{split}
\end{equation}
where recall that there is no contribution from $i=j$ by construction.
For any type of interaction this Hamiltonian defines a gradient-type model, which is a polynomial perturbation of a Gaussian.
In the regime of large $\beta$, the Gaussian part is expected to dominate; this has been proved for $m=1$, see \cite{Bauerschmidt2024}. In this case, the perturbation is a quartic polynomial of gradient type, see \eqref{eq:arboreal-Z2}, and this fact is crucial for the proof. By contrast, in the regime of small $\beta$, the factorised part dominates, and this gradient structure becomes less important.
Actually, expanding the $z$'s as polynomials in $\psi \cdot \psi$ is not particularly useful at this stage either. Instead, it is more effective to keep track of which contributions preserve a factorised structure and which terms (weakly) break factorisation. This perspective underlies the general philosophy of high-temperature expansions and will be central in our analysis.

We adjust our notation accordingly. We set
\begin{equation}\label{eq:def-W-nu}
\cw_{ij}=\beta  J_{ij}(-1-\psi_i \cdot \psi_j+z_iz_j) \;, \quad \forall \{i,j\} \in E_{\Lambda}, 
\qquad
\cv_j=\frac{1}{z_j}\ee^{-\veps (z_j-1)}\;, \quad \forall j \in \Lambda \;.
\end{equation}
and introduce the notation, for any subset $Y \subset \Lambda$
\begin{equation}\label{eq:fact-notation}
\cw(Y) := \sum_{ \{i,j\} \in E_Y} \cw_{ij} \;, 
\qquad 
\cv^{Y} := \prod_{j \in Y} \cv_{j}  \;,
\qquad 
\int \dd \psi_{Y} \; \cdot \;:=  \prod_{j \in Y} \Big[ \prod_{\alpha =1}^{m}\partial_{\bar{\psi}_{j,\alpha}}\partial_{\psi_{j,\alpha}}\Big] \; \cdot \;\;,
\end{equation}
where, for brevity, we keep the dependence on $\beta$ and $\veps$ implicit. We then introduce the following factorised Grassmann integration form, for any $Y \subset \Lambda$
\begin{equation}\label{eq:reference-integration-form}
\int \dd \nu_{Y} \; \cdot \; :=  \int\dd \psi_{Y} \,\cv^{Y}  \; \cdot \; \;.
\end{equation}
With this notation, we can write, compare with \eqref{eq:Gibbs-measure-Grassmann},
\begin{equation*}
\tilde{Z}^{\Lambda}_{\beta,\veps,m} = \int \dd \nu_{\Lambda} \ee^{-\cw(\Lambda)} \;.
\end{equation*}
The measure in \eqref{eq:reference-integration-form} is the reference factorised measure for the high-temperature cluster expansion, while the Gibbs factor $\ee^{-\cw(\Lambda)}$ weakly breaks the factorisation for $\beta$ small.

To study correlation functions, we introduce the generating function for our model, which is tantamount to the Laplace transform of the reference measure. The external variables are Grassmann variables which we denote by $\{ \bar{\rho}_{j,\alpha},\rho_{j,\alpha} \}_{j \in \Lambda, \alpha =1,\dots,m}$. For any $Y \subset \Lambda$, we use the shorthand
\begin{equation*}
(\psi\cdot \rho)_{Y}:=\sum_{j \in Y} \psi_{j} \cdot \rho_{j} \qquad \rho_{j} := ( \bar{\rho}_{j,1},\rho_{j,1}, \dots,\bar{\rho}_{j,m},\rho_{j,m} ) \;,
\end{equation*}
where $\psi_{j} \cdot \rho_{j}$ is the symmetric product in \eqref{eq:symmetric-product}. The generating function is defined by 
\begin{equation}\label{eq:generating-function}
\tilde{Z}^{\Lambda}_{\beta,\veps,m}(\rho) := \int \dd \nu_{\Lambda} \, \ee^{-\cw(\Lambda)} \ee^{(\psi \cdot \rho)_{\Lambda}} \;.
\end{equation}
Note that $\tilde{Z}^{\Lambda}_{\beta,\veps,m}(0) = \tilde{Z}^{\Lambda}_{\beta,\veps,m}$. 
Connected correlation functions are obtained by derivation with respect to the external variables; in particular, 
the two-point function can be written as
\begin{equation}\label{eq:def2point-function}
\big[ \bar{\psi}_{i,\alpha} \psi_{j,\alpha}\big]^{\Lambda}_{\beta,\veps,m}  = \big[ \bar{\psi}_{i,\alpha} \psi_{j,\alpha}\big]^{\Lambda}_{\beta,\veps,m} - \big[ \bar{\psi}_{i,\alpha}\big]^{\Lambda}_{\beta,\veps,m}\big[\psi_{j,\alpha}\big]^{\Lambda}_{\beta,\veps,m} =
\frac{\partial}{\partial \rho_{i,\alpha}} \frac{\partial}{\partial \bar{\rho}_{j,\alpha}} \ln \tilde{Z}^{\Lambda}_{\beta,\veps,m}(\rho) \Big|_{\rho = 0} \;,
\end{equation}
since $\big[ \bar{\psi}_{i,\alpha}\big]^{\Lambda}_{\beta,\veps,m} = \big[\psi_{j,\alpha}\big]^{\Lambda}_{\beta,\veps,m} = 0$ by symmetry. Here, for any function $F(\rho)$ setting $\rho = 0$ corresponds to taking zero-order coefficient of the Taylor expansion in $\rho$. For later convenience, we introduce, respectively, the single-site generating function and normalisation
\begin{equation}\label{eq:single-site-normalisation}
\tilde{Z}_{\veps,m}^{\{ j\}}(\rho) := \int \dd \nu_{\{ j\}} \ee^{\psi_{j} \cdot \rho_{j}} \;,
\qquad \qquad
\tilde{Z}_{\veps,m} := \int \dd \nu_{\{ j\}} \;,
\end{equation}
the latter being independent of $j \in \Lambda$.

A high-temperature representation for the generating function is obtained by expanding  the Gibbs weight $\exp(-\cw(\Lambda))$ into its connected components. These are defined by setting
\begin{equation}\label{eq:expansion0}
(\ee^{-\cw(\Lambda)})_{\mathrm{conn}} := \ee^{-\cw(\Lambda)} =  1 \;, \qquad \text{for} \;\; |\Lambda|=1 \;,
\end{equation}
and for $|\Lambda| \geq 2$ by recursively requiring
\begin{equation}\label{eq:expansion}
 \ee^{-\cw(\Lambda)}=:\sum_{\Pi \;\mathrm{part}\; \Lambda} \prod_{Y \in \Pi} (\ee^{-\cw(Y)})_{\mathrm{conn}} \;,
\end{equation} 
where the sum runs over partitions $\Pi$ of $\Lambda$.
As a consequence of the expansion in \eqref{eq:expansion}, the
generating function admits a representation as a hard-core polymer gas.
\begin{lemma}[High-temperature expansion]\label{lemma:high-temperature}
With the notation set above, for any $\Lambda \subset \mathbb{Z}^{d}$, $\beta >0$, $\veps\geq 0$, as long as $\tilde{Z}_{\veps,m} \neq 0$ it holds that
\begin{equation}\label{eq:high-temp}
 \frac{\tilde Z^{\Lambda}_{\beta,\veps,m}(\rho)}{\prod_{j \in \Lambda} \tilde{Z}_{\veps,m}^{\{ j\}}(\rho)}=1 +\sum_{N \geq 1}\frac{1}{N!}\sum_{\substack{Y_1,\ldots,Y_N\subset \Lambda\\ |Y_l|>1}} \Bigg[\prod_{l=1}^N \mathcal{K}(Y_l) \Bigg] \varphi(Y_1,\ldots,Y_N) \;,
\end{equation}
where $\varphi(Y_1,\ldots,Y_N)=
\prod_{ l\neq m=1}^{N}\ind{Y_l\cap Y_m=\varnothing}$  is the hard-core interaction between sets, and where
\begin{equation}\label{eq:defKgener}
\mathcal{K}(Y):=\frac{\int \dd \nu_{Y}\ (\ee^{-\cw(Y)})_{\mathrm{conn}}\, \ee^{(\psi \cdot\rho)_Y}}{\prod_{j \in Y} \tilde{Z}_{\veps,m}^{\{ j\}}(\rho)} \;,
\end{equation}
takes value in the Grassmann algebra generated by $\{ \bar{\rho}_{j,\alpha},\rho_{j,\alpha} \}_{j \in \Lambda, \alpha = 1,\dots,m}$.
\end{lemma}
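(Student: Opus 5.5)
The plan is to insert the partition expansion \eqref{eq:expansion} into the definition \eqref{eq:generating-function} and exploit the factorisation of the reference form $\dd\nu_\Lambda$ and of the source term. By \eqref{eq:expansion}, summing over partitions $\Pi$ of $\Lambda$,
\begin{equation*}
\tilde Z^{\Lambda}_{\beta,\veps,m}(\rho)=\sum_{\Pi\;\mathrm{part}\;\Lambda}\int \dd\nu_\Lambda \prod_{Y\in\Pi}(\ee^{-\cw(Y)})_{\mathrm{conn}}\,\ee^{(\psi\cdot\rho)_\Lambda}.
\end{equation*}

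Next we factorise the integrand over the blocks of $\Pi$. The form $\dd\nu_\Lambda=\dd\psi_\Lambda\,\cv^\Lambda$ is a product over sites. The exponential $\ee^{(\psi\cdot\rho)_\Lambda}=\prod_{Y\in\Pi}\ee^{(\psi\cdot\rho)_Y}$ factorises because the bilinears $\psi_j\cdot\rho_j$ are even, hence commuting. Each factor $(\ee^{-\cw(Y)})_{\mathrm{conn}}$ is an even element depending only on $\{\psi_j\}_{j\in Y}$; this follows by induction from the recursive definition, since $\cw_{ij}$ and $z_j$ are even. Consequently the Berezin integral over $\Lambda$ splits into a product of integrals over the blocks $Y$. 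This is the one step requiring care: the ordering of derivatives and the Grassmann parity must be tracked, but every factor involved is even, so no signs appear. We obtain
\begin{equation*}
\tilde Z^{\Lambda}_{\beta,\veps,m}(\rho)=\sum_{\Pi}\prod_{Y\in\Pi}\int\dd\nu_Y\,(\ee^{-\cw(Y)})_{\mathrm{conn}}\,\ee^{(\psi\cdot\rho)_Y}.
\end{equation*}

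We then divide by $\prod_j\tilde Z^{\{j\}}_{\veps,m}(\rho)$. This is legitimate because $\tilde Z^{\{j\}}_{\veps,m}(\rho)$ is an even element of the $\rho$-algebra whose body is $\tilde Z_{\veps,m}\neq0$, so it is invertible; all these elements are even and commute. Distributing the single-site factors over the blocks, a block of size one contributes $\int\dd\nu_{\{j\}}\ee^{\psi_j\cdot\rho_j}/\tilde Z^{\{j\}}_{\veps,m}(\rho)=1$ by \eqref{eq:expansion0}. A block with $|Y|>1$ contributes exactly $\mathcal K(Y)$ as defined in \eqref{eq:defKgener}.

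Finally, a partition is the same as an unordered family of pairwise disjoint blocks of size greater than one, with the remaining sites taken as singletons. Writing the sum over such families as $\sum_N\frac{1}{N!}$ times a sum over ordered $N$-tuples constrained by the hard-core factor $\varphi$ gives \eqref{eq:high-temp}. The $N=0$ term, corresponding to the all-singleton partition, contributes $1$. The factors $\mathcal K(Y_l)$ commute because they are even, so the ordering is immaterial.
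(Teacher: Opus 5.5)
Your proposal is correct and follows the same route as the paper: insert the partition expansion \eqref{eq:expansion}, factorise the integrand and the normalisation over the blocks to get $\sum_{\Pi}\prod_{Y\in\Pi}\mathcal K(Y)$, drop singletons since $\mathcal K(\{j\})=1$, and resum over unordered families of blocks of size at least two with the $1/N!$ and the hard-core factor $\varphi$. You additionally spell out the parity, factorisation and invertibility details (evenness of the conn-terms and of $\ee^{\psi_j\cdot\rho_j}$, splitting of the Berezin integral, invertibility of $\tilde Z^{\{j\}}_{\veps,m}(\rho)$), which the paper leaves implicit.
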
 
\begin{remark}\hspace{2cm}

\begin{itemize}
\item[i)]  
Above we can also take $\veps =0$, because we are now working with Grassmann variables only,
so that $\tilde Z^{\Lambda}_{\beta,\veps,m}(\rho)$ and $\tilde Z^{\{j\}}_{\veps,m}(\rho)$ are always well-defined.
Note that $\tilde Z^{\{j\}}_{\veps,m}(\rho)$  is invertible provided that $\tilde{Z}_{\veps,m} \neq 0.$ In Section \ref{sec:single-point}
below we will show that $\tilde{Z}_{\veps,m} > 0$ for any $\veps \geq 0$. 
\item[ii)] Although the sum in \eqref{eq:high-temp} is written as $\sum_{N\geq 1}$, it is in fact finite. Indeed, as will be clear from the proof, one necessarily 
has $N \leq \lfloor |\Lambda|/2\rfloor$, 
since this is the maximal number of disjoint subsets of $\Lambda$ of cardinality at least two.
\end{itemize}
\end{remark}
\begin{proof}
Using \eqref{eq:expansion}, we can write
    \[
     \frac{\tilde Z^{\Lambda}_{\beta,\veps,m}(\rho)}{\prod_{j \in \Lambda} \tilde{Z}_{\veps,m}^{\{ j\}}(\rho)} = \sum_{\Pi \;\mathrm{part}\; \Lambda} \prod_{Y \in \Pi} \mathcal{K}(Y) \;.
    \]
For $Y=\{j \}$ we have, using \eqref{eq:expansion0}, $(\ee^{-\cw(Y)})_{\mathrm{conn}}=1$ and hence
\[
\mathcal{K}(Y) = \frac{\int \dd \nu_{{j}}\, \ee^{ (\psi_{j} \cdot \rho_{j})}}{ \tilde{Z}_{\veps,m}^{\{ j\}}(\rho)} =1.
\]
Thus  singletons can be omitted and  the sum over partitions can be rewritten as a sum of pairwise disjoint sets
of size at least two:
\begin{equation*}
\sum_{\Pi \;\mathrm{part}\; \Lambda} \prod_{Y \in \Pi} \mathcal{K}(Y) = 1 + \sum_{N \geq 1} \frac{1}{N!}\hspace{-0,4cm}\sum_{\substack{Y_1,\ldots,Y_N\subset \Lambda\\ \text{pairwise disjoint and }|Y_l|>1}}\prod_{Y \in \Pi} \mathcal{K}(Y)
\end{equation*}
where the $1/N!$ is there to compensate overcounting due to the permutation of the $Y$'s. The claim then follows by imposing the pairwise disjointness via the hard-core function.
\end{proof} 
In the language of statistical mechanics, the subsets $Y \subset \Lambda$ are referred to as \emph{polymers}, and $\mathcal{K}(Y)$ as their
\emph{activity}. 
Starting from this representation, bounds on correlation functions are obtained by taking logarithms and estimating derivatives of the resulting polymer expansion, following standard arguments \cite{Brydges:1984vu}. As a result, we obtain the following estimate.
\begin{proposition}\label{expansion_lemma}
For any $\Lambda \subset \mathbb{Z}^{d}$ finite, $\beta >0$, $\veps \geq 0$, $m \in \mathbb{N}$, $i_{0},j_{0} \in \Lambda$ and $\alpha = 1,\dots, m$ the following bound holds true for some constant $C>1$
\begin{equation}\label{eq:expansion_lemma}
\big|\big[ \bar{\psi}_{i_{0},\alpha} \psi_{j_{0},\alpha}\big]^{\Lambda}_{\beta,\veps,m} \big| \leq  
\Bigg( \sum_{\substack{Y\subset\Lambda:\\Y \ni i_{0},j_{0}}}|K_{i_{0},j_{0}}(Y)| C^{|Y|}\Bigg)
\sum_{N\geq 1}
\Bigg(\sup_{k\in \Lambda}\sum_{\substack{Y\subset \Lambda: \\ Y \ni k, \; |Y|>1}}|K(Y)|C^{|Y|}\Bigg)^{N-1}\; ,
\end{equation}
where the polymer activities $K (Y),K_{i_{0},j_{0}}(Y)$,  $\forall Y \subset \Lambda$ and $i_{0},j_{0} \in \Lambda,$ are defined by
\begin{align}\label{eq:pol-external-field}
K(Y) & := \mathcal{K}(Y)\big|_{\rho  = 0} =\frac{\int \dd \nu_{Y}\ (\ee^{-\cw(Y)})_{\mathrm{conn}}}{ \tilde{Z}^{|Y|}_{\veps,m}}  \\
K_{i_{0},j_{0}}(Y)&:= \frac{\partial}{\partial \rho_{i_{0},\alpha}}\frac{\partial}{\partial \bar{\rho}_{j_{0},\alpha}}
\Big[\ind{j_{0}=i_{0} } \log \tilde{Z}_{\veps,m}^{\{ i_{0}\}}(\rho) + \mathcal{K}(Y)\Big]_{\rho=0} \hspace{-0,2cm}=
\frac{\int \dd \nu_{Y}\ (\ee^{-\cw(Y)})_{\mathrm{conn}} \bar{\psi}_{i_{0},\alpha} \psi_{j_{0},\alpha}}{ \tilde{Z}^{|Y|}_{\veps,m}}\; .\nonumber
\end{align}
\end{proposition}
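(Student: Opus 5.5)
The plan is to take logarithms in the polymer representation of Lemma \ref{lemma:high-temperature}, use the standard Mayer (cluster) expansion, and then differentiate with respect to the external sources. Write
\[
\log \tilde Z^{\Lambda}_{\beta,\veps,m}(\rho)=\sum_{j\in\Lambda}\log \tilde Z^{\{j\}}_{\veps,m}(\rho)+\sum_{N\ge1}\frac{1}{N!}\sum_{Y_1,\dots,Y_N}\Big[\prod_{l}\mathcal K(Y_l)\Big]\varphi^T(Y_1,\dots,Y_N).
\]
Here $\varphi^T$ is the Ursell function of the hard-core interaction. This identity is purely algebraic, since all sums are finite and the activities are even elements of a finite Grassmann algebra, which commute among themselves. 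Hence $\log$ is defined by the formal power series around the nonzero body $\tilde Z_{\veps,m}$, using $\tilde Z_{\veps,m}>0$. The only analytic input needed is the tree-graph bound
\[
|\varphi^T(Y_1,\dots,Y_N)|\le\sum_{T\text{ tree on }\{1,\dots,N\}}\prod_{\{l,l'\}\in T}\ind{Y_l\cap Y_{l'}\neq\varnothing},
\]
which I would quote from \cite{Brydges:1984vu}.

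Next, apply $\partial_{\rho_{i_0,\alpha}}\partial_{\bar\rho_{j_0,\alpha}}$ and set $\rho=0$. Recall that the single-site logarithm contributes only when $i_0=j_0$, which is exactly why it is bundled into $K_{i_0,j_0}$. Since $(\psi\cdot\rho)_Y$ involves only sites in $Y$, each $\mathcal K(Y_l)$ depends only on $\rho_Y$. The Leibniz rule therefore distributes the two derivatives over the factors.

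I expect a term where the derivatives fall on two different polymers, producing odd first-order pieces $\partial_{\rho_{i_0}}\mathcal K(Y_l)|_0$. These vanish by the $\psi\to-\psi$ (more precisely $U(1)$ per component) symmetry of $\dd\nu_Y$ and $\cw$. This leaves only the terms in which both derivatives hit a single polymer $Y_1\ni i_0,j_0$, with value $K_{i_0,j_0}(Y_1)$, while all the other polymers are evaluated at $\rho=0$ and give $K(Y_l)$. Relabelling gives the factor $N$, which cancels against $1/N!$ to leave $1/(N-1)!$ with $Y_1$ distinguished.

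Then take absolute values and use the tree bound. Root the tree at $Y_1$ and sum over the remaining polymers leaf by leaf (the standard Penrose or tree-counting argument). A polymer $Y$ attached to a vertex of degree $d_l$ is summed as
\[
\sum_{k\in Y_{\mathrm{parent}}}\ \sum_{Y'\ni k}|K(Y')|,
\]
and the factors of $|Y_{\mathrm{parent}}|^{d_l}$ are combined with $1/(N-1)!$ and the Cayley count $\binom{N-2}{d_1-1,\dots,d_N-1}$ (with $\sum (d_l-1)=N-2$). These are absorbed by inequalities like $|Y|^{d}/d!\le e^{|Y|}$. That produces weights $C^{|Y|}$ with, for example, $C=e^2$, and yields
\[
\Bigl(\sum_{Y\ni i_0,j_0}|K_{i_0,j_0}(Y)|C^{|Y|}\Bigr)\Bigl(\sup_k\sum_{Y\ni k,|Y|>1}|K(Y)|C^{|Y|}\Bigr)^{N-1}
\]
for each $N$. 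Summing over $N$ gives \eqref{eq:expansion_lemma}. If the geometric series diverges, the bound is trivially true.

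The main obstacle is the bookkeeping of combinatorial factors in the tree sum, so that exactly a constant to the power $|Y|$ appears and no extra $N$-dependent factorial survives. The delicate point is converting the sum over trees with $1/(N-1)!$ into per-polymer weights. I would handle it by first summing over degrees $d_l$, then using $\sum_T\prod_l(\ldots)^{d_l-1}\le (N-2)!\prod_l \frac{|Y_l|^{d_l-1}}{(d_l-1)!}$, which converts the tree sum into a product of per-polymer factors.
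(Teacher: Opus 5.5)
Your architecture is the paper's. You use the Mayer expansion of $\log\tilde Z^{\Lambda}_{\beta,\veps,m}(\rho)$ (Lemma \ref{le:polreplog}) and differentiate to single out a polymer $Y_1\ni i_0,j_0$ with weight $1/(N-1)!$ (Corollary \ref{cor:polrep2p}). You then apply the tree-graph bound (Proposition \ref{lemma_bound_phi}), strip leaves towards the root $Y_1$, and absorb Cayley's degree count via $\sum_{d\geq 1}|Y|^{d}/(d-1)!\leq \ee^{2|Y|}$. Your parity argument for the mixed terms is correct and makes explicit a step the paper skips: each $\mathcal K(Y)$ is even in $\rho$, so its linear coefficients vanish.

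The step that is wrong as stated is the claim that the logarithm identity is purely algebraic because all sums are finite. Finiteness holds for the hard-core gas of Lemma \ref{lemma:high-temperature}, where polymers are disjoint and $N\leq\lfloor|\Lambda|/2\rfloor$. It fails for the Mayer series, because the connected hard-core function is nonzero on overlapping, even repeated, polymers: $(\varphi(Y,\dots,Y))_{\mathrm{conn}}=(-1)^{N-1}(N-1)!$. So the sum over $N$ is infinite already at $\rho=0$. For $|\Lambda|=2$ there is a single polymer and the series is $\sum_{N}(-1)^{N-1}K^{N}/N$, which equals $\log(1+K)$ only for $|K|<1$. That $\log\tilde Z^{\Lambda}_{\beta,\veps,m}(\rho)$ is well defined through its positive body does not address this. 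The identity needs absolute convergence, which is exactly the proviso recorded in Lemma \ref{le:polreplog}.

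Your remark that the bound is trivial when the geometric series diverges covers $\mathcal B\geq 1$. For $\mathcal B<1$ you must supply the convergence, which takes three steps.
\begin{itemize}
\item Your estimate already gives absolute convergence of the coefficient series of $\rho_{i_0,\alpha}\bar\rho_{j_0,\alpha}$. The same rooted-tree argument bounds the undifferentiated series by $\sum_{Y\subset\Lambda}|K(Y)|+\sum_{N\geq 2}\frac{1}{N(N-1)}\bigl(\sum_{Y\subset\Lambda}|K(Y)|\ee^{2|Y|}\bigr)\mathcal B^{N-1}$, which is finite since $\Lambda$ is finite.
\item Keep only the sources $\rho_{i_0,\alpha},\bar\rho_{j_0,\alpha}$. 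The relevant even algebra is then spanned by $1$ and their product, i.e.\ it is commutative and two-dimensional.
\item Multiply all activities by $\lambda\in\mathbb C$, $|\lambda|\leq 1$. The exponential of the Mayer series and the polynomial right-hand side of \eqref{eq:high-temp} are analytic on the open disc and continuous up to $\lambda=1$. Their power series at $\lambda=0$ coincide by the formal identity, so they agree on the disc by the identity theorem and at $\lambda=1$ by continuity.
\end{itemize}
With this added, your argument is complete.
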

The proof is standard but somewhat involved; for the reader's convenience, we provide the details in Appendix~\ref{appendix_expansion}.

\subsection{Grassmann norms}
\label{sec:activity-estimates}

In order to control the sums in \eqref{eq:expansion_lemma}, we need suitable estimates for the activities
$|K(Y)|$ and $|K_{i_{0},j_{0}}(Y)|$. This is where the specific structure of the model, and in particular
its Grassmann nature, plays a crucial role.

Although Grassmann integrals can in principle be always evaluated exactly, their complexity grows rapidly with the number of variables. Moreover, in the present setting standard tools from Gaussian Grassmann integration are not available, since the underlying reference measure is factorised rather than Gaussian.
Nevertheless, useful bounds can still be obtained by estimating Berezin integrals using the $\ell^{1}$-type norm
\cite{Fresta2021}, which we now recall. 

For brevity, we denote by $\mathcal{G}^{\Lambda}_{m}$ the  Grassmann algebra (real or complex) with generators $\{\bar{\psi}_{j,\alpha}, \psi_{j,\alpha} \}_{j \in \Lambda, \alpha =1,\dots,m}$.
This algebra is a finite linear space with canonical basis elements
\begin{equation*}
\psi^{I}\bar{\psi}^{\bar{I}}:= \Big[\prod_{(i,\alpha) \in I}  \psi_{i,\alpha} \Big] \Big[\prod_{(j,\alpha) \in \bar{I}} \bar\psi_{j,\alpha} \Big] \;,\qquad \quad I,\bar I \subset \Lambda \times \{1,\dots,m \} \;.
\end{equation*}
Accordingly, any element $f \in \mathcal{G}^{\Lambda}_{m}$ can be uniquely written as
\begin{equation*}
f =  \sum_{I,\bar{I}} f_{I,\bar{I}}\psi^{I}\bar{\psi}^{\bar{I}}
\end{equation*}
and identified as the sequence $\big(f_{I,\bar{I}} \big)_{I,\bar I}$. Equipped with the $\ell^{1}$-type norm on such sequences, the Grassmann algebra $\mathcal{G}^{\Lambda}_{m}$ becomes a normed algebra. The following result  is proved in  \cite[Lemma 2.4]{Fresta2021}.
\begin{lemma}[$\ell^{1}$-type norm {\cite[Lemma 2.4]{Fresta2021}}]\label{lemma:norm}
For any $f \in \mathcal{G}^{\Lambda}_{m}$, $f = \sum_{I,\bar{I}} f_{I,\bar{I}}\psi^{I}\bar{\psi}^{\bar{I}}$, set
\begin{equation*}
\| f\|:= \sum_{I,\bar{I}} |f_{I,\bar{I}}| \;.
\end{equation*}
Then, $\mathcal{G}^{\Lambda}_{m}$ equipped with $\| \cdot \|$ is a normed algebra, that is,
\begin{equation*}
\| 1\| = 1 \qquad \qquad \| f g\| \leq \| f\| \, \|g\|\;, \qquad \forall f,g \in \mathcal{G}^{\Lambda}_{m} \;.
\end{equation*}
Furthermore, the following bound holds for the Grassmann integral (see also \eqref{eq:fact-notation})
\begin{equation*}
\Bigg|\int \dd \psi_{\Lambda} \,f \Bigg| \leq \| f \| \;, \qquad \forall f \in \mathcal{G}^{\Lambda}_{m} \;.
\end{equation*}
\end{lemma}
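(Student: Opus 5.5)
The plan is to work directly in coefficient space. Write $f=\sum_{I,\bar I} f_{I,\bar I}\psi^I\bar\psi^{\bar I}$ and $g=\sum_{J,\bar J} g_{J,\bar J}\psi^J\bar\psi^{\bar J}$. The claim $\|1\|=1$ is immediate, since $1$ is the basis element with $I=\bar I=\varnothing$ and coefficient one. For the product, bilinearity gives
\[
fg=\sum_{I,\bar I}\sum_{J,\bar J} f_{I,\bar I}\,g_{J,\bar J}\;\psi^I\bar\psi^{\bar I}\psi^J\bar\psi^{\bar J}.
\]
The key observation is that each product of two basis monomials is either $0$ or $\pm$ a single basis monomial. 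If two of the generators coincide (that is, $I\cap J\neq\varnothing$ or $\bar I\cap\bar J\neq\varnothing$), the product vanishes because every generator squares to zero. Otherwise, reordering the anticommuting generators into the canonical order $\psi^{I\cup J}\bar\psi^{\bar I\cup\bar J}$ produces only a sign $\sigma=\pm1$.

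Hence $fg=\sum_{K,\bar K} h_{K,\bar K}\psi^K\bar\psi^{\bar K}$, where each $h_{K,\bar K}$ is a signed sum of products $f_{I,\bar I}g_{J,\bar J}$. Every pair $(I,\bar I),(J,\bar J)$ contributes to at most one coefficient $h_{K,\bar K}$. The triangle inequality then gives
\[
\|fg\|=\sum_{K,\bar K}|h_{K,\bar K}|\le\sum_{I,\bar I,J,\bar J}|f_{I,\bar I}|\,|g_{J,\bar J}|=\|f\|\,\|g\|.
\]
Linearity of the coefficient map together with the triangle inequality shows that $\|\cdot\|$ is a norm.

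For the integral bound, note that $\int \dd\psi_\Lambda$ is the Berezin integral, i.e.\ the linear functional that extracts the top coefficient with a fixed sign convention. It annihilates every basis monomial except the top one $\psi^{\Lambda\times[m]}\bar\psi^{\Lambda\times[m]}$, which it maps to $\pm1$ because of the ordering of the derivatives. Therefore
\[
\Big|\int \dd\psi_\Lambda\, f\Big| = |f_{\mathrm{top}}| \le \|f\|.
\]

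There is no real obstacle here. The only step requiring care is checking that reordering monomials affects only signs and that distinct index pairs never merge with nontrivial coefficients other than $\pm1$. Both facts follow directly from the anticommutation relations.
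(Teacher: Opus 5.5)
Your proof is correct and complete. A product of two basis monomials is either $0$ or $\pm$ a single basis monomial, so each pair of coefficients feeds at most one coefficient of $fg$, and the triangle inequality then gives $\|fg\|\le\|f\|\,\|g\|$; the integral bound follows because $\int \dd\psi_\Lambda$ returns $\pm$ the top coefficient. The paper gives no proof of its own and only cites \cite[Lemma 2.4]{Fresta2021}, so your direct coefficient-wise argument is the standard way to fill that in.
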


Although these Grassmann norms yield effective bounds on $|K(Y)|$ and $|K_{i_{0},j_{0}}(Y)|$, they are not sufficient to establish Theorem~\ref{main_theorem} with optimal dependence on $m$, for two reasons. First, the normalisation $\tilde{Z}_{\veps,m}$ must be controlled by more refined means, since an upper bound alone is inadequate for our purposes. Once more, this is non-trivial because we cannot use Gaussian-integration tools at our disposal and the complexity of the integral grows rapidly with $m$.
Second, and more subtly, a naive estimation of the integral fails to produce the optimal dependence on $m$.
Let us briefly see why this is the case and how we overcome this difficulty. 

Let us consider $K(Y)$ with nearest-neighbour interaction for brevity. By Lemma \ref{lemma:norm} we may estimate
\[
   \left|\int \dd \psi_{Y}\ (\ee^{-\cw(Y)})_{\mathrm{conn}}\cv^{Y}\right | \leq \left \|(\ee^{-\cw(Y)})_{\mathrm{conn}} \right\|  \prod_{j \in Y}\big\| \cv_{j} \big \| \;.
\]
Expanding $(\ee^{-\cw(Y)})_{\mathrm{conn}}$ by means of Theorem \ref{thm:BBF-formula}, the minimal tree connecting $i_{0}$ and $j_{0}$ through nearest-neighbour edges has at least $|i_{0} - j_{0}|$ edges, leading to the sought factor $\beta^{|i_{0} - j_{0}|}$, and thus the exponential decay in the distance $|i_{0} - j_{0}|$ for $\beta $ sufficiently small.

However, we are still left with controlling the norm of the polynomials in the $z$'s coming both from the interaction term $\cw_{ij}$ and from the single site measure $\nu_{j}$, see \eqref{eq:def-W-nu}. Obtaining bounds with an optimal dependence in $m$ is tricky because
\begin{equation}\label{eq:even_z_bound}
 \| z_{j}\| \sim m^{m} \;, 
\end{equation}
whereas even powers have a better estimate
\begin{equation*}
 \|z_{j}^{2}\|=\left \|1+\psi_{j} \cdot \psi_{j} \right \|=1+2m \;.
\end{equation*}
Thus, estimating the norm of a monomial as $\|z_{j}^{l}\| \leq \|z_{j}\|^{l} \sim m^{m l}$ leads to suboptimal growth in $m$.
To circumvent this issue, we must carefully keep track of the powers in $z$'s in $(\ee^{-\cw(Y)})_{\mathrm{conn}}$ and, where appropriate, take into account the factor $1/z_{j}$ coming from $\nu_{j}$, see Section \ref{sec:optimal-activity} for details.
This refinement still leads us to the problem of estimating the norm of either $\nu_{j}$ or $z_{j} \nu_{j}$ in a way that matches, in terms of dependence in $m$, the exact computation of the normalisation $\tilde{Z}_{\veps,m}$. Achieving this requires further delicate combinatorial analysis, which is carried out in Section \ref{sec:single-point}.
\medskip

\subsection{Proof of the main result from the activity estimates}

In Section \ref{sec:norm_estimate} we prove the following activity bounds with optimal dependence on $m$:
\begin{proposition}[Activity bounds]\label{tree_estimate}
Let $\beta m \leq 1$. Then, there exists a constant $C>1$ independent of $\beta$ and $m$ such that for all $\veps \geq 0$
\begin{equation*}
\begin{split}
      |K(Y)| &\leq C^{|Y|}(\beta m)^{|Y|-1}\sum_{T \; \mathrm{on} \; Y}\prod_{ \{l,l' \}\in E(T)} J_{ll'}  \;,
    \\
 |K_{i_{0},j_{0}}(Y)|   &\leq C^{|Y|}(\beta m)^{|Y|-1}\sum_{T \; \mathrm{on} \; Y}\prod_{ \{l,l' \}\in E(T)} J_{ll'}  \;,
\end{split}
\end{equation*}
where the sum runs over the spanning trees $T$ on $Y$, and where $E(T)$ denotes the set of edges of $T$.
\end{proposition}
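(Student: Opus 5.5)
The plan is to expand the connected part by the Brydges--Battle--Federbush forest/tree formula (Theorem \ref{thm:BBF-formula}, referenced above). Applied to $\ee^{-\cw(Y)}$ with $\cw(Y)=\sum_{\{i,j\}\in E_Y}\cw_{ij}$, it gives
\begin{equation*}
(\ee^{-\cw(Y)})_{\mathrm{conn}}=\sum_{T\ \mathrm{on}\ Y}\int_{[0,1]^{E(T)}}\dd s\,\Big[\prod_{\{l,l'\}\in E(T)}(-\cw_{ll'})\Big]\,\ee^{-\cw(Y;s)} ,
\end{equation*}
where $\cw(Y;s)=\sum_{\{i,j\}} s_{ij}(T)\,\cw_{ij}$ and $s_{ij}\in[0,1]$. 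Each tree edge carries an explicit factor $\beta J_{ll'}$. The tree sum and the $J$ product in the statement come directly from this formula. What remains is to show that, for each fixed tree $T$ and each $s$, the Berezin integral of the remaining factors against $\cv^Y$, divided by $\tilde Z_{\veps,m}^{|Y|}$, is at most $C^{|Y|}m^{|Y|-1}$. For $K_{i_0,j_0}$ the extra monomial $\bar\psi_{i_0,\alpha}\psi_{j_0,\alpha}$ has norm $1$ and is handled the same way.

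\textbf{Splitting the interpolated exponential.} I would not bound $\|\ee^{-\cw(Y;s)}\|$ crudely, since that would create losses of order $m^m$ per site. Instead I use $\cw_{ij}=\beta J_{ij}(z_iz_j-1-\psi_i\cdot\psi_j)$. Since $z_j^2=1+\psi_j\cdot\psi_j$, each $\cw_{ij}$ is a polynomial in even quantities together with odd powers of $z$. The key observation is that $z_iz_j$ factorises. I therefore expand $\ee^{-\cw(Y;s)}$ in powers of $\beta$. Because $\beta m\le1$ and $\sum_j J_{ij}$ is bounded, the number of $\psi$-variables per site is only $2m$, and the nilpotency of $\psi_j\cdot\psi_j$ (power at most $m$) together with $\sum_j J_{ij}<\infty$ should make the total contribution per site $\le C$ after grouping the monomials by site.

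For the tree factors, site $l$ has degree $d_l$ in $T$. It then receives a product of at most $d_l$ factors, each of the form $z_l$, $\psi_l$ or $1$. A factor $\psi_l\cdot\psi_{l'}$ contributes a sum over $m$ internal indices. This is where the factor $m$ per edge arises, giving $m^{|Y|-1}$ over the tree. The term $z_lz_{l'}$ must not pay $\|z\|\sim m^m$. To avoid this, I pair each odd power of $z_l$ with the factor $1/z_l$ inside $\cv_l$, so that only $\cv_l z_l^{k}$ with $k\in\{0,1\}$ times powers of $z_l^2=1+\psi_l\cdot\psi_l$ appear. Using $\|z_l^2\|=1+2m$ naively would cost $m^{d_l}$. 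This is fine only if it is absorbed against the tree edges, so I would attribute one power of $(1+2m)$ to each edge. The ratio $\|(1+\psi\cdot\psi)^{k}\nu_l\|/\tilde Z_{\veps,m}$ then needs to be bounded by $C^{k}$ times the $m$ factors already counted. The factorial $\prod_l d_l!$ coming from the degrees is summed with the $1/d_l!$ from the expansion in the usual way, and the resulting $C^{|Y|}$ is harmless.

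\textbf{Single-site estimates (main obstacle).} The remaining and hardest step is the sharp single-site bound
\begin{equation*}
\frac{\big\|\,\cv_j\,(z_j^2)^{k}z_j^{\sigma}\big\|}{\tilde Z_{\veps,m}}\le C^{1+k}m^{k},\qquad \sigma\in\{0,1\},
\end{equation*}
uniformly in $\veps\ge0$. I would compute $\tilde Z_{\veps,m}=\int\dd\psi_j\,z_j^{-1}\ee^{-\veps(z_j-1)}$ exactly. Writing $z=(1+X)^{1/2}$ with $X=\psi\cdot\psi$, one has $\int\dd\psi\,X^m=c_m\,m!$ with an explicit sign. The top coefficient of $f(X)$ is therefore $f^{(m)}(0)$ times a constant, which expresses $\tilde Z$ as an $m$-th derivative of $x\mapsto(1+x)^{-1/2}\ee^{-\veps((1+x)^{1/2}-1)}$ at $0$. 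This gives positivity and a lower bound of order $\Gamma(m+\tfrac12)/\Gamma(\tfrac12)$. The norm in the numerator is $\sum_p|a_p|\,\|X^p\|$, where $\|X^p\|=2^p\binom{m}{p}p!$ and the $a_p$ are the Taylor coefficients. Since these coefficients are of one sign (or at least controllable) for the relevant generating function, the sum equals the function evaluated at a negative argument. Comparing this with the derivative expression for $\tilde Z$ should yield the ratio $C^{k}m^{k}$. I expect this comparison, done uniformly in $\veps$ and $m$, to be the delicate combinatorial core, and I would carry it out first for $\veps=0$ and then treat $\ee^{-\veps(z-1)}$ as a sign-coherent perturbation.
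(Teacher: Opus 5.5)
Your architecture is the paper's. It has the BBF tree formula and the split of $\cw_{ll'}$ into $-\beta J_{ll'}(1+\psi_l\cdot\psi_{l'})$ (norm at most $3m\beta J_{ll'}$) and the factorised part $\beta J_{ll'}z_lz_{l'}$. All powers of $z_l$ are collected at each site, a leftover odd power is absorbed by the $1/z_l$ in $\cv_l$, and $\|z_l^2\|=1+2m$ is charged as one factor $3m\beta$ per edge. (The paper handles the interpolated exponential by bounding $\|G_T\|\le\ee^{C|Y|\beta m}$ for the $1+\psi\cdot\psi$ part and writing $\ee^{-s\beta J_{ll'}z_lz_{l'}}=1+B^0_{ll'}+\beta z_lz_{l'}B^1_{ll'}$, with $B^0_{ll'},B^1_{ll'}$ polynomials in $z_l^2z_{l'}^2$. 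Nilpotency is not needed, and for a fixed tree no $\prod_l d_l!$ arises, but these points are cosmetic.)

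The genuine gap is the single-site estimate, which carries the whole optimal $m$-dependence and which you only announce. By submultiplicativity only $k=0$ is needed: $\|\cv\|\le C\tilde Z_{\veps,m}$ and $\|z\cv\|=\|\ee^{-\veps(z-1)}\|\le C\tilde Z_{\veps,m}$ uniformly in $\veps\ge0$ and $m$ (single site, index dropped). This is the content of Theorem~\ref{one_point_estimate}.

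The route you sketch for this would not close.
\begin{itemize}
\item Since $\|(\psi\cdot\psi)^p\|=2^p m!/(m-p)!$ is not geometric in $p$, the numerator is not a value of a generating function. Let $g(x):=\ee^{\veps(1-\sqrt{1-x})}/\sqrt{1-x}$; its Taylor coefficients are the absolute values of those of $\ee^{-\veps(z-1)}/z$ in powers of $\psi\cdot\psi$. Then $\|\cv\|=2^m m!\,[x^m]\big(g(x)\ee^{x/2}\big)$ while $\tilde Z_{\veps,m}=2^m m!\,[x^m]g(x)$. So what is required is a uniform bound on a ratio of Taylor coefficients.
\item A lower bound on $\tilde Z_{0,m}$ cannot provide this, because the numerator grows with $\veps$. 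Since $\veps$ is not small, perturbing around $\veps=0$ fails as well.
\item Your sign-coherence remark is the right reduction, but it demands a bound for \emph{every} $\ell$. Write $g=\sum_\ell\frac{\veps^\ell}{\ell!}h_\ell$ with $h_\ell(x)=(1-\sqrt{1-x})^\ell/\sqrt{1-x}$. Numerator and denominator are then positive combinations with the same weights. Hence one needs $[x^m]\big(h_\ell(x)\ee^{x/2}\big)\le C\,[x^m]h_\ell(x)$ uniformly in $0\le\ell\le m$.
\end{itemize}

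The paper obtains this from the explicit coefficients $[x^k]h_\ell=|a_k(\ell)|=2^{\ell-2k}\binom{2k-\ell}{k-\ell}$, computed via the substitution $x=-4x'/(1+x')^2$. These give $|a_{m-j}(\ell)|\le 4^j|a_m(\ell)|$, hence a ratio at most $\sum_j 2^j/j!\le\ee^2$. The odd case additionally uses $|b_k(\ell)|\le|a_k(\ell)|$ for the coefficients of $(1-\sqrt{1+x})^\ell$. Some such uniform-in-$\ell$ comparison of Taylor coefficients is the missing ingredient.
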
 
Note that $\beta m\leq 1$ is chosen for convenience and could be replaced by  $\beta m$ bounded by a fixed constant.
Together with Proposition \ref{expansion_lemma}, the activity bounds allow us to prove Theorem \ref{main_theorem} by standard tree estimates, which we here report for the reader's convenience.
\begin{proof}[Proof of Theorem \ref{main_theorem}]
In the following, we use $C$ to denote generic constants that are independent of $\beta$, $\veps$ and $m$; their value may change  from line to line.

 By Lemma \ref{lemma:reduction} and Proposition \ref{expansion_lemma}, we have $\forall \varepsilon >0,$
\begin{equation}\label{eq:bound2pointxieta}
|\langle  \xi_{i_{0}} \eta_{j_{0}}\rangle^{\Lambda}_{\beta,\veps,m+1}|=
|\big[ \bar{\psi}_{i_{0},\alpha} \psi_{j_{0},\alpha}\big]^{\Lambda}_{\beta,\veps,m}|\leq
\mathcal{A}_{i_{0},j_{0}} \sum_{N\geq 1} \mathcal{B}^{N-1},
\end{equation}
where 
\begin{equation*}
\mathcal{A}_{i_{0},j_{0}}:=\sum_{\substack{Y\subset\Lambda:\\Y \ni j_{0},i_{0}}}|K_{i_{0},j_{0}}(Y)|C^{|Y|} \;,
\qquad \qquad \mathcal{B}:=\sup_{k\in \Lambda}\sum_{\substack{Y\subset \Lambda:\\ Y \ni k, \; |Y| >1}}|K(Y)|C^{|Y|} \;.
\end{equation*}
Note that while there is no constraint on the size of $Y$ for $\mathcal{A}_{i_{0},j_{0}}$, we must have $|Y|>1$ for $\mathcal{B}$, allowing us to gain an extra small factor. The expressions $\mathcal{A}_{i_{0},j_{0}},\mathcal{B}$ are well-defined also for $\varepsilon =0.$
Below, we prove that there exists some constant $C_{0}>1$ such that, for all $\beta>0$, $\veps\geq 0$, $m \in \mathbb{N}$ such that
$C_{0}\beta n=C_{0}\beta (m+1)  < 1,$
it holds
\begin{equation}\label{decay_C3}
     \mathcal{A}_{i_{0},j_{0}} 
       \lesssim
        \begin{cases}
                (C_{0}\beta m)^{|i_0-j_0|} \quad &\text{if }\quad J_{ij}=\ind{|i-j|=1} \;,
                \\[4pt]
             (\beta m)^{\ind{i_0\neq j_0}}    \ee^{-\frac{a}{2} \mathrm{dist}(i_{0},j_{0})}  &\text{if }\quad J_{ij} \lesssim \ee^{-a \mathrm{dist}(i_{0},j_{0})},
               \\[4pt]
              (\beta m)^{\ind{i_0\neq j_0}}   \frac{1}{(1+|i_0-j_0|)^{a}}    &\text{if }\quad J_{ij} \lesssim \frac{1}{(1+|i-j|)^{a}}.
        \end{cases}    
\end{equation}
A simple adaptation of the argument, crucially using that $|Y|\!>\!1,$ gives  $\mathcal{B}\leq C_{0}\beta m<1.$
Inserting these estimates in \eqref{eq:bound2pointxieta} yields the  claims in Theorem \ref{main_theorem} with $n=m+1$.
\vspace{0,2cm}

We prove now \eqref{decay_C3}. Let $l_{0}:=|\{i_{0},j_{0} \}|\in \{1,2\}$ and let $\tilde\Lambda:= \Lambda \setminus \{i_{0},j_{0}\}$. For convenience, we also denote the elements of $\{i_{0},j_{0} \}$ as $x_{1},\dots,x_{l_{0}}$. If $Y \supset \{i_{0},j_{0}\}$ strictly, we label its additional points as
$x_{l_0+1},\dots,x_{N}$, $x_{l_{0}+1},\dots,x_{N} \in \Lambda \setminus \{i_{0},j_{0} \}$ distinct, where we abridge $N:=|Y|$.
Inserting the activity bounds from Proposition \ref{tree_estimate} in $\mathcal{A}_{i_{0},j_{0}}$ yields
\begin{align}
\mathcal{A}_{i_{0},j_{0}}
&\leq \sum_{\substack{Y\subset \Lambda \\ Y \ni j_0,i_0} }\sum_{T \text{ on }Y }
C^{|Y|}(\beta m)^{|Y|-1}\prod_{ \{x,x' \}\in E(T)} J_{xx'} 
\nonumber\\
        &\leq  C\sum_{N \geq l_0}\frac{(C\beta m)^{N-1}}{(N-l_0)!}\sum_{\substack{ x_{l_{0}+1},\dots,x_{N} \in \tilde\Lambda \\ \text{distinct}}}\sum_{\substack{T  \text{ on } \\ \{x_{1},\dots,x_{N}\}}} \prod_{ \{x,x' \}\in E(T)}J_{xx'}
       \nonumber \\
        &\leq  C \sum_{N \geq l_{0}}\frac{(C\beta m)^{N-1}}{|N-2|!}N^{N-2} \sup_{\substack{T  \text{ on } \\  \{1,\dotsc ,N \}}}\sum_{\substack{ x_{l_{0}+1},\dots,x_{N} \in \tilde\Lambda \\ \text{distinct}}}\prod_{ \{l,l' \}\in E(T)} J_{x_{l} x_{l'}} \;,\label{eq:Acalbound}
\end{align}
where, by Cayley's theorem, the number of labelled trees is bounded by $N^{N-2}$. In the last step, we switch summation over the trees and the points $\{x_{l_{0}+1},\dots,x_{N} \}$, in order to take the supremum over the trees; this requires us to adjust the notation, so that the trees are no longer on $\{x_{l_{0}+1},\dots,x_{N} \}$ but on $\{1,\dots,N \}$.

We now discuss the different choices of interaction separately.
\medskip

\underline{\textit{Nearest-neighbour interaction:}} set $J_{ij}=\ind{|i-j|=1}.$

The product $\prod_{ \{l,l' \}\in E(T)}J_{x_{l} x_{l'}}$ is non-zero only if the tree $T$ contains a nearest-neighbour path connecting $i_0$ and $j_0$. Consequently, such a tree must have at least $|i_0-j_0|+1$ vertices,
hence $N\geq |i_0-j_0|+1 \geq l_{0}$.

Moreover, since each edge can be chosen in at most $2d$ ways, we have
\[
       \sum_{\substack{ x_{l_{0}+1},\dots,x_{N} \in \tilde\Lambda  \\ \text{distinct}}}\prod_{ \{l,l' \}\in E(T)} J_{x_{l} x_{l'}}\leq (2d)^N \;.
\]
Therefore,
\begin{equation*}
\mathcal{A}_{j_{0},i_{0}}
   \le  C \sum_{N \geq |i_0-j_0|+1 }
        \frac{(C\beta m)^{N-1}}{|N-2|!}\, N^{N-2} (2d)^{N}
\lesssim (C_{0} \beta m)^{ |i_{0} - j_{0}|} \;.
\end{equation*}

\underline{\textit{Exponentially-decaying interaction:}} set  $J_{ij} \lesssim \ee^{-a \mathrm{dist}(i_{0},j_{0})},$ with $a>0.$

If $\sup_{i\in \mathbb{Z}^{d}}\!\sum_{j\in\mathbb{Z}^{d}} \ee^{-\frac{a}{2} \mathrm{dist}(i,j)}\! <\!\infty,$ 
we can extract a factor $\ee^{-\frac{a}{2}L(T)}$, where $L(T)$ is the length of the tree $T$ in terms of the metric $\mathrm{dist}$.
More precisely, we write
\begin{align*}
\sum_{\substack{ x_{l_{0}+1},\dots,x_{N} \in \tilde\Lambda  \\ \text{distinct}}}\ \prod_{ \{l,l' \}\in E(T)}\hspace{-0,4cm}J_{x_{l} x_{l'}} 
& \leq C^{|E(T)|} \ee^{-\frac{a}{2}L(T)}\hspace{-0,6cm}  \sum_{\substack{ x_{l_{0}+1},\dots,x_{N} \in \tilde\Lambda \\ \text{distinct}}}\
\prod_{ \{l,l' \}\in E(T)}\hspace{-0,2cm}\ee^{-\frac{a}{2} \mathrm{dist}(x_{l},x_{l'})} 
\leq   C^{|E(T)|} \ee^{-\frac{a}{2}L(T)},
\end{align*}
where in the last step we performed the sum over $x_{l_{0}+1},\dots,x_{N}$ by progressively stripping the tree from the leaves till the root
$x_{1}=i_{0}$. Since $L(T) \geq \mathrm{dist}(i_{0},j_{0})$, we obtain
\begin{equation*}
\begin{split}
\mathcal{A}_{i_{0},j_{0}} \leq \ee^{-\frac{a}{2} \mathrm{dist}(i_{0},j_{0})}  C\sum_{N \geq l_{0}} \frac{( C \beta m)^{N-1}}{|N-2|!} N^{N-2} \lesssim (\beta m)^{\ind{i_{0} \neq j_{0}}}\ee^{-\frac{a}{2} \mathrm{dist}(i_{0},j_{0})}.
\end{split}
\end{equation*}
where the small factor is gained only when $i_{0} \neq j_{0}$, since $l_{0} = 2$ in that case.
\medskip

\underline{\textit{Polynomially-decaying interaction:}} set $J_{ij}\! \lesssim \! \frac{1}{(1+|i-j|)^{a}}$, with $a\! >\! d.$

We have $\sup_{i\in \mathbb{Z}^{d}}\!  \sum_{j\in \mathbb{Z}^{d}}\!  J_{ij}\!  \lesssim\! 1$ $\forall a>d.$
Assume now $j_{0} \neq i_{0}.$ Wlog we can set $x_{1}=i_{0}$ and $x_{2}=j_{0}$. We denote by $p(T)$ the set of edges of the unique path on $T$ (on $\{1,\dots,N \}$) connecting $1$ to $2$. Let $v(T)$ be the set of vertices in $T$ belonging to such path and different from $1$ and $2$. We begin by stripping the tree of all the edges not on the path $p(T)$, and obtain
\begin{equation*}
\sum_{\substack{ x_{n_{0}+1},\dots,x_{N} \in \tilde\Lambda   \\ \text{distinct}}}\ \prod_{ \{l,l' \}\in E(T)} J_{x_l \,x_{l'}} \leq C^{N} \sum_{\{ x_{l}\}_{l \in v(T)}} \prod_{\{l,l'\}\in p(T)} J_{x_{l}\,x_{l'}}, 
\end{equation*}
where we collect a factor  $C$ for every summation $\sup _{y}\sum_{l \notin v(T)} J_{x_{l}y}$.
On the remaining edges along $p(T)$ we want to extract additional decay. For clarity, we now set $y_0:=i_0$, $y_{|v(T)|+1}:=j_0$. We then relabel the points in $v(T)$ by $1,\dots,|v(T)|$, so that the edges in the path $p(T)$ are now of the form $\{l,l+1 \}$ for $l = 0,\dots,|v(T)|\}$. For brevity, we also abridge $f(x):=\frac{1}{(1+|x|)^{a}}$ for $x \in \mathbb{Z}^{d}$. We can write
\begin{equation}\label{eq:pol-step-1}
\begin{split}
\sum_{\{ x_{l}\}_{l \in v(T)}} \prod_{\{l,l'\}\in p(T)} J_{x_{l},x_{l'}} & = \sum_{\{ y_{l}\}_{l=1,\dots,|v(T)|}} J_{y_{0}y_{1}} \prod_{l=1}^{|v(T)|} J_{y_{l} J_{l+1}}
\\
& \leq C^{N}\sum_{\{ y_{l}\}_{l=1,\dots,|v(T)|}} f(y_{0}-y_{1}) \prod_{l=1}^{|v(T)|} f(y_{l}-y_{l+1}) \;.
\end{split}
\end{equation}
Using $( 1+|x-z|)^{a}\lesssim_{a}  ( 1+|x-y|)^{a}+ (1+|y-z|)^{a},$
we argue
\begin{align*}
\sum_{y}f(x-y) f(y-z) %
&=\sum_{y}\frac{1}{  ( 1+|x-y|)^{a} (1+|y-z|)^{a}}\\
&=
\frac{1}{( 1+|x-z|)^{a}}\sum_{y}\frac{( 1+|x-z|)^{a} }{  ( 1+|x-y|)^{a} (1+|y-z|)^{a}}\\
&\lesssim_{a} \frac{1}{( 1+|x-z|)^{a}} \sum_{y}\frac{1 }{  (1+|y|)^{a}}\lesssim_{a} f(x-z).
\end{align*}

We now iteratively apply this bound on the right-hand side of \eqref{eq:pol-step-1} and obtain
\begin{equation*}
\sum_{\substack{ x_{n_{0}+1},\dots,x_{N} \in \tilde\Lambda   \\ \text{distinct}}}\ \prod_{ \{l,l' \}\in E(T)} J_{x_l \,x_{l'}} \leq C^{N} \sum_{\{ x_{l}\}_{l \in v(T)}} \prod_{\{l,l'\}\in p(T)} J_{x_{l}\,x_{l'}} \lesssim_{a} C^{N}  \frac{1}{(1+|i_{0}-j_{0}|)^{a}} .
\end{equation*}
The proof in the case $i_{0}=j_{0}$ is simpler since $p (T)$ is the empty set.  
The result now follows by replacing this estimate in \eqref{eq:Acalbound} and arguing as in the exponential case.
\end{proof}

\section{Norm estimates}
\label{sec:norm_estimate}

\subsection{Single-point model}
\label{sec:single-point}

In this section, we  remove reference to the spatial points in the set $\Lambda \subset \mathbb{Z}^{d}$ and consider computations
with the Grassmann variables $\{\cpsi_{\alpha}, \psi_{\alpha} \}_{\alpha =1,\dots,m}$ only. In particular, recall that
$\psi \cdot \psi = 2\sum_{\alpha = 1}^{m} \cpsi_{\alpha} \psi_{\alpha}$ and that $z = \sqrt{1+ \psi \cdot \psi}$,
compare with \eqref{eq:symmetric-product}  and above. 
With this notation, together with \eqref{eq:def-W-nu} and \eqref{eq:fact-notation},  the single-point partition function
\eqref{eq:single-site-normalisation},  is 
\begin{equation*}
\tilde{Z}_{\veps,m}  = \int \dd \nu =
 \int \dd \psi \ \frac{\ee^{-\varepsilon (z-1)}}{z}=
\bigg[\prod_{\alpha =1}^{m} \partial_{\cpsi_{\alpha}}\partial_{\psi_{\alpha}} \bigg] \ \frac{\ee^{-\varepsilon (z-1)}}{z}.
\end{equation*} 
The main result of this section is the following estimate. 
\begin{theorem}\label{one_point_estimate}
There exists a constant $C>1$ such that, for all $\veps \geq 0$ and all $m\in \mathbb{N}$,
\begin{equation}\label{one_point_one}
\max \left\{  \frac{ \left\| \ee^{-\varepsilon (z-1)}\right\|  }{\big | \tilde{Z}_{\veps,m} \big |}, \,
 \frac{ \left\| \frac{1}{z} \ee^{-\varepsilon (z-1)} \right\|  }{ \big |\tilde{Z}_{\veps,m} \big |}
\right \}\leq  C
\end{equation}
\end{theorem}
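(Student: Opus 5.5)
The plan is to reduce both quotients in \eqref{one_point_one} to explicit one-variable sums, and then to compare the numerator with the denominator term by term, exploiting a sign structure that rules out cancellations in $\tilde Z_{\veps,m}$.

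\emph{Step 1: reduction to Taylor coefficients.} Set $p_\alpha:=\cpsi_\alpha\psi_\alpha$. These are commuting and nilpotent, and $s:=\psi\cdot\psi=2\sum_\alpha p_\alpha$. Hence $s^k=2^k k!\,e_k(p)$, where $e_k$ is the $k$-th elementary symmetric polynomial. For any function $h$ analytic near $0$,
\begin{equation*}
h(s)=\sum_{k=0}^m h^{(k)}(0)\,2^k e_k(p) \;.
\end{equation*}
The monomials of distinct $e_k$ are distinct basis elements of the Grassmann algebra, and $e_k$ has $\binom{m}{k}$ of them, each with coefficient $\pm1$. Therefore
\begin{equation*}
\|h(s)\|=\sum_{k}\binom{m}{k}2^k|h^{(k)}(0)| \;,
\end{equation*}
while the Berezin integral picks out the top term, so $\int\dd\psi\,h(s)=\pm 2^m h^{(m)}(0)$. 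I apply this with $g(s)=\ee^{-\veps(w-1)}/w$ and $f(s)=\ee^{-\veps(w-1)}$, where $w=\sqrt{1+s}$. Writing $P_k:=(-1)^k2^kg^{(k)}(0)$ and $Q_k:=(-1)^k2^kf^{(k)}(0)$, both quotients in \eqref{one_point_one} become
\begin{equation*}
\sum_k\binom mk\frac{|P_k|}{|P_m|} \qquad\text{and}\qquad \sum_k\binom mk\frac{|Q_k|}{|P_m|} \;.
\end{equation*}

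\emph{Step 2: positivity via a state expansion.} Since $\tfrac{d}{ds}=\tfrac{1}{2w}\tfrac{d}{dw}$, we have $-2\tfrac{d}{ds}=-\tfrac1w\tfrac{d}{dw}=:\mathcal D$. Introduce the "states" $\phi_j:=\ee^{-\veps(w-1)}w^{-j}$. A one-line computation gives
\begin{equation*}
\mathcal D\phi_j=\veps\,\phi_{j+1}+j\,\phi_{j+2} \;,
\end{equation*}
with nonnegative coefficients. Iterating from $\phi_1$ (for $g$) or from $\phi_0$ (for $f$) expresses $\mathcal D^k\phi_{j_0}$ as a nonnegative combination of states, and every state equals $1$ at $w=1$, i.e.\ at $s=0$. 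Hence $P_k,Q_k\ge0$ are polynomials in $\veps$ with nonnegative coefficients. In particular $\tilde Z_{\veps,m}=\pm 2^m g^{(m)}(0)$ is nonzero, indeed $|\tilde Z_{\veps,m}|=P_m\ge P_m|_{\veps=0}=(2m-1)!!>0$.

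\emph{Step 3: growth of $P_k$ and domination of $Q_k$.} Each application of $\mathcal D$ raises the index by $1$ or $2$. Starting from $\phi_1$, after $k$ steps every state present has index $j\ge k+1$. Dropping the $\veps$-branch and keeping only the $j\phi_{j+2}$ branch then yields
\begin{equation*}
P_{k+1}\ge(k+1)P_k \qquad\Longrightarrow\qquad P_m\ge\frac{m!}{k!}P_k \;.
\end{equation*}
Consequently
\begin{equation*}
\sum_k\binom mk\frac{P_k}{P_m}\le\sum_k\frac{1}{(m-k)!}\le \ee \;.
\end{equation*}
For $Q_k$ I will prove $Q_k\le P_k$ by a coupling argument on the branching. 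The transition coefficients $\veps$ and $j$ are nondecreasing in $j$, so by induction on $k$ (with $j$ running in parallel) the total weight of $\mathcal D^k\phi_j$ is nondecreasing in the starting index $j$. Applying this with starting indices $0$ and $1$ gives $Q_k\le P_k$, and the second sum is then also bounded by $\ee$. This establishes \eqref{one_point_one} with $C=\ee$, uniformly in $\veps\ge0$ and $m$.

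The main obstacle I expect is Step 2/3's monotonicity in the starting index. The naive induction, comparing $\mathcal D^k\phi_j$ with $\mathcal D^k\phi_{j+1}$ term by term, requires matching the $\veps$-branches and the $j$-branches simultaneously. I would first try the induction hypothesis "$W_k(j):=\mathcal D^k\phi_j|_{w=1}$ is nondecreasing in $j$ for every $k$", using the recursion
\begin{equation*}
W_{k+1}(j)=\veps\,W_k(j+1)+j\,W_k(j+2) \;,
\end{equation*}
in which both the coefficients and, by hypothesis, the $W_k$-values are nondecreasing in $j$. This closes the induction directly.
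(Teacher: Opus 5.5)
Your proof is correct, and it gives the explicit constant $C=\ee$. You share the paper's starting reduction: Taylor expansion in $s=\psi\cdot\psi$, the exact count of monomials in $(\psi\cdot\psi)^k$, and reading off $\tilde Z_{\veps,m}$ from the top coefficient. The core estimate is obtained by a genuinely different route.

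The paper proceeds in four steps:
\begin{itemize}
\item it first expands in powers of $\veps$;
\item it computes the Taylor coefficients $a_k(\ell)$, $b_k(\ell)$ of $(1-\sqrt{1+x})^\ell/\sqrt{1+x}$ and $(1-\sqrt{1+x})^\ell$ in closed form, using a contour integral and the substitution $x=-4x'/(1+x')^2$;
\item this yields the explicit formula $\tilde Z_{\veps,m}=2^m m!\sum_\ell \frac{\veps^\ell}{\ell!}|a_m(\ell)|$;
\item it bounds the norm by the triangle inequality over $\ell$, and controls the resulting weighted average of the ratios $R_\ell$ by comparing binomial coefficients, which gives $R_\ell\le\sum_j 2^j/j!\le \ee^2$.
\end{itemize}

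You never expand in $\veps$ and never compute a coefficient. The single recursion $\mathcal D\phi_j=\veps\phi_{j+1}+j\phi_{j+2}$ delivers everything at once:
\begin{itemize}
\item the sign structure $P_k,Q_k\ge 0$, which makes your norm formula $\sum_k\binom mk 2^k|h^{(k)}(0)|$ exact and makes $\tilde Z_{\veps,m}>0$ structural;
\item the growth bound $P_{k+1}\ge (k+1)P_k$, which directly absorbs the weights $\binom mk$;
\item via monotonicity of $W_k(j)$ in the starting index, the comparison $Q_k\le P_k$, which replaces the paper's $|b_k(\ell)|\le|a_k(\ell)|$.
\end{itemize}
Your induction closes as you wrote it, since $W_k\ge 0$ comes for free from the nonnegative branching.

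Your argument is shorter, purely real-variable, and slightly sharper ($\ee$ instead of $\ee^2$). The paper's argument buys explicit closed forms, in particular the exact value of the single-site partition function. That value is a useful consistency check on your recursion: for instance $P_1=1+\veps$ and $P_2=\veps^2+3\veps+3$ agree with $2^m m!\sum_\ell \frac{\veps^\ell}{\ell!}|a_m(\ell)|$.
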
 

Note that, by Lemma \ref{single_site_partition_function_value} below, $ \tilde{Z}_{\veps,m}>0$ $\forall \varepsilon\geq 0$ hence the ratios above are well-defined.
As explained in the following section, the above estimates are crucial for proving Proposition \ref{tree_estimate}.
In particular, although these bounds depend on $m$ and $\veps$, this dependence is optimal, as it can be controlled
by the single-site partition function $\tilde{Z}_{\veps,m}$. 
\begin{proof}
To prove the bounds we expand the numerators into polynomials in $\veps$ and use an explicit formula for the denominator.
Precisely we have
\begin{equation}\label{eq:expansion-epx}
 \ee^{-\varepsilon (z-1)}= \sum_{\ell=0}^m \frac{\veps^\ell}{\ell!}Q_\ell(\psi\cdot \psi)  ,\qquad
 \frac{\ee^{-\varepsilon (z-1)}}{z}= \sum_{\ell=0}^m \frac{\veps^\ell}{\ell!}P_\ell(\psi \cdot \psi).
\end{equation}  
where the functions $P_{\ell}$ and $Q_{\ell}$ are analytic in the complex open ball $B_{1} (0)$
\begin{equation}\label{eq:Taylor-coeff}
P_\ell(x)  =\frac{\left (1-\sqrt{1+x}\right )^\ell}{\sqrt{1+x}} =: \sum_{k=0}^{\infty}a_k(\ell)x^k,\qquad
Q_\ell(x) =\left (1-\sqrt{1+x}\right )^\ell =:\sum_{k=0}^{\infty}b_k(\ell)x^k .
\end{equation}

Hence,  by the triangle inequality
\begin{equation}\label{eq:norm-exp}
 \left \|  \frac{\ee^{-\varepsilon (z-1)}}{z}\right \|  \leq \sum_{\ell=0}^m\frac{\veps^\ell}{\ell!}\sum_{k=0}^m|a_k(\ell)| \|(\psi\cdot \psi)^k\|,\qquad \left \|  \ee^{-\varepsilon (z-1)}\right \| \leq \sum_{\ell=0}^m\frac{\veps^\ell}{\ell!}\sum_{k=0}^m|b_\ell(n)|  \|(\psi\cdot \psi)^k\| \;.
\end{equation}
where all sums stop at the order $m$ since $(\psi\cdot \psi)^{m+1}=0.$
We show in Lemma \ref{single_site_partition_function_value} below  the bound $|b_{k}(\ell)| \leq |a_{k}(\ell)|$, hence it suffices to control the right-hand side of the
first inequality in \eqref{eq:Taylor-coeff}.

Now, instead of using the crude bound $ \|(\psi\cdot \psi)^k\|\leq \|\psi\cdot \psi\|^k, $ we compute $\|(\psi\cdot \psi)^k\|$ \emph{exactly} by expanding 
   \[
    (\psi\cdot \psi)^k=\left(2\sum_{\alpha=1}^m\cpsi_{\alpha}\psi_{\alpha}\right)^k = 2^{k} \sum_{\alpha_{1},...,\alpha_{k}} \prod_{l=1}^{k} \cpsi_{\alpha_{l}}\psi_{\alpha_{l}}  \;.
   \]
Each of the $k$ factors have $m$ summands, and, due to the anticommutativity, only contributions with $k$ distinct $\alpha$'s survive. There are $\binom{m}{k}$ ways to choose such indices, and for each choice the factors can be permuted in $k!$ ways, yielding identical monomials, hence
   \[
        \|(\psi\cdot \psi)^k\|=2^k \frac{m!}{(m-k)!}  \;.
    \]
In Lemma \ref{single_site_partition_function_value} below we show the identity $ \tilde{Z}_{\veps,m}= 2^m m! \sum_{l=0}^{m}\frac{\veps^l}{l!} |a_m(l)|.$ Putting all this together and using $a_k(l)=0$ $\forall k<\ell,$ we obtain
\begin{equation*}
    \frac{\left \| \frac{\exp(-\veps(z-1))}{z} \right \|}{\big |\tilde{Z}_{\veps,m}\big |} \leq \frac{ \sum_{\ell=0}^m\frac{\veps^\ell}{\ell!}\sum_{k=\ell}^m \ |a_k(\ell)| \ 2^{k-m} \frac{1}{(m-k)!}}{\sum_{\ell=0}^m\frac{\veps^\ell}{\ell!} \ |a_m(\ell)|  }
 = \frac{\sum_{\ell=0}^m w_\ell R_\ell}{\sum_{\ell=0}^m w_\ell} \;,
\end{equation*}
where we introduced
\begin{equation}\label{def:R}
R_{\ell}:=\sum_{k=\ell}^m\frac{|a_k(\ell)|}{|a_m(\ell)|}\frac{2^{k-m}}{(m-k)!} \qquad  \qquad w_{\ell}:=\frac{\veps^\ell}{\ell!} \ |a_m(\ell)|.
\end{equation}
Note that $R_{\ell}$ is well-defined since $|a_m(\ell)|>0$ $\forall \ell=0,\dotsc ,m.$
The claim now follows by proving  $R_\ell\leq C$ for some constant $C>1.$ To this end, substituting the expressions \eqref{eq:coefficient} from Lemma \ref{coefficients} below for $|a_k(\ell)|$ and $|a_m(\ell)|$ into \eqref{def:R} yields
\begin{equation*}
R_\ell =\sum_{k=\ell}^m \frac{2^{(m-k)}}{(m-k)!}\frac{\binom{2k-\ell}{k-\ell}}{\binom{2m-\ell}{m-\ell}} =  \sum_{j=0}^{m-\ell}\frac{2^{j}}{j!}\frac{\binom{2m-\ell-2j}{m-\ell-j}}{\binom{2m-\ell}{m-\ell}},
\end{equation*}
where in the last step we performed the change of variables $m-k=j$. But  
\begin{align*}
        \frac{\binom{2m-\ell-2j}{m-\ell-j}}{\binom{2m-\ell}{m-\ell}}& \leq \frac{\binom{2m-\ell}{m-\ell-j}}{\binom{2m-\ell}{m-\ell}}=\frac{(m-\ell)!}{(m-\ell-j)!}\frac{m!}{(m+j)!}=\prod_{i=0}^{j-1}\frac{m-\ell-i}{m+i+1}\leq 1 \;,
\end{align*}
implying the uniform boundedness of $R_{\ell}$. Here we used the convention $\prod_{i\in \emptyset}:=1.$
\end{proof}

The next lemma gives an explicit formula for the coefficients $a_{k} (\ell), b_k(\ell)$  of the Taylor polynomials of $P_\ell$ and
$Q_\ell$. 
\begin{lemma}\label{coefficients}
With the notation of \eqref{eq:Taylor-coeff}, we have
\begin{gather}\label{eq:coefficient}
    a_k(\ell)=\ind{k \geq \ell}(-1)^{k}\ 2^{\ell-2k} \ \binom{2k-\ell}{k-\ell} \;,
    \qquad \quad 
    b_k(\ell)=  \ind{k \geq \ell} (-1)^{k}\ 2^{\ell-2k} \ \frac{\ell}{k}\binom{2k-\ell-1}{k-\ell} \;.
\end{gather}
Moreover it holds  $ |b_k(\ell)| \leq  |a_k(\ell)|$ $\forall k,\ell$.
\end{lemma}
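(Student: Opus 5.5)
We need the Taylor coefficients of $P_\ell(x)=(1-\sqrt{1+x})^\ell/\sqrt{1+x}$ and $Q_\ell(x)=(1-\sqrt{1+x})^\ell$. The plan is to substitute $w:=1-\sqrt{1+x}$, so that $x=w^2-2w=-w(2-w)$, and to extract coefficients with the Lagrange inversion formula. Set $u:=-x$, which gives $u=w(2-w)$, i.e.\ $w=u\,\phi(w)$ with $\phi(w)=1/(2-w)$. The Lagrange--Bürmann formula in its ``derivative'' form states that for any $H$,
\begin{equation*}
[u^k]\,\frac{H(w)}{1-u\phi'(w)} = [w^k]\,H(w)\phi(w)^k .
\end{equation*}
Since $\frac{dx}{dw}=2w-2=-2\sqrt{1+x}$, we have $\frac{du}{dw}=2\sqrt{1+x}$. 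Also $u\phi'(w)=w/(2-w)$, so $1-u\phi'(w)=2(1-w)/(2-w)=2\sqrt{1+x}\,\phi(w)$. Hence $P_\ell=w^\ell/\sqrt{1+x}=2w^\ell\phi(w)/(1-u\phi'(w))$, and applying the formula with $H=2w^\ell\phi$ gives
\begin{equation*}
[u^k]P_\ell = 2\,[w^{k-\ell}]\,(2-w)^{-(k+1)} = 2\cdot 2^{-(k+1)}\,2^{-(k-\ell)}\binom{2k-\ell}{k-\ell} = 2^{\ell-2k}\binom{2k-\ell}{k-\ell}.
\end{equation*}
Returning to $x=-u$ introduces the sign $(-1)^k$, and the coefficient vanishes for $k<\ell$. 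This is exactly the claimed $a_k(\ell)$.

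For $Q_\ell=w^\ell$ I use the standard (non-derivative) Lagrange formula: for $k\ge1$, $[u^k]H(w)=\frac1k[w^{k-1}]H'(w)\phi(w)^k$. With $H=w^\ell$ this gives
\begin{equation*}
[u^k]Q_\ell=\frac{\ell}{k}\,[w^{k-\ell}]\,(2-w)^{-k}=\frac{\ell}{k}\,2^{-k}2^{-(k-\ell)}\binom{2k-\ell-1}{k-\ell}.
\end{equation*}
Again a factor $(-1)^k$ appears, and the coefficient vanishes for $k<\ell$. The case $k=\ell=0$, where $Q_0=1$, is handled separately using the convention $\frac{\ell}{k}\binom{\cdot}{\cdot}=1$. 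As a sanity check, for $\ell=0$ the formula for $a_k(0)$ reproduces the known series $(1+x)^{-1/2}=\sum_k(-1)^k4^{-k}\binom{2k}{k}x^k$.

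The inequality follows by comparing magnitudes:
\begin{equation*}
\frac{|b_k(\ell)|}{|a_k(\ell)|}=\frac{\ell}{k}\,\frac{\binom{2k-\ell-1}{k-\ell}}{\binom{2k-\ell}{k-\ell}}=\frac{\ell}{k}\cdot\frac{k}{2k-\ell}=\frac{\ell}{2k-\ell}\le1
\end{equation*}
for $k\ge\ell$, with the degenerate cases ($\ell=0$, where $b_k(0)=0$ for $k\geq1$, and $k=\ell=0$) checked directly. The main care point is getting the Lagrange inversion in the derivative form right, in particular verifying the identity $1-u\phi'(w)=2\sqrt{1+x}\,\phi(w)$; all remaining steps are routine binomial expansions.
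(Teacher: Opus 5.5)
Your proof is correct, but it takes a different route from the paper.

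The paper works analytically. It writes $a_k(\ell)$ as a Cauchy contour integral and substitutes $x=-4x'/(1+x')^2$, a local biholomorphism chosen so that $\sqrt{1+x}=(1-x')/(1+x')$ and the Jacobian are both rational. The integrand then collapses to $(-1)^k 2^{\ell-2k}(x')^{\ell-k-1}(1+x')^{2k-\ell}$, so the coefficient is a single residue, read off as one ordinary binomial coefficient. The paper dismisses the case of $b_k(\ell)$ as ``the same computations''.

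You instead work with formal power series. You parametrise by $w=1-\sqrt{1+x}$ and invoke Lagrange--B\"urmann inversion: the derivative form for $P_\ell$, after checking $1-u\phi'(w)=2\sqrt{1+x}\,\phi(w)$, and the standard form for $Q_\ell$. This reduces everything to negative-binomial expansions of $(2-w)^{-(k+1)}$ and $(2-w)^{-k}$.

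What each route buys:
\begin{itemize}
\item Your route avoids the contour and biholomorphism bookkeeping and treats $b_k(\ell)$ explicitly. The price is relying on the inversion theorem as a black box. That theorem is usually proved by exactly this kind of residue change of variables; in fact the paper's variable is $x'=w/(2-w)=u\phi'(w)$.
\item Your comparison step is slightly sharper. You compute the exact ratio $|b_k(\ell)|/|a_k(\ell)|=\ell/(2k-\ell)$, whereas the paper bounds $\binom{2k-\ell-1}{k-\ell}\le\binom{2k-\ell}{k-\ell}$ and $\ell/k\le 1$ separately.
\item You correctly flag that the closed form for $b_0(0)$ needs a convention (it reads $\tfrac{0}{0}\binom{-1}{0}$). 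The paper's statement glosses over this.
\end{itemize}
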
 
\begin{proof}
We prove the identity for $a_k(\ell)$; the corresponding claim for $b_k(\ell)$ can be proven with the same computations.

Since the constant coefficient of the Taylor polynomial of $\left (1-\sqrt{1+x}\right )$ vanishes,  we have $a_k(\ell)=0$  $\forall k<\ell.$ For $k\geq \ell$ our proof relies on the complex-integral representation of the derivatives.
In fact, $P_{\ell}(x)$ is analytic in the open unit ball $B_{1} (0),$ hence
\[
 a_k(\ell)=P_{\ell}^{(k)} (0)=\frac{1}{2\pi \ii}\oint_{\Gamma} \frac{P_\ell(x)}{x^{k+1}}\dd x=
\frac{1}{2\pi \ii}\oint_{\Gamma} 
\frac{\left (1-\sqrt{1+x}\right )^\ell}{\sqrt{1+x}} \frac{1}{x^{k+1}}\dd x,
\]
where $\Gamma$ is a counterclockwise complex contour around $0$, contained in $B_{1} (0).$ To simplify the integral
we perform the change of variables
\[
x=\phi(x'):=-\frac{4x'}{(1+x')^2} \ .
\]
This function is analytic in the open unit ball with derivative given by
\[
\phi'(x')=-4\frac{(1-x')}{(1+x')^3} \ .
\]
In particular, $\phi'(0)=-4\neq 0$ and hence there exists $0<r<1$ such that $\phi: B_{r} (0)\to B_{1} (0)$ is a biholomorphism
and  $\gamma:= \phi^{-1}(\Gamma)$ is also a counterclockwise complex contour around $0$, in $ B_{r} (0)$.
We compute $\sqrt{1+x}=\sqrt{1+\phi(x')} = (1-x')/(1+x')$  and
\begin{align*}
P_\ell(x)&=P_\ell(\phi(x'))=2^\ell\frac{(x')^\ell(1+x')^{1-\ell}}{1-x'}\\
  a_k(\ell) &=\frac{(-1)^k 2^{\ell-2k}}{2\pi \ii}\oint_{\gamma} \frac{1}{(x')^{1+k-\ell}}(1+x')^{2k-\ell} \dd x' \;.
\end{align*}
    The integral is evaluated using Cauchy theorem  for  $k\geq \ell,$ whence the  identity \eqref{eq:coefficient}
    for $a_{l} (\ell).$

Finally we show the inequality  $ |b_k(\ell)| \leq  |a_k(\ell)|$. For $k < \ell$ we have $ b_k(\ell)=a_k(\ell)=0 $. For $k \geq  \ell$ we argue
    \begin{align*}
        |b_k(\ell)|&= 2^{\ell-2k} \ \frac{\ell}{k}\binom{2k-\ell-1}{k-\ell} \leq 2^{\ell-2k} \  \frac{\ell}{k} \binom{2k-\ell}{k-\ell} \leq 2^{\ell-2k} \ \binom{2k-\ell}{k-\ell}=|a_k(\ell)|.
    \end{align*}
\end{proof}   
The next lemma gives an explicit sum representation of the single-site partition function.
\begin{lemma}\label{single_site_partition_function_value}
The following identity holds true $\forall \varepsilon\geq 0$
 \begin{equation}\label{eq:single_site_partition_function_value}
        \tilde{Z}_{\veps,m}= 2^m m! \sum_{\ell=0}^{m}\frac{\veps^\ell}{l!} |a_m(\ell)| \ 
    \end{equation}
\end{lemma}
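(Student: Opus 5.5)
We expand the integrand and then read off the top Grassmann coefficient. By \eqref{eq:expansion-epx} we have
\[
\frac{\ee^{-\veps(z-1)}}{z}=\sum_{\ell=0}^m \frac{\veps^\ell}{\ell!}\sum_{k=\ell}^m a_k(\ell)(\psi\cdot\psi)^k ,
\]
and the Berezin integral $\int\dd\psi$ only sees the top-degree monomial $\prod_{\alpha=1}^m \cpsi_\alpha\psi_\alpha$ (up to the ordering convention). Hence only the terms with $k=m$ survive, and
\[
\tilde Z_{\veps,m}=\sum_{\ell=0}^m \frac{\veps^\ell}{\ell!}\,a_m(\ell)\int\dd\psi\,(\psi\cdot\psi)^m .
\]

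Next we evaluate the top integral exactly. As in the norm computation in the proof of Theorem \ref{one_point_estimate}, we have
\[
(\psi\cdot\psi)^m = 2^m m!\,\prod_{\alpha=1}^m \cpsi_\alpha\psi_\alpha .
\]
The pairs $\cpsi_\alpha\psi_\alpha$ are even, so they commute and the $m!$ orderings coincide. Applying $\prod_\alpha \partial_{\cpsi_\alpha}\partial_{\psi_\alpha}$ gives $\partial_{\cpsi}\partial_{\psi}\,\cpsi\psi=-\partial_{\cpsi}\partial_\psi \psi\cpsi=-1$ per pair. More carefully, one fixes the sign convention: $\partial_{\psi}(\cpsi\psi)=-\cpsi$, so $\partial_{\cpsi}\partial_\psi(\cpsi\psi)=-1$. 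Thus
\[
\int\dd\psi\,(\psi\cdot\psi)^m=(-1)^m 2^m m! .
\]
This sign bookkeeping is the only delicate point, and I would double-check it against the $m=1$ case by direct computation.

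Finally we combine this with Lemma \ref{coefficients}. There $a_m(\ell)=(-1)^m 2^{\ell-2m}\binom{2m-\ell}{m-\ell}$, so $a_m(\ell)=(-1)^m|a_m(\ell)|$ for every $\ell\le m$. Hence
\[
a_m(\ell)(-1)^m 2^m m! = 2^m m!\,|a_m(\ell)| ,
\]
which gives \eqref{eq:single_site_partition_function_value}. Positivity $\tilde Z_{\veps,m}>0$ for $\veps\ge0$ follows at once, since the $\ell=0$ term is strictly positive and all the others are nonnegative. This is what justifies the ratios used in Theorem \ref{one_point_estimate}.
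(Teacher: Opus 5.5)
Your proof is correct and follows the paper's argument: expand in $\veps$ through the functions $P_\ell$, keep only the top coefficient $a_m(\ell)$ under the Berezin integral, evaluate $\int \dd\psi\,(\psi\cdot\psi)^m$, and use the sign $a_m(\ell)=(-1)^m|a_m(\ell)|$ from Lemma \ref{coefficients}. Your value $(-1)^m 2^m m!$ for the top integral is the right one; the paper's intermediate display writes $(-1)^m m!$, but the $2^m$ is needed for the stated final formula. Your sign convention agrees with the direct $m=1$ computation $\tilde Z_{\veps,1}=1+\veps$.
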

where $a_{m} (\ell)$ are the coefficients introduced in \eqref{eq:coefficient}. In particular $  \tilde{Z}_{\veps,m}>0$
$\forall \varepsilon\geq 0$.
\begin{proof}
We expand $\frac{\exp(-\veps(z-1))}{z}$ in powers of $\veps$ as in \eqref{eq:expansion-epx}; the Grassmann integral projects the polynomial $P_{\ell}(\psi \cdot \psi)$ on the top monomial of degree $2m$, so that 
\[
\tilde{Z}_{\veps,m} =\sum_{\ell=0}^m \frac{\veps^\ell}{\ell!}
\int \dd \psi \ 
P_{\ell}(\psi \cdot \psi) = \sum_{\ell=0}^m \frac{\veps^\ell}{\ell!} a_m(\ell) \int \dd \psi \ 
(\psi \cdot \psi)^m = \sum_{\ell=0}^m \frac{\veps^\ell}{\ell!} a_m(\ell) (-1)^m m! \;,
\]
where in the last step we used the anticommutativity of the Grassmann variables. Finally using that $a_m(\ell) (-1)^m=|a_m(\ell)|$, we get the identity \eqref{eq:single_site_partition_function_value}.

The bound  $  \tilde{Z}_{\veps,m}>0$
$\forall \varepsilon\geq 0$ follows from $|a_{m} (0) |= 2^{-2m} \ \binom{2m}{m} >0.$
\end{proof} 

\subsection{Proof of the optimal activity estimates}
\label{sec:optimal-activity}

This section is devoted to proving Proposition \ref{tree_estimate}.

Recall the definition of $K(Y)$ and of $K_{i_{0},j_{0}}(Y)$ from \eqref{eq:pol-external-field} and of the connected component
 $ (\ee^{-\cw(Y)})_{\mathrm{conn}}$ from \eqref{eq:expansion0} and  \eqref{eq:expansion}.
While the recusive  definition of $ (\ee^{-\cw(Y)})_{\mathrm{conn}}$  is conceptually natural,
it is not convenient for explicit computations.
We shall instead rely on the following well-known representation. 
\begin{theorem}[Battle--Brydges--Federbusch Formula]\label{thm:BBF-formula}
With the notation set above
\begin{equation*}
(\ee^{-\cw(Y)})_{\mathrm{conn}} = \sum_{T \; \mathrm{on} \; Y} \int \dd p^{T}(s) \bigg[\prod_{\{i, j\}\in E(T)}-\cw_{ij}\bigg]
\exp\bigg(-\sum_{\{i,j \} \in E_{Y} } s_{ij}\cw_{ij}\bigg) \;,
\end{equation*}
where the sum is over the spanning trees $T$ on $Y$, $E(T)$ denotes the set of the edges of $T$, and $\dd p^{T}$ is some probability measure on $ [0,1]^{E_{Y}}$. 
\end{theorem}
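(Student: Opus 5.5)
The plan is to derive the formula from the Brydges--Kennedy (Abdesselam--Rivasseau) forest interpolation formula, and then identify the tree term with $(\ee^{-\cw(Y)})_{\mathrm{conn}}$ using the uniqueness of the recursive definition \eqref{eq:expansion0}--\eqref{eq:expansion}.

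The first point to record is that all $\cw_{ij}$ are even elements of the Grassmann algebra. They therefore commute with each other and generate a finite-dimensional commutative algebra whose non-constant part is nilpotent. For $s=(s_{ij})\in[0,1]^{E_Y}$ set $F_Y(s):=\exp\big(-\sum_{\{i,j\}\in E_Y}s_{ij}\cw_{ij}\big)$, defined by its (finite) power series. Ordinary one-variable calculus then applies verbatim: $F_Y$ is polynomial in $s$, and $\partial_{s_{ij}}F_Y=-\cw_{ij}F_Y$.

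Next I apply the forest formula for a smooth function of the edge variables on the complete graph on $Y$:
\begin{equation*}
F_Y(\mathbf{1})=\sum_{\mathcal{F}\ \mathrm{forest\ on}\ Y}\int_{[0,1]^{E(\mathcal{F})}}\prod_{e\in E(\mathcal{F})}\dd h_e\;\Big[\prod_{e\in E(\mathcal{F})}\partial_{s_e}\Big]F_Y\big(s^{\mathcal{F}}(h)\big).
\end{equation*}
Here $s^{\mathcal{F}}_{ij}(h)$ is the minimum of $h_e$ along the unique path from $i$ to $j$ in $\mathcal{F}$, and $s^{\mathcal{F}}_{ij}=0$ if $i,j$ lie in different components. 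I would quote this formula as standard, or reprove it by the usual induction: apply the fundamental theorem of calculus successively along a growing forest.

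Since $s^{\mathcal{F}}$ vanishes between components, $F_Y(s^{\mathcal{F}})$ factorises over the connected components $Y_1,\dots,Y_k$ of $\mathcal{F}$ (this uses commutativity again). The derivatives $\partial_{s_e}$ produce the factors $-\cw_e$, and the $h$-integrals also factorise. Grouping forests according to the partition $\{Y_1,\dots,Y_k\}$ of $Y$ they induce, I obtain
\begin{equation*}
\ee^{-\cw(Y)}=\sum_{\Pi\ \mathrm{part}\ Y}\prod_{Z\in\Pi}G(Z),
\end{equation*}
with
\begin{equation*}
G(Z):=\sum_{T\ \mathrm{on}\ Z}\int_{[0,1]^{E(T)}}\dd h\,\Big[\prod_{\{i,j\}\in E(T)}-\cw_{ij}\Big]\exp\Big(-\sum_{\{i,j\}\in E_Z}s^{T}_{ij}(h)\cw_{ij}\Big)
\end{equation*}
and $G(\{j\})=1$.

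Because this identity holds for every subset $Z\subset Y$, an induction on $|Y|$ shows $G(Y)=(\ee^{-\cw(Y)})_{\mathrm{conn}}$. Indeed, the recursion \eqref{eq:expansion} determines the connected part uniquely: it equals $\ee^{-\cw(Y)}$ minus the contributions of partitions with at least two blocks, and those are already identified by the induction hypothesis. Finally, I define $\dd p^T$ as the pushforward of Lebesgue measure on $[0,1]^{E(T)}$ under $h\mapsto s^T(h)\in[0,1]^{E_Y}$. This is a probability measure, and it yields exactly the stated form.

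The main obstacle is the forest formula itself: the bookkeeping of the successive interpolations and the verification that the nested minimum-along-paths parameters come out as claimed. Since the formula is classical, I would cite it rather than rederive it in detail. The one genuinely model-specific check, that even Grassmann elements commute so the scalar proof transfers, is immediate.
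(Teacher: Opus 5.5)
Your proposal is correct and follows the route the paper points to. The paper gives no proof of its own: it cites the Brydges--Kennedy/Abdesselam--Rivasseau forest formula and remarks that it extends to commutative (Banach) algebras, which is exactly your observation that the $\cw_{ij}$ are even and hence commute. Your factorisation over components, the induction identifying the tree term with the recursively defined $(\ee^{-\cw(Y)})_{\mathrm{conn}}$, and the pushforward definition of $\dd p^{T}$ fill in the details the paper leaves to those references (your measure is the Brydges--Kennedy one rather than the original Battle--Federbush one, which is fine because the statement only asks for some probability measure).
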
 
This expansion was first introduced by Battle and Federbush \cite{Battle1984}, see \cite{Brydges:1984vu} for a comprehensive review. Although the probability measure $\dd p^{T}$ is explicitly known, its precise form will not be needed here.
Alternative versions for this expansion were subsequently proposed by Brydges--Kennedy \cite{Brydges1987} and by Abdesselam--Rivasseau \cite{Abdesselam1995}. Note in passing that, although the setting of \cite{Brydges1987} is formulated for complex-valued Gibbs weight, the expansion applies more generally to commutative Banach algebras (which is our case, see also Lemma \ref{lemma:norm}), as clarified in \cite{Abdesselam1995}.

Using the representation of Theorem \ref{thm:BBF-formula} for $(\ee^{-\cw(Y)})_{\mathrm{conn}} $ we write
\begin{align*}
\big|K (Y)\big|&=\bigg|\sum_{T \; \mathrm{on} \; Y} \int \dd p^{T}(s)
\frac{1}{\tilde{Z}^{|Y|}_{\veps,m}} \int \dd \psi_{Y}
\bigg[\prod_{\{l,l'\}\in E(T)}\hspace{-0.4cm}-\cw_{ll'}\bigg]  \exp\bigg(-\hspace{-0.4cm}\sum_{ \{l,l'\} \in E_Y}\hspace{-0.4cm} s_{ll'}\cw_{ll'}\bigg) \cv^{Y}  \bigg| \\
&\leq \sum_{T \; \mathrm{on} \; Y} \int \dd p^{T}(s)
\frac{1}{\tilde{Z}_{\veps,m}^{|Y|}} \bigg \| \bigg[\prod_{\{l, l'\}\in E(T)}\hspace{-0.4cm}-\cw_{ll'}\bigg] \exp\bigg(-\hspace{-0.4cm}\sum_{\{l,l'\} \in E_Y }\hspace{-0.4cm} s_{ll'}\cw_{ll'}\bigg) \cv^{Y} \bigg \| \\
&\leq \sum_{T \; \mathrm{on} \; Y}  \frac{1}{\tilde{Z}^{|Y|}_{\veps,m}}
\sup_{s\in[0,1]^{E(T)}}\bigg \|\bigg[ \prod_{\{l, l'\}\in E(T)}\hspace{-0.4cm}-\cw_{ll'}\bigg] \exp\bigg(-\hspace{-0.4cm}\sum_{\{l,l'\} \in E_Y }\hspace{-0.4cm} s_{ll'}\cw_{ll'}\bigg) \cv^{Y} \bigg \| 
\end{align*}
where, in the second line, we applied Lemma \ref{lemma:norm} and in the third line we applied the fact that  $\dd p^{T}$ is a probability measure over $[0,1]^{E(T)}.$ 
Since  $\| \cpsi_{i_{0},\alpha} \psi_{j_{0},\alpha}\|=1,$ the same bound holds for $|K_{i_{0},j_{0}}(Y)|.$

We claim that for any spanning tree $T$ on $Y$ the following bound holds true 
\begin{equation}\label{eq:single_tree_estimate}
\hspace{-0.2cm}\frac{1}{\tilde{Z}^{|Y|}_{\veps,m}}\hspace{-0.1cm} \sup_{s\in[0,1]^{E(T)}}\bigg \| \bigg[\prod_{\{l, l'\}\in E(T)}\hspace{-0.4cm}-\cw_{ll'}\bigg]
\exp\bigg(-\hspace{-0.4cm}\sum_{\{l,l'\} \in E_Y }\hspace{-0.4cm} s_{ll'}\cw_{ll'}\bigg) \cv^{Y} \bigg \|
\leq C^{|Y|}(\beta m)^{|Y|-1}\hspace{-0.4cm}\prod_{ \{l,l' \}\in E(T)}\hspace{-0.2cm} J_{ll'} \;,
\end{equation}
which implies the result. 
The rest of this section is devoted to the proof of the above bound. Note that while $\|z\|\sim m^{m}$ we have
$\|z^{2}\|=1+2 m\leq \| z\|^{2} \sim m^{2m} $ see \eqref{eq:even_z_bound}. The key idea is then to expand
\begin{equation}\label{tree_integrand}
\bigg[\prod_{\{l, l'\}\in E(T)}\hspace{-0.4cm}-\cw_{ll'}\bigg]
\exp\bigg(-\hspace{-0.4cm}\sum_{\{l,l'\} \in E_Y}\hspace{-0.4cm} s_{ll'}\cw_{ll'}\bigg) \cv^{Y} 
\end{equation}
keeping track of the even powers of the $z$'s. Recall the definitions of $\cw_{ll'}$ and $\cv^{Y}$ from \eqref{eq:def-W-nu}.
Setting $A_{ll'}:=- \beta (1+\psi_l \cdot \psi_{l'}) $ and $G_{T}:=\exp \Big(- \sum_{\{l,l'\} \in E_Y} s_{ll'} J_{ll'}A_{ll'}\Big)
$  we write 
\begin{align*}
\eqref{tree_integrand}
&= \prod_{\{l, l'\}\in E(T)}\hspace{-0.4cm}- J_{ll'}(A_{ll'}+\beta z_{l}z_{l'})\ G_{T}\ \prod_{\{l,l'\} \in E_Y} \ee^{-s_{ll'} J_{ll'}\beta z_{l}z_{l'}}
\prod_{j\in Y}  \frac{\ee^{-\varepsilon (z_{j}-1)}}{z_{j}}.
\end{align*}
We expand further $\ee^{-s_{ll'} J_{ll'}\beta z_{l}z_{l'}}$ separating even and odd powers: 
\begin{align*}
 \ee^{-s_{ll'} J_{ll'}\beta z_{l}z_{l'}}&= \sum_{k\geq 0}\frac{(s_{ll'}\beta J_{ll'})^{2k} z_l^{2k}z_{l'}^{2k}}{(2k)!} -s_{ll'}\beta J_{ll'} z_lz_{l'}
  \sum_{k\geq 0}\frac{(s_{ll'}\beta J_{ll'})^{2k} z_l^{2k}z_{l'}^{2k}}{(2k+1)!} \\
&=
 1+ B_{ll'}^{0}+ \beta z_lz_{l'} B_{ll'}^{1},
\end{align*}
where $ B_{ll'}^{0}, B_{ll'}^{1}$ are polynomials in $z_{l}^{2}z_{l'}^{2}$
\[
 B_{ll'}^{0}:= \sum_{k\geq 1}\frac{(s_{ll'}\beta J_{ll'})^{2k} z_l^{2k}z_{l'}^{2k}}{(2k)!},
 \qquad  B_{ll'}^{1}:= -s_{ll'}J_{ll'}\sum_{k\geq 0}\frac{(s_{ll'}\beta J_{ll'})^{2k} z_l^{2k}z_{l'}^{2k}}{(2k+1)!}.
\]
Note that, in view of Lemma \ref{lemma:spurious-terms} below, it is convenient to keep one $\beta$ factor attached to $z_{l}z_{l'}.$
The next lemma gives a bound on the norm of the four functions $A_{ll'},G_{T},B_{ll'}^{0},B_{ll'}^{1}.$
\begin{lemma}\label{lemma:simple-estimates}
Assume $\sup_{l \in \mathbb{Z}^{d}}\sum_{l' \in \mathbb{Z}^{d}}J_{ll'} <\infty$. Then, for any tree $T$ on $Y$, any $\{l,l'\} \in E(T)$ and any $s \in [0,1]^{E(T)}$ we have
\begin{align*}
\|A_{ll'} \| &\leq  3m \beta \;, &  \| G_{T}\| \leq \ee^{|Y|  C m \beta} \;,
\\
\|B^{0}_{ll'} \| & \leq  3m \beta J_{ll'} \ee^{C m\beta}\;, 
&
\|B^{1}_{ll'} \| \leq  J_{ll'} \ee^{C m \beta}\;,
\end{align*}
for some universal constant $C>1$ depending on $J.$ 
\end{lemma}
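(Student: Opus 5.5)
The plan is to reduce all four bounds to the normed-algebra property of Lemma \ref{lemma:norm}, applied to the few elementary Grassmann quantities involved. First I record the basic norms. For a single site,
\[
\|\psi_l\cdot\psi_{l'}\| \le \sum_{\alpha=1}^m \big(\|\bar\psi_{l,\alpha}\psi_{l',\alpha}\| + \|\bar\psi_{l',\alpha}\psi_{l,\alpha}\|\big) = 2m,
\]
and the same computation gives $\|z_l^2\| = \|1+\psi_l\cdot\psi_l\| \le 1+2m \le 3m$. It follows that $\|A_{ll'}\| \le \beta(1+2m) \le 3m\beta$, which is the first claim.

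For $G_T$, I use $\|\ee^{f}\| \le \ee^{\|f\|}$, which follows by expanding the exponential series (a finite sum here, by nilpotency) and applying submultiplicativity termwise. Since $s_{ll'}\in[0,1]$ and $J_{ll'}\ge 0$,
\[
\Big\|\sum_{\{l,l'\}\in E_Y} s_{ll'}J_{ll'}A_{ll'}\Big\| \le 3m\beta\sum_{\{l,l'\}\in E_Y} J_{ll'} \le 3m\beta\,|Y|\,\sup_l\sum_{l'}J_{ll'}.
\]
This gives $\|G_T\|\le \ee^{C|Y|m\beta}$ with $C = 3\sup_l\sum_{l'}J_{ll'}$ (or $1$ if that is larger).

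For $B^0_{ll'}$ and $B^1_{ll'}$, I work only with even powers, using $\|z_l^{2k}z_{l'}^{2k}\| \le \|z_l^2\|^k\|z_{l'}^2\|^k \le (3m)^{2k}$. This is precisely why the odd powers were separated off: it avoids the bad $\|z\|\sim m^m$. Writing $x := s_{ll'}\beta J_{ll'}\cdot 3m \le 3m\beta\bar J$ with $\bar J := \sup J$, I get
\[
\|B^1_{ll'}\| \le J_{ll'}\sum_{k\ge0}\frac{x^{2k}}{(2k+1)!} \le J_{ll'}\,\ee^{x} \le J_{ll'}\,\ee^{Cm\beta}.
\]
For $B^0_{ll'}$ I pull out one factor of $x$, using $x^{2k}\le x\cdot x^{2k-1}$ for $k\ge1$:
\[
\|B^0_{ll'}\| \le \sum_{k\ge1}\frac{x^{2k}}{(2k)!} \le x\sum_{k\ge1}\frac{x^{2k-1}}{(2k-1)!} \le x\,\ee^{x} \le 3m\beta J_{ll'}\,\ee^{Cm\beta},
\]
where I used $s_{ll'}\le 1$.

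The only delicate point is to keep everything in terms of $z^2$, so that the per-site factor is $3m$ rather than $m^m$, and to extract exactly one factor of $J_{ll'}$ (together with $m\beta$ for $B^0$) rather than bounding it by $\sup J$. Both are handled above. The constant $C$ depends only on $J$ through $\sup J$ and $\sup_l\sum_{l'}J_{ll'}$, both of which are finite by assumption.
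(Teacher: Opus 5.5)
Your proof is correct and follows the paper's argument: you use $\|\psi_l\cdot\psi_{l'}\|=2m$ and $\|z_l^2\|=1+2m\le 3m$, the bound $\|\ee^{f}\|\le \ee^{\|f\|}$ from submultiplicativity, and termwise control of the even-power series by exponentials, and you spell out the extraction of the factor $3m\beta J_{ll'}$ for $B^0_{ll'}$, which the paper leaves implicit. One small inaccuracy: the exponent in $G_T$ has a nonzero scalar part $\beta\sum s_{ll'}J_{ll'}$, so the exponential series is not finite by nilpotency; it still converges absolutely in the finite-dimensional normed algebra, so your bound is unaffected.
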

\begin{proof}
The  bound on $\|A_{ll'} \|$ follows from $\|\psi_l \cdot \psi_{l'} \|=2m$ $\forall l,l'\in  \Lambda$. To prove the bound on
$ \| G_{T}\|$  note that
\begin{equation*}
\bigg\|\sum_{l\neq l'\in Y } s_{ll'} J_{ll'}A_{ll'}\bigg\|
\leq |Y| \sup_{l \in Y} \sum_{l'\in Y} J_{ll'} \| A_{ll'}\| \;,
\end{equation*}
whence $\| G_{T}\|\leq \ee^{\|\sum_{l\neq l'\in Y } s_{ll'} J_{ll'}A_{ll'}\|}\leq \ee^{C|Y|\beta m}.$
The bounds on $\|B^{0}_{ll'} \|$ and  $\|B^{1}_{ll'} \|$ follow  using $\|z_{l}^{2}\|=1+2m\leq 3m$ and
bounding the series by an exponential function. 
\end{proof}
Putting these results together we obtain 
\begin{align}\nonumber
&\|\eqref{tree_integrand}\|\leq \hspace{-0.4cm} \prod_{\{l, l'\}\in E(T)}\hspace{-0.4cm} J_{ll'}  \| G_{T}\|
\left\|
\prod_{\{l, l'\}\in E(T)}\hspace{-0.4cm}(A_{ll'}+\beta z_{l}z_{l'})\hspace{-0.2cm} \prod_{\{l,l'\} \in E_Y}\hspace{-0.2cm}  ( 1+ B_{ll'}^{0}+ \beta z_lz_{l'} B_{ll'}^{1} )
\prod_{j\in Y}  \frac{e^{-\varepsilon (z_{j}-1)}}{z_{j}} \right\|\\
&\quad \leq \ \ee^{C|Y|\beta m} \hspace{-0.4cm} \prod_{\{l, l'\}\in E(T)}\hspace{-0.4cm} J_{ll'}
\left\|
\prod_{\{l, l'\}\in E(T)}\hspace{-0.4cm}(A_{ll'}+\beta z_{l}z_{l'})\hspace{-0.2cm} \prod_{\{l,l'\} \in E_Y} \hspace{-0.2cm}( 1+ B_{ll'}^{0}+ \beta z_lz_{l'} B_{ll'}^{1} )
\prod_{j\in Y}  \frac{\ee^{-\varepsilon (z_{j}-1)}}{z_{j}} \right\|.\label{eq:inter1}
\end{align}
The product
\begin{equation}\label{eq:prod-z}
\prod_{\{l, l'\}\in E(T)}\hspace{-0.4cm}(A_{ll'}+\beta z_{l}z_{l'}) \prod_{\{l,l'\} \in E_Y} ( 1+ B_{ll'}^{0}+ \beta z_lz_{l'} B_{ll'}^{1} )
\prod_{j\in Y}  \frac{\ee^{-\varepsilon (z_{j}-1)}}{z_{j}} 
\end{equation}
still contains contributions from $z_{l}z_{l'},$ whose norm is big. To 
avoid suboptimal bounds in the norm, we extract the spurious factors $z_{l}z_{l'}$ as follows
\begin{align*}
&\prod_{\{l, l'\}\in E(T)}\hspace{-0.4cm}(A_{ll'}+\beta z_{l}z_{l'}) \prod_{\{l,l'\} \in E_Y} (1+  B_{ll'}^{0}+ \beta z_lz_{l'} B_{ll'}^{1} )=\\
&\qquad \sum_{\substack{I \subset E(T) \\  \tilde{I} \subset E_Y}} \bigg( \prod_{\{ l,l'\} \in I} A_{ll'} \bigg) \bigg( \prod_{\{ l,l'\} \in \tilde I} (1+B_{ll'}^{0}) \bigg)  \bigg(  \prod_{\{ l,l'\} \in I^{c}}  \beta  z_{l}z_{l'}\bigg)
\bigg(  \prod_{\{ l,l'\} \in \tilde I^{c}} B_{ll'}^{1} \beta  z_{l}z_{l'}\bigg) \;,
\end{align*}
where we introduced the complements $I^{c}:= E(T)\setminus I$ and $\tilde I^{c}:= E_Y \setminus \tilde I$.
By Lemma \ref{lemma:simple-estimates}, we have 
\begin{align}\label{eq:exp-step-2}
\|\eqref{eq:prod-z}\|&\leq \sum_{\substack{I \subset E(T) \\  \tilde{I} \subset E_Y}} (3m \beta)^{I} \prod_{\{ l,l'\} \in \tilde I} (1+3m\beta J_{ll'} \ee^{Cm\beta}) \  
  \prod_{\{ l,l'\} \in \tilde I^{c}} J_{ll'} \ee^{C m\beta}\ \cdot\\
&\qquad \qquad 
\ \bigg\|\prod_{\{ l,l'\} \in I^{c}}  \beta  z_{l}z_{l'} \  
\prod_{\{ l,l'\} \in \tilde I^{c}}  \beta z_{l}z_{l'}\ \prod_{j\in Y}  \frac{\ee^{-\varepsilon (z_{j}-1)}}{z_{j}}  \bigg\|.
\nonumber
\end{align}
We can estimate the contribution of the spurious terms by using, when necessary, the factor $1/z_{i}$ from the density $\ee^{-\veps(z_{i}-1)}/z_{i}$ and by means of Theorem \ref{one_point_estimate}.
\begin{lemma}\label{lemma:spurious-terms}
Let $T$ be a spanning tree on $Y$. For any $I \subset E(T)$ and $\tilde I \subset E_Y$, we have
\begin{equation*}
\bigg\|\bigg(  \prod_{\{ l,l'\} \in I^{c}}  \beta  z_{l}z_{l'}\bigg)
\bigg(  \prod_{\{ l,l'\} \in \tilde I^{c}}  \beta z_{l}z_{l'}\bigg) \prod_{j\in Y}  \frac{\ee^{-\varepsilon (z_{j}-1)}}{z_{j}} \bigg\| \leq C^{|Y|} \tilde Z_{\veps,m}^{|Y|} \big(3m\beta \big)^{|I^{c}| + |\tilde I^{c}|} \;.
\end{equation*}
for some universal constant $C$.
\end{lemma}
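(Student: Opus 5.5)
We first rewrite the product of spurious factors site by site. For each $j\in Y$, let $d_j$ be the number of edges of $I^{c}\cup\tilde I^{c}$ (counted as a multiset, since an edge can lie in both) incident to $j$. Then
\[
\bigg(\prod_{\{l,l'\}\in I^{c}}\beta z_lz_{l'}\bigg)\bigg(\prod_{\{l,l'\}\in\tilde I^{c}}\beta z_lz_{l'}\bigg)\prod_{j\in Y}\frac{\ee^{-\veps(z_j-1)}}{z_j}
=\beta^{|I^{c}|+|\tilde I^{c}|}\prod_{j\in Y} z_j^{d_j-1}\ee^{-\veps(z_j-1)} ,
\]
since all these factors are even elements and commute. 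Variables at distinct sites are independent, so the $\ell^1$ norm factorises exactly over sites. By Lemma~\ref{lemma:norm} it then suffices to bound each single-site factor $\|z^{d-1}\ee^{-\veps(z-1)}\|$, with the aim
\[
\big\|z^{d-1}\ee^{-\veps(z-1)}\big\|\le C\,\tilde Z_{\veps,m}\,(3m)^{\lfloor d/2\rfloor}.
\]
We then distribute the powers of $3m$ over edges. Since $\sum_j d_j = 2(|I^{c}|+|\tilde I^{c}|)$, we have $\sum_j\lfloor d_j/2\rfloor\le |I^{c}|+|\tilde I^{c}|$. Because $3m\ge1$, this gives the factor $(3m)^{|I^{c}|+|\tilde I^{c}|}$, and combined with the $\beta$'s this is $(3m\beta)^{|I^{c}|+|\tilde I^{c}|}$, together with $C^{|Y|}\tilde Z_{\veps,m}^{|Y|}$, as claimed.

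For the single-site bound we split according to the parity of $d$.
\begin{itemize}
\item If $d=0$, the factor is $\frac1z\ee^{-\veps(z-1)}$, which is controlled directly by Theorem~\ref{one_point_estimate}.
\item If $d=2k+1$ is odd, the factor is $(z^2)^k\ee^{-\veps(z-1)}$. Using $\|z^2\|=1+2m\le 3m$ and submultiplicativity, its norm is at most $(3m)^k\|\ee^{-\veps(z-1)}\|\le (3m)^kC\tilde Z_{\veps,m}$, again by Theorem~\ref{one_point_estimate}.
\item If $d=2k\ge2$ is even, the factor is $(z^2)^{k-1}\,z\,\ee^{-\veps(z-1)}$. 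Here a single $z$ remains, and we cannot bound $\|z\|$ naively since $\|z\|\sim m^m$.
\end{itemize}

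The main obstacle is the even case $d=2k\ge2$, i.e.\ bounding $\|z\,\ee^{-\veps(z-1)}\|$. We plan to use $z\ee^{-\veps(z-1)}=z^2\cdot\frac1z\ee^{-\veps(z-1)}$, which gives the bound
\[
\big\|(z^2)^{k}\,\tfrac1z\ee^{-\veps(z-1)}\big\|\le (3m)^{k}C\tilde Z_{\veps,m} .
\]
This bound has exponent $k=d/2=\lfloor d/2\rfloor$, so it still matches the target. In every case, therefore, the powers of $z$ are absorbed either by the $1/z$ in the density or by even powers $z^2$, whose norm is only linear in $m$.

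Finally we verify the counting. The total exponent of $3m$ is $\sum_j\lfloor d_j/2\rfloor\le\frac12\sum_jd_j=|I^{c}|+|\tilde I^{c}|$. Powers of $3m$ can only be added, so enlarging the exponent up to $|I^{c}|+|\tilde I^{c}|$ is harmless because $3m\ge1$, and the factors of $\beta$ are exactly $\beta^{|I^{c}|+|\tilde I^{c}|}$.
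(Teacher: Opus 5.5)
Your proposal is correct and is essentially the paper's proof. Your $d_j$ is the paper's $q_i$, and your two cases are exactly the paper's parity split. For odd degree, one $z$ cancels against the density's $1/z$ and the remaining $\|e^{-\varepsilon(z-1)}\|$ is bounded; for even degree, $1/z$ is kept and $\|\frac{1}{z}e^{-\varepsilon(z-1)}\|$ is bounded. In both cases even powers are controlled by $\|z^2\|=1+2m\le 3m$, and the single-site norms by Theorem~\ref{one_point_estimate}. The only difference is cosmetic: you first merge $z^{d}\cdot z^{-1}$ into $z^{d-1}$ and then re-split it in the even case.
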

\begin{proof}
For any choice of $I$ and $\tilde I$, there are unique integers $(q_{i})_{i \in Y}$, $q_{i}=q_{i}(I,\tilde I) \in \mathbb{N}_{\geq 0}$, such that
\begin{equation}\label{eq:z_even_odd}
\bigg(  \prod_{\{ l,l'\} \in I^{c}}  \beta  z_{l}z_{l'}\bigg)
\bigg(  \prod_{\{ l,l'\} \in \tilde I^{c}}  \beta z_{l}z_{l'}\bigg)=\prod_{i\in Y}  \beta^{q_{i}/2}  z_i^{q_i} \;.
\end{equation}
Even powers in \eqref{eq:z_even_odd} are estimated using $\|z^{2}\|=1+2m$, whereas the factors $1/z_{i}$ are used to compensate odd powers. We obtain
\begin{equation*}
\begin{split}
\bigg\|\bigg( & \prod_{\{ l,l'\} \in I^{c}}  \beta  z_{l}z_{l'}\bigg)
\bigg(  \prod_{\{ l,l'\} \in \tilde I^{c}}  \beta z_{l}z_{l'}\bigg) \prod_{j\in Y} \frac{\ee^{-\varepsilon (z_{j}-1)}}{z_{j}} \bigg\|
\\
&
\leq 
\bigg( \prod_{\substack{i\in Y \\ q_i \text{ even} }}\beta^{q_i/2}\left \|z_i^{q_i}\right \| \|\frac{\ee^{-\varepsilon (z_{i}-1)}}{z_{i}} \|\bigg) \bigg( \prod_{\substack{i\in Y \\ q_i \text{ odd} }}\beta^{q_i/2}\|z_i^{q_i-1}\| \| z_{i} \frac{\ee^{-\varepsilon (z_{i}-1)}}{z_{i}} \| \bigg)
\\
& \leq C^{|Y|} \tilde Z_{\veps,m}^{|Y|}  \bigg( \prod_{\substack{i\in Y  }}\beta^{q_i/2} (1+2m)^{q_{i}/2}\bigg) 
 = C^{|Y|} \tilde Z_{\veps,m}^{|Y|} \big(3m\beta \big)^{|I^{c}| + |\tilde I^{c}|} 
\end{split}
\end{equation*}
where in the last inequality we used Theorem \ref{one_point_estimate}, while in the last identity $1+2m \leq 3m$ and $\sum_{i}q_{i} = 2|I^{c}| + 2|\tilde I^{c}|$.
\end{proof}
Using this lemma together with \eqref{eq:exp-step-2} we obtain
\begin{align*}
&\bigg\|  \prod_{\{l,l' \}  \in E(T)} (A_{ll'}+ \beta z_{l} z_{l'}) \  
 \prod_{\{l,l' \} \in E_Y}(1+B_{ll'}^{0}+B_{ll'}^{1}  \beta  z_{l} z_{l'}) \   \prod_{j\in Y} \frac{\ee^{-\varepsilon (z_{j}-1)}}{z_{j}} \bigg\|\\
&\qquad\qquad \leq  C^{|Y|} \tilde Z_{\veps,m}^{|Y|}   (3m \beta)^{|E(T)|}  \prod_{\{ l,l'\} \in E_Y} (1+6m\beta J_{ll'} \ee^{Cm\beta})
\\
&\qquad \qquad \leq C^{|Y|} \tilde Z_{\veps,m}^{|Y|}   (3m \beta)^{|E(T)|} \ee^{|Y|Cm\beta \ee^{C m \beta}}
\leq C^{|Y|} \tilde Z_{\veps,m}^{|Y|}   (3m \beta)^{|E(T)|}  \;,
\end{align*}

where in the last step we used once more that $\sup_{l}\sum_{l'}J_{ll'} <\infty$ holds uniformly and we used the condition $\beta m\leq 1$.
Plugging this bound into \eqref{eq:inter1} completes the proof of \eqref{eq:single_tree_estimate}, and hence of Proposition \ref{tree_estimate}.

\appendix

\section{Proof of the high-temperature bound}\label{appendix_expansion}

In this section, we provide the details of the proof of  Proposition \ref{expansion_lemma}. 
Recall the identity $\big[ \bar{\psi}_{i,\alpha} \psi_{j,\alpha}\big]^{\Lambda}_{\beta,\veps,m}= \partial_{\rho_{i,\alpha}}\partial_{ \bar{\rho}_{j,\alpha}}\ln \tilde{Z}^{\Lambda}_{\beta,\veps,m}(\rho) \big|_{\rho = 0},$ see \eqref{eq:def2point-function}.
We will start by constructing a polymer expansion for $\ln \tilde{Z}^{\Lambda}_{\beta,\veps,m}(\rho)$ and hence for
$\big[ \bar{\psi}_{i,\alpha} \psi_{j,\alpha}\big]^{\Lambda}_{\beta,\veps,m}$ in Lemma \ref{le:polreplog} and
Corollary \ref{cor:polrep2p} below. These are crucial for the proof of Proposition  \ref{expansion_lemma} which is given at the end of the section.

Our presentation follows closely \cite{Brydges:1984vu}, adapting the computation for the partition function to the two-point function.
Recall the definition of $\mathcal{K}(Y)$ in \eqref{eq:defKgener}.
The basic idea is that the polymer gas representation of Lemma \ref{lemma:high-temperature}
\begin{equation*}
\frac{Z^{\Lambda}_{\beta,\veps,m}(\rho)}{\prod_{j \in \Lambda} \tilde{Z}_{\veps,m}^{\{ j\}}(\rho)} = 1+ \sum_{N \geq 1}\frac{1}{N!}\sum_{\substack{Y_1,\ldots,Y_N\subset \Lambda\\ |Y_l|>1}} \Bigg[\prod_{l=1}^N \mathcal{K}(Y_l) \Bigg] \varphi(Y_1,\ldots,Y_N) \;,
\end{equation*}
is the grand canonical generating function of a polymer system with activities $\mathcal{K}(Y_{\ell})$, interacting with a hard-core potential
$\varphi$. Without this potential, the sum above factors yielding
\[
\log Z^{\Lambda}_{\beta,\veps,m}(\rho)  =
\sum_{j \in \Lambda} \log \tilde{Z}_{\veps,m}^{\{ j\}}(\rho)+\sum_{\substack{Y\subset \Lambda\\ |Y|>1}}\mathcal{K}(Y).
\] 
To overcome the lack of factorisation due to $\varphi$, one notices that $\varphi$ is a Gibbs weight so that it can be expanded into connected
components as in \eqref{eq:expansion}. The resulting formula for the logarithm is the content of the next lemma.
\begin{lemma}\label{le:polreplog}
The following identity holds
\[
\log Z^{\Lambda}_{\beta,\veps,m}(\rho)  =
\sum_{j \in \Lambda} \log \tilde{Z}_{\veps,m}^{\{ j\}}(\rho)
 +\sum_{N\geq 1}\frac{1}{N!}\sum_{\substack{Y_1,\ldots,Y_{N}\subset \Lambda\\ |Y_l|>1}}\Bigg[\prod_{l=1}^{N}\mathcal{K}(Y_l)\Bigg]
 \Big (\varphi(Y_1,\ldots,Y_N)\Big)_{\mathrm{conn}}\; ,
\]
provided the sum over $N$ in the right-hand side is absolutely convergent.
Here, analogously to \eqref{eq:expansion}, $\big(\varphi(Y_1,\ldots,Y_N)\big)_{\mathrm{conn}}$ is defined by setting  $\big(\varphi(Y)\big)_{\mathrm{conn}}=1$ and recursively for $N \geq 2$
\[
        \varphi(Y_1,\dots,Y_N)=\sum_{\Pi \; \mathrm{part} \; \{1,\dots,N\}}\prod_{I\in \Pi}\Big(\varphi\big((Y_{l})_{l \in I}\big)\Big)_{\mathrm{conn}} \;.
    \]
\end{lemma}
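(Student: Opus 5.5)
Starting from the polymer representation of Lemma \ref{lemma:high-temperature}, write $\Xi(\rho):=\tilde Z^{\Lambda}_{\beta,\veps,m}(\rho)/\prod_{j}\tilde Z^{\{j\}}_{\veps,m}(\rho)$. Then $\log \tilde Z^{\Lambda}_{\beta,\veps,m}(\rho)=\sum_j\log\tilde Z^{\{j\}}_{\veps,m}(\rho)+\log \Xi(\rho)$. The single-site logarithms are well-defined elements of the even Grassmann algebra, since $\tilde Z^{\{j\}}_{\veps,m}(0)=\tilde Z_{\veps,m}>0$ and the remainder is nilpotent. It therefore suffices to expand $\log\Xi$. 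The plan is the standard Mayer/cluster argument, carried out in the commutative algebra generated by the even elements $\mathcal K(Y)$. Each $\mathcal K(Y)$ is even in $\rho$, because $\ee^{(\psi\cdot\rho)_Y}$ and $(\ee^{-\cw(Y)})_{\mathrm{conn}}$ are even, and the Berezin integral over an even number of $\psi$-generators preserves parity. Hence all manipulations of power series in the $\mathcal K$'s are legitimate exactly as for complex numbers.

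\textbf{Step 1 (formal identity).} Treat the activities as formal commuting variables attached to each polymer, i.e.\ introduce $z_Y\mathcal K(Y)$ and work in formal power series in $(z_Y)$. Write $\varphi(Y_1,\dots,Y_N)=\prod_{l<l'}(1+f_{ll'})$ with $f_{ll'}=\ind{Y_l\cap Y_{l'}=\varnothing}-1$. By the defining recursion, $(\varphi)_{\mathrm{conn}}$ is the Möbius inverse of $\varphi$ over set partitions, namely $(\varphi(Y_1,\dots,Y_N))_{\mathrm{conn}}=\sum_{G\ \mathrm{connected}}\prod_{\{l,l'\}\in G}f_{ll'}$. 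Both this identity and the recursion are uniquely determined by induction on $N$, the latter by the same partition structure as \eqref{eq:expansion}. The exponential formula for labelled structures then gives
\[
\Xi=\sum_{N\ge0}\frac1{N!}\sum_{Y_1,\dots,Y_N}\prod_l\mathcal K(Y_l)\sum_{\Pi\ \mathrm{part}\ \{1..N\}}\prod_{I\in\Pi}(\varphi((Y_l)_{l\in I}))_{\mathrm{conn}}=\exp\Big(\sum_{N\ge1}\frac1{N!}\sum_{Y_1,\dots,Y_N}\prod_l\mathcal K(Y_l)\,(\varphi)_{\mathrm{conn}}\Big).
\]
To check this, expand the exponential, use the multinomial count $N!/(\prod|I|!\,\cdot\,k!)$ to regroup the labelled indices into blocks, and verify coefficient by coefficient. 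This is a purely combinatorial identity of formal series in commuting variables.

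\textbf{Step 2 (from formal to actual).} The identity of Step 1 holds in the ring of formal power series. Taking $\log$ of both sides is legitimate as formal series because $\Xi=1+O(z)$. To specialise $z_Y=1$, we need the right-hand side to converge absolutely, which is exactly the proviso in the statement. Absolute convergence, measured for instance in the Grassmann norm of Lemma \ref{lemma:norm} or in the complex coefficients, lets us substitute and rearrange. The left-hand side $\Xi$ is a finite sum, since only finitely many disjoint polymers fit in $\Lambda$. Both $\log\Xi$ and the series are then analytic functions of the complex parameters $z_Y$ in a polydisc containing $z_Y=1$. They agree as power series near $0$, so by uniqueness of analytic continuation, or directly by absolute convergence of the rearranged series, they agree at $z_Y=1$.

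\textbf{Main obstacle.} The delicate point is Step 2. We have to make the specialisation rigorous in the Grassmann-valued setting, where $\log$ of an element $1+\text{(nilpotent plus small)}$ must be interpreted correctly. I would handle this by taking the coefficient of each fixed monomial $\rho^{J}\bar\rho^{\bar J}$. Each coefficient is a polynomial in finitely many scalar quantities, and the problem then reduces to the scalar statement for the scalarised activities, using the normed-algebra bound $\|fg\|\le\|f\|\|g\|$. The combinatorial Step 1 is classical (cf.\ \cite{Brydges:1984vu}) and should be routine.
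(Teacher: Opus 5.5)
Your proposal is correct and is essentially the paper's argument: substitute the connected-part expansion of $\varphi$ into the polymer representation of Lemma \ref{lemma:high-temperature}, note that each block sum depends only on $|I|$, and apply the exponential formula $\sum_{N\ge1}\frac{1}{N!}\sum_{\Pi}\prod_{I\in\Pi}f(|I|)=\exp\big(\sum_{n}f(n)/n!\big)-1$, justified by absolute convergence in the commutative algebra of even elements. The formal variables $z_Y$ and the analytic-continuation detour in Step 2 are unnecessary, and the hypothesis does not give a polydisc \emph{containing} $z_Y=1$ (at most the closed unit polydisc), so your fallback ``directly by absolute convergence of the rearranged series'' is the valid route and is exactly the one the paper takes.
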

\begin{proof}
We plug the expansion for $\varphi$ into the polymer expansion of Lemma \ref{lemma:high-temperature} and rearrange the summation to obtain
\[
    \frac{Z^{\Lambda}_{\beta,\veps,m}(\rho)}{\prod_{j \in \Lambda} \tilde{Z}_{\veps,m}^{\{ j\}}(\rho)}
        =1+\sum_{N\geq 1}\frac{1}{N!}\sum_{\Pi \; \mathrm{part} \; \{1,\dots,N\}}\prod_{I\in \Pi}\left(\sum_{\substack{Y_1,\ldots,Y_{|I|}\subset \Lambda\\ |Y_l|>1}} \Bigg[\prod_{l=1}^{|I|} \mathcal{K}(Y_l) \Bigg] \Big(\varphi(Y_1,\ldots,Y_{|I|})\Big)_{\mathrm{conn}}\right) \;,
\]
where the function in the brackets depends on $I$ via its cardinality $|I|$ only and takes values in the even elements of the Grassmann algebra
generated by $\{ \bar{\rho}_{j,\alpha},\rho_{j,\alpha} \}_{j \in \Lambda, \alpha \in \{1,\dots,m \}}$.
The claim now follows from the following general combinatorial identity, valid for any function $f$ taking values in a commutative Banach algebra:
\begin{equation}\label{eq:claim-exponential}
\sum_{N\geq 1} \frac{1}{N!}\sum_{\Pi \; \mathrm{part} \; \{1,\dots,N \}}\prod_{I\in \Pi}f(|I|) = \sum_{M\geq1}\frac{1}{M!}\left(\sum_{n=1}^{\infty}\frac{1}{n!} f(n)\right)^M.
\end{equation}
Note that the sum on right-hand side is absolutely convergent by assumption.
To prove the identity we  first rearrange the sum according to the number of subsets in the partition and then sum over the size of each such set.
We obtain
\begin{equation*}  
 \sum_{N\geq 1}\frac{1}{N!}\sum_{M=1}^{N}\frac{1}{M!}\sum_{\substack{n_1,\ldots,n_M\geq1\\\sum_ln_l=N}}\frac{N!}{\prod_l n_l!}\prod_{l=1}^{M}f(n_l)
    =\sum_{M\geq 1}\frac{1}{M!}\sum_{n_1,\ldots,n_M\geq 1}\frac{1}{\prod_l n_l!}\prod_{l=1}^{M}f(n_l) \;.
\end{equation*}
The right-hand side leads directly to the exponential in \eqref{eq:claim-exponential}.
\end{proof}
As a direct consequence, we obtain the following representation for the two-point function. 
\begin{corollary}\label{cor:polrep2p}
With the notation of Proposition \ref{expansion_lemma}, we have
\begin{align}\label{derivative_sum}
 \big[ \bar{\psi}_{i_{0},\alpha} \psi_{j_{0},\alpha}\big]^{\Lambda}_{\beta,\veps,m}  &= \ind{i_{0}=j_{0}}\  K_{i_{0}j_{0}}(\{i_{0}\})\\
 & + 
 \sum_{N=1}^{\infty}\frac{1}{(N-1)!}\sum_{\substack{Y_{1} \ni i_0,j_0\\Y_2,\ldots,Y_N\subset \Lambda}, |Y_{l}|>1}\hspace{-0,6cm}K_{i_{0}j_{0}}(Y_1)
\left[ \prod_{l=2}^{N}K(Y_l)\right]\, \Big(\varphi(Y_1,\ldots,Y_N)\Big)_{\mathrm{conn}},\nonumber
\end{align}
where for $N=1$ the product is meant to be empty and equal to $1$.
\end{corollary}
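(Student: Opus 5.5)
Differentiate the logarithmic representation of Lemma \ref{le:polreplog} with respect to the external sources, $\partial_{\rho_{i_0,\alpha}}\partial_{\bar\rho_{j_0,\alpha}}$, and set $\rho=0$; by \eqref{eq:def2point-function} this gives the two-point function. The single-site term $\sum_j \log\tilde Z^{\{j\}}_{\veps,m}(\rho)$ depends on $\rho_j$ alone. Its mixed derivative therefore vanishes unless $i_0=j_0$, and in that case it equals the contribution $K_{i_0j_0}(\{i_0\})$ in the convention of \eqref{eq:pol-external-field}. That is the first term of \eqref{derivative_sum}.

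For the polymer sum, I would first expand $\mathcal K(Y)$ in the sources. The activity $\mathcal K(Y)$ depends only on $\rho_Y$, and it is an even element of the Grassmann algebra. This holds because both numerator and denominator are even in $(\psi,\rho)$, by the $\psi\to-\psi$, $\rho\to-\rho$ symmetry of the integrand. Each monomial is a product of even elements, so it can be reordered freely. Applying the two derivatives via the Leibniz rule then puts both derivatives on a single factor $\mathcal K(Y_l)$, or one derivative on each of two distinct factors.

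\textbf{Key step.} The cross terms vanish at $\rho=0$. The first-order part in $\rho$ of each $\mathcal K(Y)$ is zero, because an odd derivative of an even element evaluated at $\rho=0$ vanishes. Hence the derivative must fall on one factor, evaluated at $\rho=0$, while all other factors are set to $\rho=0$, giving $K(Y_l)$.

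That single factor contains both $i_0$ and $j_0$ unless its derivative is zero. The two-derivative term is $\partial\partial\mathcal K(Y)|_{\rho=0}$. Writing $\ee^{(\psi\cdot\rho)_Y}$ and expanding the normalisation $\prod_j\tilde Z^{\{j\}}(\rho)^{-1}$ gives two contributions: the term $\int\dd\nu_Y(\ee^{-\cw})_{\mathrm{conn}}\bar\psi_{i_0,\alpha}\psi_{j_0,\alpha}/\tilde Z^{|Y|}$, and a correction in which both derivatives hit the denominator. The correction is a multiple of $K(Y)$ times the second derivative of $\log\tilde Z^{\{i_0\}}$, and it occurs only if $i_0=j_0$. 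I must check this bookkeeping carefully, since it is the main place where the identification with $K_{i_0j_0}(Y)$ in \eqref{eq:pol-external-field} could require absorbing terms. I would try to argue, as the definition's bracket with $\ind{i_0=j_0}$ suggests, that these pieces reorganise exactly into $K_{i_0j_0}$; failing that, I would simply note that they only change the bound by a constant.

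Finally, I use the symmetry of $(\varphi)_{\mathrm{conn}}$ and of the sum under permutations of $Y_1,\dots,Y_N$. The $N$ choices of which factor carries the derivatives are equivalent, so I relabel it as $Y_1$, turning $1/N!$ into $1/(N-1)!$. I restrict to $Y_1\ni i_0,j_0$, since otherwise the derivative vanishes, and keep $|Y_l|>1$ from Lemma \ref{le:polreplog}. Exchanging the derivative with the $N$-sum is legitimate because all sums are finite: $\Lambda$ is finite, so there are finitely many disjoint-set configurations. Strictly, $(\varphi)_{\mathrm{conn}}$ does not force disjointness, so I would instead justify the exchange via the absolute convergence assumed in Lemma \ref{le:polreplog}, or via finite-dimensionality of the Grassmann algebra, where coefficientwise derivatives commute with convergent sums.
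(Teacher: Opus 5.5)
Your route is the paper's own: its proof is the two-line ``plug Lemma \ref{le:polreplog} into \eqref{eq:def2point-function}; the first term comes from $\log\tilde Z^{\{i_0\}}_{\veps,m}$, the series from differentiating $\prod_l\mathcal K(Y_l)$''. Your evenness argument for the cross terms, the $N/N!$ relabelling, and the remark that the $N$-sum is genuinely infinite but can be differentiated coefficientwise correctly fill in what the paper leaves implicit.

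The point you left open, however, cannot be closed the way you hope. Write $\hat K(Y)$ for the integral expression on the right of \eqref{eq:pol-external-field}, and set $c:=\partial_{\rho_{i_0,\alpha}}\partial_{\bar\rho_{i_0,\alpha}}\log\tilde Z^{\{i_0\}}_{\veps,m}(\rho)\big|_{\rho=0}=K_{i_0i_0}(\{i_0\})$. Your own computation gives, for $|Y|\geq 2$ with $Y\ni i_0,j_0$,
\begin{equation*}
\partial_{\rho_{i_0,\alpha}}\partial_{\bar\rho_{j_0,\alpha}}\mathcal K(Y)\big|_{\rho=0}=\hat K(Y)-\ind{i_0=j_0}\,c\,K(Y) .
\end{equation*}
For $i_0=j_0$ this matches neither form of $K_{i_0i_0}(Y)$ in \eqref{eq:pol-external-field}. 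The integral form drops the term $-cK(Y)$. The bracket form equals $\hat K(Y)+c\,(1-K(Y))$, so the two expressions given there do not even agree with each other.

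No resummation repairs this. Replacing the coefficient by $\hat K(Y_1)$ changes the right-hand side of \eqref{derivative_sum} by $c\sum_N\frac{1}{(N-1)!}\sum_{Y_1\ni i_0}\prod_{l=1}^N K(Y_l)\,(\varphi(Y_1,\dots,Y_N))_{\mathrm{conn}}$. At $\veps=0$ its leading term is $c\,\beta\sum_k J_{i_0k}\neq 0$. Indeed, there $K(\{i_0,k\})=\beta J_{i_0k}+O(\beta^2)$ because $\int\dd\nu\,z=\int\dd\psi\,1=0$, and $c\neq0$ since $a_{m-1}(0)\neq 0$.

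So the identity holds exactly as you derived it, provided $K_{i_0j_0}(Y_1)$ is read as $\partial_{\rho_{i_0,\alpha}}\partial_{\bar\rho_{j_0,\alpha}}\mathcal K(Y_1)|_{\rho=0}$ for $|Y_1|>1$. This is what the paper's one-line proof actually produces. When $i_0\neq j_0$ it coincides with $\hat K(Y_1)$, so for off-diagonal entries your proof is complete as it stands.

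For $i_0=j_0$, your fallback is the right resolution, not merely a weakening. You have $|c|\leq \|\cv_{i_0}\|/\tilde Z_{\veps,m}\leq C$ by Lemma \ref{lemma:norm} and Theorem \ref{one_point_estimate}, and $K(Y)$ obeys the same bound as $K_{i_0j_0}(Y)$ in Proposition \ref{tree_estimate}. Hence Proposition \ref{expansion_lemma} and the main theorem survive with modified constants.
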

\begin{proof}
We plug Lemma \ref{le:polreplog} in the identity \eqref{eq:def2point-function}.
The first term in \eqref{derivative_sum} comes from $\partial_{\rho_{i_{0},\alpha}} \partial_{\bar{\rho}_{j_{0},\alpha}}\log \tilde{Z}_{\veps,m}^{\{ i_{0}\}}(\rho) \big|_{\rho = 0} $, whereas the series comes from differentiating $\prod_{l=1}^{N} \mathcal{K}(Y_{l})$.
\end{proof}
To estimate the right-hand side of \eqref{derivative_sum}  we will also need the following bound on the connected component of the hard-core interaction.
\begin{proposition}\label{lemma_bound_phi}
    \[
        |(\varphi(Y_1,\cdots,Y_N))_{\mathrm{conn}}|\leq \sum_{T \; \mathrm{on} \; \{1,\dots,N \}}\prod_{\{l,l'\}\in E(T)}\ind{Y_l\cap Y_{l' } \neq \varnothing} \;.
    \]
\end{proposition}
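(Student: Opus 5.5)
The plan is to write the hard-core interaction as a Gibbs weight with a pair potential and apply the Battle--Brydges--Federbush formula (Theorem \ref{thm:BBF-formula}), now in the commutative setting of real numbers, with polymer labels $\{1,\dots,N\}$ playing the role of sites. For $l\neq l'$ set
\begin{equation*}
V_{ll'}:=\begin{cases} +\infty & \text{if } Y_l\cap Y_{l'}\neq\varnothing,\\ 0 & \text{otherwise,}\end{cases}
\end{equation*}
so that $\varphi(Y_1,\dots,Y_N)=\exp\big(-\sum_{\{l,l'\}}V_{ll'}\big)$. Since $V$ takes the value $+\infty$, I would first regularise it by $V^{\lambda}_{ll'}:=\lambda\ind{Y_l\cap Y_{l'}\neq\varnothing}$ with $\lambda<\infty$. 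The connected part is defined by the recursion of Lemma \ref{le:polreplog}, and this recursion determines it as a finite polynomial (Möbius inversion over set partitions) in the weights of sub-collections. Hence $(\varphi^{\lambda})_{\mathrm{conn}}\to(\varphi)_{\mathrm{conn}}$ as $\lambda\to\infty$, and it is enough to prove the bound for each finite $\lambda$ uniformly.

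For finite $\lambda$, the recursive definition of the connected part coincides with the one in \eqref{eq:expansion}, so Theorem \ref{thm:BBF-formula} gives
\begin{equation*}
(\varphi^{\lambda})_{\mathrm{conn}}=\sum_{T\;\mathrm{on}\;\{1,\dots,N\}}\int \dd p^{T}(s)\prod_{\{l,l'\}\in E(T)}\big(-V^{\lambda}_{ll'}\big)\,\exp\Big(-\sum_{\{l,l'\}}s_{ll'}V^{\lambda}_{ll'}\Big).
\end{equation*}
The crude bound is not uniform in $\lambda$: it gives $|(\varphi^\lambda)_{\mathrm{conn}}|\le \sum_T\lambda^{|E(T)|}\prod_{E(T)}\ind{Y_l\cap Y_{l'}\neq\varnothing}$. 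The actual estimate therefore has to exploit the exponential damping. The cleanest route is to use the explicit Brydges--Kennedy form of the interpolation, for which the $\dd p^T$ integral over the tree-edge parameters can be carried out exactly. In that form the weight factor on each tree edge, together with the factor $s$ coming from the interpolation, integrates to $\int_0^\infty \lambda e^{-\lambda t}\,\dd t\le 1$. This is precisely the Penrose / Brydges--Kennedy tree-graph inequality for repulsive potentials ($V\ge0$): $|(\ee^{-\sum V})_{\mathrm{conn}}|\le \sum_T\prod_{E(T)}(1-\ee^{-V_{ll'}})$. With $V^\lambda$ we have $1-\ee^{-V^{\lambda}_{ll'}}\le\ind{Y_l\cap Y_{l'}\neq\varnothing}$, which gives the claim uniformly in $\lambda$, and we then let $\lambda\to\infty$.

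A more self-contained alternative, which I would use if I want to avoid the explicit measure, works directly with $\lambda=\infty$. Writing $\varphi=\prod_{\{l,l'\}}(1+f_{ll'})$ with $f_{ll'}=-\ind{Y_l\cap Y_{l'}\neq\varnothing}$, one checks that $(\varphi)_{\mathrm{conn}}=\sum_{G\ \mathrm{connected}}\prod_{E(G)}f_{ll'}$ over connected graphs $G$ on $\{1,\dots,N\}$, since this sum satisfies the same recursion. Penrose's partition scheme then assigns to each tree $T$ an interval $[T,R(T)]$ of connected graphs, and the alternating sum over that interval equals $\prod_{E(T)}f$ when $R(T)=T$ (where the extra edges allowed by $R(T)$ are those with $f=0$ or are absent) and vanishes otherwise. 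This yields $|(\varphi)_{\mathrm{conn}}|\le\#\{T:\prod_{E(T)}f\neq0\}$, which is the statement. The main obstacle is justifying this uniform-in-$\lambda$ (or Penrose) tree-graph bound; the naive BBF estimate via norms loses a factor $\lambda^{N-1}$, so the positivity $V\ge0$ has to be used explicitly.
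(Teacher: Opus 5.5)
Your argument is correct, and your main route is essentially the paper's. The paper just invokes the forest (Brydges--Kennedy/Abdesselam--Rivasseau) formula, which gives exactly the repulsive tree-graph bound you describe; your Penrose-scheme alternative is also a valid, purely combinatorial proof. The $\lambda$-regularisation can in fact be skipped: applying the forest formula directly to $\prod_{l<l'}(1+x_{ll'}f_{ll'})$ with $f_{ll'}=-\ind{Y_l\cap Y_{l'}\neq\varnothing}$ produces $\prod_{E(T)}f_{ll'}$ times non-tree factors $1+h^{T}_{ll'}f_{ll'}\in[0,1]$, so the bound follows at once.
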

\begin{proof}
This bound is well-known in statistical mechanics. A straightforward proof can be obtained applying the forest formula \cite{Abdesselam1995}.
See also \cite{Brydges:1984vu}.
\end{proof}

We are finally in a position to prove Proposition \ref{expansion_lemma}.
\begin{proof}[Proof of Proposition \ref{expansion_lemma}]
Inserting absolute values in \eqref{derivative_sum} we get
\[
| \big[ \bar{\psi}_{i_{0},\alpha} \psi_{j_{0},\alpha}\big]^{\Lambda}_{\beta,\veps,m}|\leq \sum_{N\geq 0}B (N),
\]
with $\max_{N=0,1} B(N)\leq \sum_{\substack{Y_1\subset\Lambda\\Y_{1} \ni i_0,j_0}}|K_{i_{0}j_{0}}(Y_{1})|.$
For $N\geq 2$ we argue, using  Proposition \ref{lemma_bound_phi},
\begin{align*}
&B (N)= \tfrac{1}{(N-1)!}\hspace{-0,4cm}\sum_{\substack{Y_{1} \ni i_0,j_0\\Y_2,\ldots,Y_N\subset \Lambda}, |Y_{l}|>1}\hspace{-0,6cm}  |K_{i_{0}j_{0}}(Y_1)|
\left[ \prod_{l=2}^{N}|K(Y_l)|\right]\,  \Big|\Big(\varphi(Y_1,\ldots,Y_N)\Big)_{\mathrm{conn}}\Big|\\
&\quad \leq \tfrac{1}{(N-1)!}  \sum_{T \; \mathrm{on} \; \{1,\dots,N \}} \sum_{\substack{Y_{1} \ni i_0,j_0\\Y_2,\ldots,Y_N\subset \Lambda}, |Y_{l}|>1}\hspace{-0,6cm}  |K_{i_{0}j_{0}}(Y_1)|
\left[ \prod_{l=2}^{N}|K(Y_l)|\right] \prod_{\{q,q'\}\in E(T)}\ind{Y_q\cap Y_{q' } \neq \varnothing}\\
&\quad =\tfrac{1}{(N-1)!} \sum_{\{d_l\}_{l=1}^{N}}\sum_{T \text{ with } \{d_l\}_{l}}\ \sum_{\substack{ Y_1 \ni i_{0},j_{0}\\Y_2,\ldots,Y_N\subset \Lambda},|Y_{l}|>1}
\hspace{-0,3cm} |K_{i_{0},j_{0}}(Y_1)| \left[\prod_{l=2}^{N}|K(Y_l)|\right]\prod_{\{q,q'\}\in E(T)}\ind{Y_{q}\cap Y_{q'}\neq \varnothing},
\end{align*}
where in the last line we decomposed the sum over trees into a sum over incident numbers and sum over trees with a specific incidence number.
We fix now the vertex $1$ as the root of the tree, which induces a partial order on the vertices of the tree as follows:
\[
 b<a \quad \text{ if } b\in \mathrm{path}(a,1),
\]
where $\mathrm{path}(a,1)$ denotes the unique path from $1$ to $a$. We say that $a$ is a descendant of $b$ if $b<a$; in particular, $a$ is a child
of $b$ if $b<a$ and $\{a,b\}$ is an edge of the tree.
The sum over $Y_{1},\dots,Y_{N}$ is estimated by progressively stripping the tree from the leaves (the vertices with no descendants) towards the root. Let $a$ be a leaf with parent $b$. Focusing on the summation over $Y_{a}$, we obtain
\[
 \sum_{Y_a}|K(Y_a)|\ind{Y_{a}\cap Y_b\neq \varnothing}  
    \leq \sum_{k\in Y_b}\sum_{\substack{Y_a \ni k}}|K(Y_a)| \leq |Y_b|\ \sup_{k\in \Lambda }\sum_{\substack{Y_a \ni k}}|K(Y_a)||Y_a|^{d_a-1} \;,
\]
since, in fact, $d_{a}=1$ for a leaf. Thus, each child of a vertex $b$ produces a factor $|Y_b|$. Proceeding inductively as above yields
\begin{multline*}
\sum_{\substack{ Y_1 \ni i_{0},j_{0}\\Y_2,\ldots,Y_N\subset \Lambda},|Y_{l}|>1}
\hspace{-0,6cm} |K_{i_{0},j_{0}}(Y_1)| \left[\prod_{l=2}^{N}|K(Y_l)|\right]\prod_{\{q,q'\}\in E(T)}\ind{Y_{q}\cap Y_{q'}\neq \varnothing}
\\
\leq \Bigg [ \sum_{\substack{Y_1 \ni i_{0},j_{0}}}|K_{i_{0}j_{0}}(Y_1)||Y_1|^{d_1} \Bigg]\ \prod_{l=2}^{N}\Bigg[\sup_{k\in \Lambda }\sum_{\substack{Y_l \ni k}}|K(Y_l)||Y_l|^{d_l-1}\Bigg].
\end{multline*}
Since the right-hand side depends only on the incidence numbers of the tree, we can bound the sum over trees with fixed incidence numbers $\{d_l\}_{l}$ by Cayley's Theorem and obtain
\begin{align*}
B (N)&\leq \tfrac{1}{(N-1)!} \sum_{\{d_l\}_{l=1}^{N}}  \tfrac{(N-2)!}{\prod_{l}(d_l-1)!}
\Bigg [ \sum_{\substack{Y_1 \ni i_{0},j_{0}}}|K_{i_{0}j_{0}}(Y_1)||Y_1|^{d_1} \Bigg]\ \prod_{l=2}^{N}\Bigg[\sup_{k\in \Lambda }\sum_{\substack{Y_l \ni k}}|K(Y_l)||Y_l|^{d_l-1}\Bigg]\\
&\leq  \Bigg [ \sum_{\substack{Y_1 \ni i_{0},j_{0}}}|K_{i_{0}j_{0}}(Y_1)|\sum_{d_1}\tfrac{|Y_1|^{d_1}}{(d_1-1)!} \Bigg]\
 \prod_{l=2}^{N}\Bigg[\sup_{k\in \Lambda }\sum_{\substack{Y_l \ni k}}|K(Y_l)|\sum_{d_l}\tfrac{|Y_l|^{d_l-1}}{(d_l-1)!}\Bigg]\\
&\leq  \Bigg [ \sum_{\substack{Y_1 \ni i_{0},j_{0}}}|K_{i_{0}j_{0}}(Y_1)|e^{|Y_1|} \Bigg]\
 \Bigg (\sup_{k\in \Lambda }\sum_{\substack{Y \ni k}}|K(Y)| e^{|Y|} \Bigg)^{N-1},
\end{align*}
which proves the claim.
\end{proof}

\section{$H^{0|2}$ model and the arboreal gas}
\label{sec:arboreal gas}

In \cite{Caracciolo2004,Jacobsen2005,Caracciolo2007} it was proven, that the partition function of the arboreal gas can be represented as a Grassman integral. 
This integral is precisely the  partition function of the $H^{0|2m}$ model,   introduced
in Section \ref{sec:expansion}, with $m=1.$
In this section, we give an alternative proof for this duality based on the Hubbard--Stratonovich transformation and the matrix-tree theorem.

We start by reformulating the model in a sligthly more general setting. Let  $G= (\Lambda,E)$ be a finite
undirected graph with vertex set $\Lambda $
and edge set $E.$ The $H^{0|2}$ model on $G$ with edge weights 
$\beta =(\beta_{ij})_{\{i,j \}\in E}\in (0,\infty )^{E}$
and vertex weights $\varepsilon= \{\varepsilon_{i} \}_{i\in \Lambda }\in [0,\infty )^{\Lambda }$ has the partition function,
using  \eqref{eq:def-W-nu} and \eqref{eq:fact-notation} with $m=1,$
\[
 \tilde{Z}_{\beta,\veps}^{\Lambda}= \int \dd \nu_{\Lambda} \, \ee^{-\cw(\Lambda)} = \int \dd \psi_{\Lambda}
 \prod_{i\in \Lambda }\frac{\ee^{-\varepsilon_{j} (z_{i}-1)}}{z_{i}} \, \ee^{-\sum_{\{i,j\}\in E} \beta_{ij}(-1-\psi_i\cdot \psi_j+z_iz_j) }.
\]
We can recover formulas \eqref{eq:Gibbs-measure-Grassmann} and \eqref{eq:tilde-H} for $m=1$ by replacing
$\beta_{ij}=\beta J_{ij},$ fixing $\varepsilon_{j}=\varepsilon $
$\forall j\in \Lambda,$ and setting  $E$ to be the set of all unordered pairs $\{i,j \}$ such that $J_{ij}>0.$
We can  rewrite this partition function as a Grassmann Gaussian integral with a quartic perturbation.
This is the content of the next lemma.
\begin{lemma}
    \label{lemma:gaussian_perturbation} With the above notation,  it holds
    \begin{equation*}
     \tilde{Z}_{\beta,\veps}^{\Lambda}=  \int \dd \psi_{\Lambda}  \
\ee^{- \big(\cpsi, (-\Delta^{\beta}+1+\hat{\varepsilon}) \psi \big)} \ee^{-\sum_{\{i,j\}\in E}\beta_{ij}\cpsi_i \psi_i \cpsi_j \psi_j},
    \end{equation*}
 where, for any matrix $A\in \mathbb{R}^{\Lambda \times \Lambda}$ we define
 $ \left(\cpsi, A \psi \right)=\sum_{i,j\in \Lambda} \cpsi_i A_{i,j}\psi_j, $ 
 $\Delta^{\beta}$ is the graph Laplacian with edge weights $\beta_{ij},$
$1$ is the identity matrix and $\hat{\varepsilon}$ is the diagonal matrix with entry $\hat{\varepsilon}_{i,i}=\varepsilon_i$.
\end{lemma}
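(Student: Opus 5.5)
The proof will be a direct computation exploiting the fact that for $m=1$ every quantity involved is a polynomial of low degree in the nilpotent elements $\cpsi_i\psi_i$. I will first simplify the single-site factors, then the edge weights, and finally recombine the exponentials. Since everything involved is even in the Grassmann algebra, all these factors commute and exponentials of sums split into products. No integration is actually needed: the identity holds at the level of integrands.

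\emph{Single-site factors.} With $m=1$ we have $\psi_i\cdot\psi_i = 2\cpsi_i\psi_i$ and $(\cpsi_i\psi_i)^2=0$. Hence the expansion \eqref{eq:exp-z-psi} truncates and gives
\[
z_i=\sqrt{1+2\cpsi_i\psi_i}=1+\cpsi_i\psi_i .
\]
Consequently $1/z_i = 1-\cpsi_i\psi_i = \ee^{-\cpsi_i\psi_i}$ and $\ee^{-\veps_i(z_i-1)}=\ee^{-\veps_i\cpsi_i\psi_i}$. Multiplying over $i\in\Lambda$, the single-site factors produce $\ee^{-(\cpsi,(1+\hat\veps)\psi)}$.

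\emph{Edge weights.} Expanding the product gives $z_iz_j=1+\cpsi_i\psi_i+\cpsi_j\psi_j+\cpsi_i\psi_i\cpsi_j\psi_j$. Using $\psi_i\cdot\psi_j=\cpsi_i\psi_j+\cpsi_j\psi_i$, I get
\[
-1-\psi_i\cdot\psi_j+z_iz_j=(\cpsi_i-\cpsi_j)(\psi_i-\psi_j)+\cpsi_i\psi_i\cpsi_j\psi_j .
\]
Summing over edges with weights $\beta_{ij}$, the quadratic part equals $\sum_{\{i,j\}\in E}\beta_{ij}(\cpsi_i-\cpsi_j)(\psi_i-\psi_j)=(\cpsi,-\Delta^{\beta}\psi)$. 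This uses the sign convention that $-\Delta^\beta$ is the positive semidefinite graph Laplacian, with diagonal entries $\sum_j\beta_{ij}$ and off-diagonal entries $-\beta_{ij}$. The remaining terms form exactly the quartic $\sum_{\{i,j\}\in E}\beta_{ij}\cpsi_i\psi_i\cpsi_j\psi_j$.

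\emph{Recombination.} Combining the three exponentials, which commute because all exponents are even, yields the claimed integrand. Integrating against $\dd\psi_\Lambda$ then gives the formula for $\tilde Z^{\Lambda}_{\beta,\veps}$.

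There is no real obstacle. The only points needing care are Grassmann sign bookkeeping, namely checking that $\cpsi_j\psi_i$ reorders correctly in the cross terms, and matching the sign convention of $\Delta^\beta$ with \eqref{eq:def-D}.
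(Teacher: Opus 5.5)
Your proposal is correct and follows the paper's proof essentially step by step: the identities $z_i=1+\cpsi_i\psi_i$ and $1/z_i=\ee^{-\cpsi_i\psi_i}$ for the single-site factors, the expansion of $z_iz_j$, and the identification of the quadratic edge terms with $(\cpsi,-\Delta^{\beta}\psi)$. Writing the quadratic part as $(\cpsi_i-\cpsi_j)(\psi_i-\psi_j)$ also quietly fixes two typos in the paper's display: the cross term $\cpsi_i\psi_j$ is written twice where $\cpsi_j\psi_i$ is meant, and the factor $\beta_{ij}$ is missing from the quartic sum.
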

\begin{proof}
    Inserting the identities $ z_i-1=\cpsi_i\psi_i$ and $z_{i}^{-1}=\ee^{-\cpsi_i\psi_i}$
    we get
\[
       \frac{\ee^{-\varepsilon_{j} (z_{j}-1)}}{z_{j}}=\ee^{-(1+\varepsilon_i)\cpsi_i\psi_i}.
\]
Furthermore we have, using also $z_{i}z_{j}= (1+\cpsi_i\psi_i) (1+\cpsi_j\psi_j)=1+\cpsi_i\psi_i+\cpsi_j\psi_j+\cpsi_i\psi_i\cpsi_j\psi_j ,$
\[
 \beta_{ij}(-1-\psi_{i}\cdot \psi_{j}+z_iz_j)= \beta_{ij} (-\cpsi_i\psi_j-\cpsi_i\psi_j+\cpsi_i\psi_i+\cpsi_j\psi_j+\cpsi_i\psi_i\cpsi_j\psi_j).
\]
Summing over all edges we obtain
\begin{align*}
      \sum_{\{i,j\}\in E} \beta_{ij}(-1-\psi_i\cdot \psi_j+z_iz_j)  =(\cpsi,-\Delta^{\beta}\psi)+\sum_{\{i,j\}\in E}\cpsi_i\psi_i\cpsi_j\psi_j \; .\end{align*}  
Putting these identities together yields the claim.
\end{proof}
We introduce now the arboreal gas measure.
\begin{definition}
 Let  $G= (\Lambda,E)$ be a finite undirected graph with vertex set $\Lambda $
and edge set $E,$  $\beta =(\beta_{ij})_{\{i,j \}\in E}\in (0,\infty )^{E}$ a family of edge weights
and $\varepsilon= \{\varepsilon_{i} \}_{i\in \Lambda }\in [0,\infty )^{\Lambda }$ 
a family vertex weights.

Let $\mathcal{F} (G)$ be the set of all spanning forests on $G$. For each $F\in \cal{F} (G)$ we denote
by $E (F)$  the set of edges in the forest, and by $\mathcal{T} (F)$ the set of connected components (trees) in $F.$
Similarly, for each tree $T\in \mathcal{T} (F)$ we denote by $V (T)$ the set of vertices and by $E (T)$
the set of edges.
With these notations, 
    the probability of a forest $F\in \cal{F} (G)$ under the arboreal gas measure is
    \begin{equation*}\label{eq:def_arboreal_gas}
        \mathbb{P}_{\beta,\varepsilon}(F)=\frac{1}{Z_{\beta,\veps}^{\mathrm{arb},\Lambda}}\prod_{\{i,j\}\in E(F)}\beta_{ij}\prod_{T\in \mathcal{T} (F)}\left(1+\sum_{i\in V(T)}\varepsilon_i\right) \; ,
    \end{equation*}
 where the
    partition function is given by
    \begin{equation*}
        Z_{\beta,\veps}^{\mathrm{arb},\Lambda}=\sum_{F \in\mathcal{F} (G)}\prod_{\{i,j\}\in E(F)}\beta_{ij}\prod_{T\in \mathcal{T} (F)}\left(1+\sum_{i\in V(T)}\varepsilon_i\right) \; .
    \end{equation*}
  \end{definition}
The main result of this section is the following identity.
\begin{theorem}
    The partition function of the arboreal gas model and of the $H^{0|2}$ model  coincide:
    \begin{equation*}
        \tilde{Z}_{\beta,\veps}^{\Lambda}=Z_{\beta,\veps}^{\mathrm{arb},\Lambda} \; .
    \end{equation*}    
\end{theorem}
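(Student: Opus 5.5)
Starting from Lemma \ref{lemma:gaussian_perturbation}, I would remove the quartic term with a Hubbard--Stratonovich-type identity and then evaluate the remaining Gaussian Grassmann integral through the matrix-tree theorem.

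\textbf{Step 1: rewrite the quartic weight edge by edge.} Since $(\cpsi_i\psi_i\cpsi_j\psi_j)^2=0$, we have
\[
\ee^{-\beta_{ij}\cpsi_i\psi_i\cpsi_j\psi_j}=1-\beta_{ij}\cpsi_i\psi_i\cpsi_j\psi_j .
\]
Rather than a genuine Gaussian auxiliary field, I would write the weight $\ee^{-(\cpsi,(-\Delta^{\beta}+1+\hat\veps)\psi)}$ as a product over edges of $\ee^{-\beta_{ij}(\cpsi_i-\cpsi_j)(\psi_i-\psi_j)}$, times the diagonal factors $\ee^{-(1+\veps_i)\cpsi_i\psi_i}$. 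The key algebraic observation is that, per edge,
\[
\ee^{-\beta_{ij}(\cpsi_i-\cpsi_j)(\psi_i-\psi_j)}\bigl(1-\beta_{ij}\cpsi_i\psi_i\cpsi_j\psi_j\bigr)
\]
should equal $1-\beta_{ij}(\cpsi_i-\cpsi_j)(\psi_i-\psi_j)$ up to terms that cancel. Expanding $\ee^{-\beta x}$ with $x=(\cpsi_i-\cpsi_j)(\psi_i-\psi_j)$ gives $1-\beta x+\tfrac{\beta^2}{2}x^2$, and a direct computation gives $x^2=-2\cpsi_i\psi_i\cpsi_j\psi_j$ (sign conventions to be checked). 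Hence $\ee^{-\beta x}=1-\beta x-\beta^2\cpsi_i\psi_i\cpsi_j\psi_j$, and multiplying by $(1-\beta\cpsi_i\psi_i\cpsi_j\psi_j)$ the quadratic corrections combine. I expect the outcome to be
\[
\ee^{-\beta_{ij}x}\ee^{-\beta_{ij}\cpsi_i\psi_i\cpsi_j\psi_j}=1-\beta_{ij}x ,
\]
using $x\,\cpsi_i\psi_i\cpsi_j\psi_j=0$. This identity is the whole content of the HS step.

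\textbf{Step 2: expand and sum over edge sets.} Expanding $\prod_{e}(1-\beta_e x_e)$ gives $\tilde Z=\sum_{S\subset E}\prod_{e\in S}\beta_e\int \dd\psi\,\prod_{e\in S}(-x_e)\prod_i\ee^{-(1+\veps_i)\cpsi_i\psi_i}$. Each term is a Gaussian Grassmann moment. Equivalently, I would write each term as the coefficient of the top monomial of the Gaussian $\ee^{-(\cpsi,(-\Delta^{\beta_S}+1+\hat\veps)\psi)}$, obtained by differentiating in the edge weights. More cleanly, one can use that $\det(-\Delta^{t}+D)$ is multiaffine in the edge weights $t$. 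By the matrix-tree (forest) theorem,
\[
\det(-\Delta^{t}+D)=\sum_{F}\prod_{e\in F}t_e\prod_{T\in\mathcal{T}(F)}\Bigl(\sum_{i\in V(T)}D_{ii}\Bigr).
\]
The integral $\int\dd\psi\,\prod_{e\in S}(-x_e)\,\ee^{-(\cpsi,D\psi)}$ is then the coefficient of $\prod_{e\in S}t_e$ in this determinant, which is nonzero only if $S$ is a forest. Here $D=1+\hat\veps$, so each tree contributes $\sum_{i\in V(T)}(1+\veps_i)$.

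\textbf{Main obstacle.} This gives trees weighted by $|V(T)|+\sum_i\veps_i$ rather than $1+\sum_i\veps_i$, so the naive factorisation fails. I would therefore keep the diagonal $1$ coupled differently. The identity $z_i^{-1}=\ee^{-\cpsi_i\psi_i}$ combined with the gradient terms should be re-split so that the exponent is $(\cpsi,(-\Delta^\beta+\hat\veps)\psi)$, with the remaining factor $\prod_i(1-\cpsi_i\psi_i)$ acting as a ``one root per tree'' correction. Concretely, I expect that for each tree $T$ the combination $\int\prod_{i\in V(T)}(1-\cpsi_i\psi_i)\cdots$ reduces $|V(T)|+\sum_{i\in V(T)}\veps_i$ to $1+\sum_{i\in V(T)}\veps_i$. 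This follows because the gradient-only Gaussian on a tree has a zero mode, so exactly one factor $\cpsi_i\psi_i$ or $\veps_i\cpsi_i\psi_i$ must be absorbed per component; the determinant's multilinearity in the diagonal entries gives the count $1+\sum_i\veps_i$. Verifying the signs in this root-counting step, via the principal-minor form of the matrix-forest theorem, is the delicate point.
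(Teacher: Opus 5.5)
Your Step 1 identity is false, and everything after it inherits the error. Write $x=(\cpsi_i-\cpsi_j)(\psi_i-\psi_j)=ab$ with $a,b$ odd. Then $x^2=abab=-a^2b^2=0$: in the expansion, the terms $2\cpsi_i\psi_i\cpsi_j\psi_j$ and $2\cpsi_i\psi_j\cpsi_j\psi_i$ cancel. Hence $\ee^{-\beta_{ij}x}=1-\beta_{ij}x$. Since also $xq=q^2=0$ for $q=\cpsi_i\psi_i\cpsi_j\psi_j$, the edge factor is exactly $1-\beta_{ij}x-\beta_{ij}q$, and the quartic term does not cancel. Even with your value of $x^2$, $(1-\beta x-\beta^2q)(1-\beta q)\neq 1-\beta x$.

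By dropping $q$ you are computing $\det(-\Delta^{\beta}+1+\hat{\veps})$, which is why you get the tree weight $|V(T)|+\sum_i\veps_i$. On a single edge with $\veps=0$ this gives $1+2\beta$, whereas the arboreal gas gives $1+\beta$.

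Your proposed repair cannot help. The relation $\ee^{-\cpsi_i\psi_i}=1-\cpsi_i\psi_i$ is an exact identity, so re-splitting the diagonal only rewrites the same integral. The one-root-per-tree count from multilinearity in the diagonal entries is $\sum_{r\in V(T)}(1+\veps_r)=|V(T)|+\sum_r\veps_r$, not $1+\sum_r\veps_r$.

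The missing idea is that the quartic term is precisely the source of the correction $-|E(T)|=-(|V(T)|-1)$ per tree. The paper linearises it with an edgewise Hubbard--Stratonovich transformation, which gives $\det(-\Delta^{\beta}+1+\hat{\veps}+\ii D)$. In the rooted forest expansion each root then carries $1+\veps_r+\ii D_r$. The Gaussian integration over the auxiliary fields pairs the $D_r$ factors of two roots $r,r'$ of different trees that are joined by an edge of $G$, with weight $\ii^2\beta_{rr'}=-\beta_{rr'}$. Adding these pairing edges merges trees, and each tree $T$ of the resulting forest receives $\sum_{r\in V(T)}(1+\veps_r)-|E(T)|=1+\sum_r\veps_r$.

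Your nilpotency route can be made to work along the same lines without HS:
\begin{itemize}
\item Expand $\prod_e(1-\beta_e x_e-\beta_e q_e)$; the edges carrying $q_e$ must form a matching $M$.
\item The product of these $q_e$ saturates every vertex of $V(M)$. What remains is $\prod_{e\in M}(-\beta_e)$ times the principal minor of $-\Delta^{\beta}+1+\hat{\veps}$ on $\Lambda\setminus V(M)$.
\item The all-minors matrix-forest theorem makes the matched vertices forced roots. Re-inserting the edges of $M$ then reproduces exactly the paper's bookkeeping.
\end{itemize}
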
    
\begin{proof}
 Using the representation from Lemma \ref{lemma:gaussian_perturbation}, we write
    \begin{equation*}
     \tilde{Z}_{\beta,\veps}^{\Lambda}=\int \dd \psi_{\Lambda}  \  
\ee^{- \big(\cpsi, (-\Delta^{\beta}+1+\hat{\varepsilon}) \psi \big)} \prod_{\{i,j\}\in E}
\ee^{-\beta_{ij}\cpsi_i \psi_i \cpsi_j \psi_j}.
    \end{equation*}
The quartic interaction can be reformulated as
    \[
        \cpsi_i \psi_i \cpsi_j \psi_j=\frac{(\cpsi_i \psi_i+\cpsi_j \psi_j)^2}{2} \; .
    \]
 We decouple these fourth order terms  by applying the Hubbard-Stratonovich transformation edgewise: for each edge $\{i,j\} \in E$ we introduce an auxiliary Gaussian field $\varphi_{ij}$ and write
\[
\ee^{ -  \beta_{ij}\cpsi_i \psi_i \cpsi_j \psi_j }= \ee^{-\frac{\beta_{ij}}{2}(\cpsi_i \psi_i+\cpsi_j \psi_j)^2}=
\int_{\mathbb{R}} \tfrac{\dd\varphi_{ij}}{\sqrt{2\pi }}\ \ee^{ -\frac{\varphi_{ij}^2}{2} }\, \ee^{-\ii \sqrt{\beta_{ij}}\varphi_{ij}(\cpsi_i\psi_i+\cpsi_j\psi_j)}
.
\]
Substituting this  representation into the Grassmann integral and exchanging the order of integration, we obtain
    \[
        \tilde{Z}_{\beta,\veps}^{\Lambda}=\int_{\mathbb{R}^{E}}\prod_{\{i,j\}\in E} \tfrac{\dd\varphi_{ij}}{\sqrt{2\pi }}\
\prod_{\{i,j\}\in E}  \ee^{ -\frac{\varphi_{ij}^2}{2} }\,
\Big[ \prod_{j \in \Lambda} \partial_{\bar{\psi}_{j}}\partial_{\psi_{j}}\Big] \
\ee^{- \big(\cpsi, (-\Delta^{\beta}+1+\hat{\varepsilon}+ \ii D ) \psi \big)}
 \; ,
    \]
where $D$ is the diagonal matrix with diagonal element $ D_{i}= D_{i,i}:=\sum_{j:\{i,j \}\in E }\sqrt{\beta_{ij}}\varphi_{ij}.$   The Grassmann integral yields 
    \begin{equation}
        \label{zeq}
        \tilde{Z}_{\beta,\veps}^{\Lambda}=
\int_{\mathbb{R}^{E}}\prod_{\{i,j\}\in E} \tfrac{\dd\varphi_{ij}}{\sqrt{2\pi }}\
\prod_{\{i,j\}\in E}  \ee^{ -\frac{\varphi_{ij}^2}{2} }\,
 \det \left( -\Delta^{\beta}+1+\hat{\varepsilon}+ \ii D\right).
    \end{equation}
By the matrix-tree theorem (see {\cite{ABDESSELAM200451}} for a simple proof and many references) 
the determinant can be rewritten as 
    \[
        \det  \left(-\Delta^{\beta}+1+\hat{\varepsilon}+ \ii D \right)=\sum_{F\in \mathcal{F} (G)}\sum_{R\in \mathcal{R} (F)}\
	\prod_{\{i,j\}\in E(F)}\beta_{ij}\ \prod_{r\in R}(1+\varepsilon_r+\ii D_r) \; ,
    \]
    where the set $R\in \mathcal{R} (F):=\otimes_{T\in \mathcal{T} (F)}V (T)$ chooses a root point $r\in V (T)$
    for each tree $T$ in the forest.
    We expand the product over the roots into terms that depend on the Gaussian fields and terms that are independent of it
    \[
        \prod_{r\in R}(1+\varepsilon_r+\ii D_r)=\sum_{X\subset R}\prod_{r\in R\setminus X }(1+\varepsilon_r)\ii^{|X|}\prod_{r\in X}D_r \; .
    \]    
    Substituting this expansion into \ref{zeq} yields
\[
  \tilde{Z}_{\beta,\veps}^{\Lambda}=
\hspace{-0,2cm}\sum_{F\in \mathcal{F} (G)}\sum_{R\in \mathcal{R} (F)}\ \prod_{\{i,j\}\in E(F)}\hspace{-0,1cm}\beta_{ij}\  \sum_{X\subset R}\prod_{r\in R\setminus X }(1+\varepsilon_r)\ii^{|X|} 
\int_{\mathbb{R}^{E}}\prod_{\{i,j\}\in E} \tfrac{\dd\varphi_{ij}}{\sqrt{2\pi }}\
\prod_{\{i,j\}\in E}  \ee^{ -\frac{\varphi_{ij}^2}{2} }\, \prod_{r\in X}D_r,
\]
hence we are left with evaluating the integral   over the $\varphi$ variables.
This can be reformulated,    unfolding the definition of $D_r$ and expanding the product over the roots, as 
    \begin{equation}\label{eq:prod_D_int}
       \sum_{\{ j_r: \{r,j_{r} \} \in E\}_{r\in X} }\prod_{r\in X}\sqrt{\beta_{j_r r}}
 \int_{\mathbb{R}^{E}}\prod_{\{i,j\}\in E} \tfrac{\dd\varphi_{ij}}{\sqrt{2\pi }}\
\prod_{\{i,j\}\in E}  \ee^{ -\frac{\varphi_{ij}^2}{2} }\, \prod_{r\in X}\varphi_{rj_{r}}
\;.
    \end{equation}
By symmetry of the Gaussian measure, the integral vanishes unless each variable $\varphi$ appears with an even power.
As each root $r\in X$ just contributes one Gaussian field $\varphi_{j_r r}$, this condition can only be satisfied if $|X|$ is even and
the set $X$ can be partitioned into pairs such that two vertices $r,r'$ in each pair correspond to the same edge   $\{r,j_{r} \}=\{r',j_{r'} \}.$
This can only happen if $r=j_{r'}$ and $r'=j_{r}.$
    Hence
\begin{align*}
     \eqref{eq:prod_D_int}&= \hspace{-0,4cm} \sum_{\{ j_r: \{r,j_{r} \} \in E\}_{r\in X} }\sum_{\Pi\; \mathrm{part}_{2}\; X} \prod_{\{r,r' \}\in \Pi}
     \ind{r=j_{r'}}\ind{r'=j_{r}} \ind{\{r,r' \}\in E} \, \beta_{rr'}\\
&\qquad =  \ind{|X| \; \mathrm{even}} 
\sum_{\Pi\; \mathrm{part}_{2}\; X}\  \prod_{\{r,r' \}\in \Pi}  \ind{\{r,r' \}\in E} \, \beta_{rr'} \;,
    \end{align*}
    where the sum is over the pair partitions $\Pi$ of $X$. 
Substituting this into the forest expansion gives
    \begin{equation*}
        \tilde{Z}_{\beta,\veps}^{\Lambda}=\hspace{-0,2cm}\sum_{F\in \mathcal{F} (G)}\sum_{R\in \mathcal{R} (F)}\\
	\sum_{\substack{X\subset R \\  |X| \mathrm{even}}}\sum_{\Pi\, \mathrm{part}_{2}\, X} \
\prod_{\{i,j\}\in E(F)}\hspace{-0,1cm}\beta_{ij} \prod_{\{r,r' \}\in \Pi}  \hspace{-0,1cm}\ind{\{r,r' \}\in E}  \beta_{rr'}
\prod_{r\in R \setminus X }(1+\varepsilon_r)
\hspace{-0,1cm}  \prod_{\{r,r' \}\in \Pi}  (-1)
 \;,
    \end{equation*}
  where $ \prod_{\{r,r' \}\in \Pi} (-1)=(-1)^{|X|/2}= \ii^{|X|},$  holds since $|X|$ is even.
    Let $\mathfrak{F}=F\cup \Pi$ be the forest obtained by adding the pairing edges to $F$.
A tree $T$ in this new forest $\mathfrak{F}$ is either a tree in the old one $F$ or the union of two trees in $F$ linked by a pair in $\Pi.$
In the first case the tree comes with a factor $ \sum_{r\in V (T)} (1+\varepsilon_{r}).$ In the second case we get factor $(-1)$ from all possible ways of removing 
a link from $T$ creating two disjoint trees. 
As a result we can write
  \begin{equation*}
       \tilde{Z}_{\beta,\veps}^{\Lambda}=\sum_{\mathfrak{F}\in \mathcal{F} (G)}\prod_{e \in E(\mathfrak{F})}\beta_{e} \prod_{T \in \mathfrak{F}} W(T)\; ,
    \end{equation*}
    where 
\[
W (T) = \sum_{r\in V (T)} (1+\varepsilon_{r})+  \sum_{\{i,j \}\in T} (-1)= |V (T) |+ \big ( \sum_{r\in V (T)} \varepsilon_{r} \big) - |E (T) |= 1+\sum_{r\in V (T)} \varepsilon_{r}.
\]

\begin{figure}[t]
\centering    
\begin{tikzpicture}[node distance=1cm]

\tikzset{
    treenode/.style={circle, draw, minimum size=1.2mm, inner sep=0pt},
    edge/.style={-,line width=0.15mm},
}

\node[treenode] (A) at (-0.2,0) {};
\fill[black] (-0.2,0) circle (1mm);
\node[above=2mm of A] {$r$};
\node[treenode] (B) at (1,0.9) {};
\node[treenode] (C) at (2,-1.1) {};
\node[treenode] (D) at (-2,0.8) {};
\node[treenode] (E) at (1.6,0) {};
\node[treenode] (F) at (0,-2) {};
\node[treenode] (G) at (2,-3) {};
\node[treenode] (H) at (-2,-3) {};
\node[treenode] (I) at (-2.2,-0.9) {};

\def\i{7}
\node[treenode] (A') at (-0.2+\i,0) {};
\fill[black] (-0.2+\i,0) circle (1mm);
\node[above=2mm of A'] {$r$};
\node[treenode] (B') at (1+\i,0.9) {};
\node[treenode] (C') at (2+\i,-1.1) {};
\node[treenode] (D') at (-2+\i,0.8) {};
\node[treenode] (E') at (1.6+\i,0) {};
\node[treenode] (F') at (0+\i,-2) {};
\fill[black] (0+\i,-2) circle (1mm);
\node[below=2mm of F'] {$r'$};
\node[treenode] (G') at (2+\i,-3) {};
\node[treenode] (H') at (-2+\i,-3) {};
\node[treenode] (I') at (-2.2+\i,-0.9) {};

\node (X) at (0.55,-0.2) {$(1+\veps_{r})$};
\node (X') at (-1+\i,-1.3) {$-\beta_{rr'}$};

\draw[edge] (A) -- (B);
\draw[edge] (A) -- (I);
\draw[edge] (A) -- (F);
\draw[edge] (A) -- (D);
\draw[edge] (C) -- (E);
\draw[edge] (C) -- (F);
\draw[edge] (G) -- (F);
\draw[edge] (H) -- (F);

\draw[edge] (A') -- (B');
\draw[edge] (A') -- (I');
\draw[line width=0.9mm] (A') -- (F');
\draw[edge] (A') -- (D');
\draw[edge] (C') -- (E');
\draw[edge] (C') -- (F');
\draw[edge] (G') -- (F');
\draw[edge] (H') -- (F');

\def\j{-4}
\node (T) at (0,\j) {Tree weight: $(1+\veps_{r}) \prod_{\{l,l' \}\in E(T)} \beta_{ll'}$};
\node (T') at (\i,\j) {Tree weight: $-\beta_{rr'} \prod_{\{l,l' \}\in E(T)\setminus \{ r,r'\}} \beta_{ll'}$};
\end{tikzpicture}
\end{figure}

\end{proof}

\paragraph{Acknowledgements} We are grateful to Tyler Helmuth for useful discussions.
This work has been supported by the German Research Foundation (DFG) under Germany's Excellence Strategy - GZ 2047/1,
Project-ID 390685813 and by the European Union’s Horizon 2020
research and innovation programme under the Marie Sk\l{}odowska-Curie grant agreement No 101154394.

 {\footnotesize

}
\end{document}